\newtheorem{defi}{Definition}
\newtheorem{prop}{Proposition}
\newtheorem{satz}{Theorem}
\newtheorem*{bem}{Remark}
\newtheorem{fol}{Corollary}
\newtheoremstyle{mytheoremstyle2} 
    {\topsep}                    
    {\topsep}                    
    {}                           
    {}                           
    {\bfseries}                   
    {:}                           
    {.5em}                       
    {}  
\newtheorem*{bspi}{Example}
\newcommand{\classP}{{\sf P}}
\newcommand{\classNP}{{\sf NP}}
\newcommand{\classAPX}{{\sf APX}}
\newcommand{\Reach}{{\sc Path-To-Stability}}
\newcommand{\ThreeSAT}{{\sc 3Sat}}
\newcommand{\SRT}{{\sc SRT}}
\newcommand{\LocalSRT}{{\sc Local-SRT}}
\newcommand{\SMTI}{{\sc SMTI}}
\begin{document}

\tikzstyle{existEdge} = [draw,line width=3pt,-]
\tikzstyle{link} = [draw,dashed,-]
\tikzstyle{edge} = [draw,-]

\title{Locally Stable Marriage with Strict Preferences\thanks{An extended abstract of this paper has been published in the proceedings of the 40th Intl.\ Colloquium on Automata, Languages, and Programming (ICALP 2013)~\cite{HoeferW13}.}}
\author{Martin Hoefer\thanks{Max-Planck-Institut f\"ur Informatik and Saarland University, Germany. {\tt mhoefer@mpi-inf.mpg.de}. Supported by DFG Cluster of Excellence MMCI.} \and Lisa Wagner\thanks{Department of Computer Science, RWTH Aachen University, Germany. {\tt lwagner@cs.rwth-aachen.de}. Supported by DFG grant Ho 3831/3-1}}
\date{}

\maketitle

\begin{abstract}
We study stable matching problems with locality of information and control. In our model, each agent is a node in a fixed network and strives to be matched to another agent. An agent has a complete preference list over all other agents it can be matched with. Agents can match arbitrarily, and they learn about possible partners dynamically based on their current neighborhood. We consider convergence of dynamics to locally stable matchings -- states that are stable with respect to their imposed information structure in the network. In the two-sided case of stable marriage in which existence is guaranteed, we show that the existence of a path to stability becomes \classNP-hard to decide. This holds even when the network exists only among one partition of agents. In contrast, if one partition has no network and agents remember a previous match every round, a path to stability is guaranteed and random dynamics converge with probability 1. We characterize this positive result in various ways. For instance, it holds for random memory and for cache memory with the most recent partner, but not for cache memory with the best partner. Also, it is crucial which partition of the agents has memory. Finally, we present results for centralized computation of locally stable matchings, i.e., computing maximum locally stable matchings in the two-sided case and deciding existence in the roommates case.
\end{abstract}

\thispagestyle{empty}
\clearpage


\section{Introduction}
\label{sec:intro}
Matching problems form the basis for a variety of assignment and allocation tasks encountered in computer science, operations research, and economics. A prominent and popular approach in all these areas is \emph{stable matching}, as it captures aspects like distributed control and rationality of participants that arise in many assignment problems today. A variety of allocation problems in markets can be analyzed within the context of two-sided stable matching, e.g., the assignment of jobs to workers~\cite{KelsoC82,ArcauteV09}, organs to patients~\cite{RothSU05}, or general buyers to sellers. In addition, stable marriage problems have been successfully used to study distributed resource allocation problems in networks~\cite{AkkayaGB10,Mathieu08,GoemansLMT06}.

In this paper, we consider a game-theoretic model for decentralized matching with limited information. A set of rational agents are embedded in a (social) network and strive to find a partner for a joint relationship or activity, e.g., to do sports, write a research paper, exchange data etc. Such problems are of central interest in economics and sociology, and they act as fundamental coordination tasks in distributed computer networks. Our model extends the stable marriage problem, in which we have sets $A$ and $B$ of men and women. Each man (woman) can match to at most one woman (man) and has a complete preference list over all women (men). Each agent would rather be matched than unmatched. Given a matching $M$, a \emph{blocking pair} is a man-woman pair such that both improve by matching to each other and leaving their current partner (if any). A matching without any blocking pair is a \emph{stable matching}. 

A central assumption in stable marriage is that every agent knows all agents it can match to. In reality, however, agents often have limited information about their matching possibilities. For instance, in a large society we would not expect a man to match up with any other woman immediately. Instead, there exist restrictions in terms of knowledge and information that allow some pairs to match up directly, while others would have to get to know each other first before being able to start a relationship. Similar information restrictions exist in many large matching markets -- in job markets, one of the most successful strategies to find a new job is to rely on personal contacts that allow the discovery of job opportunities before they become public knowledge. In housing markets, a popular strategy for finding a new roommate is to rely on personal contacts in the social network to find possible tenants. In dating markets, agents usually get to know only a small subset of agents based on which further contacts can evolve. The same happens in many matching scenarios, e.g., for finding partners to travel or to do sports.

We incorporate this aspect of local information by assuming that agents are embedded in a fixed network of \emph{links} $L$. Links represent an enduring knowledge relation that is not primarily under the control of the agents. Depending on the interpretation, links could represent, e.g., family, neighbor, co-worker or teammate relations. Each agent strives to build one \emph{matching edge} to a partner. The set of links and edges defines a dynamic information structure based on \emph{triadic closure}, a standard idea in social network theory: If two agents have a common friend, they are likely to meet and learn about each other. Translated into our model this implies that each agent can match to partners in its 2-hop neighborhood of the network of matching edges and links. To clarify our discussion, we present a formal definition here. For more preliminaries and details on the model see Section~\ref{sec:model}.

\begin{defi}[Locally Stable Matching]
We denote by $V$ the set of agents, and $E \subset V \times V$ the set of possible matching edges. A \emph{local blocking pair} of matching $M$ is a blocking pair $\{u,v\} \in E$ of agents $u,v \in V$ which are at hop distance at most 2 in the network $G=(V,M\cup L)$. Consequently, a \emph{locally stable matching} is a stable matching without local blocking pairs.
\end{defi}

Local blocking pairs are a subset of blocking pairs. This implies that every stable matching is a locally stable matching, because it allows no (local or global) blocking pairs. Thus, one might be tempted to think that locally stable matchings are easier to find and/or reach using distributed dynamics than ordinary stable matchings. In contrast, we show in this paper that locally stable matchings have a rich structure and can behave quite differently than ordinary stable matchings. Our study of locally stable matching with general strict preferences significantly extends recent work on the special case of correlated preferences~\cite{Hoefer13,HoeferW14,HoeferVW15}, in which preferences are determined by benefits for each matching edge. 

For most of the paper, we concentrate on the important two-sided scenario of stable marriage, in which a (locally) stable matching is always guaranteed to exist. Our primary interest is to characterize convergence properties of iterative round-based dynamics with distributed control, in which in each round a local blocking pair is resolved. We focus on the \Reach\ problem:

\begin{defi}[\Reach]
Given a local matching game and an initial matching $M_0$, the problem is to decide if there is a path to stability, i.e., a sequence of local blocking pair resolutions leading from $M_0$ to an arbitrary locally stable matching. As a variation, we sometimes consider existence of a path to a \emph{given} locally stable matching instead of an arbitrary one.
\end{defi}

The properties of such dynamics in large matching markets are a broad domain of interest in the literature~\cite{PaisPV12,EcheniqueY12} (see also our discussion of related work below). We show that, in contrast to stable matching without locality restrictions, \Reach\ can be \classNP-hard, even in surprisingly special cases. There are two conditions we identify to overcome this lower bound -- correlated structure in the preference lists, and memory.

Memory is a natural and important characteristic of human behavior. Moreover, even when agents are, e.g., organizations, or software/hardware components, it is reasonable to assume a certain form of memory. We take a quite conservative approach to this issue and assume that in each step of a sequence, each agent can remember only a single other agent that it was matched to before. This single agent can be chosen in various ways, and we consider three natural strategies -- the best partner (quality memory), the most recent partner (recency memory), a random partner (random memory). Our results show that the right kind of memory can indeed drastically change the properties of \Reach. In fact, perhaps the most effective strategy of the three is random memory, since it allows to guarantee convergence in the limit with probability 1. However, recency memory also can be effective under some restrictions on the structure of the instance. Quality memory, however, does not seem to significantly change the complexity of \Reach.

\subsection{Results and Contribution}

In Section~\ref{sec:reach} we first derive an example instance, where a locally stable matching can never be reached when starting from the empty matching. This is in strong contrast to the case of correlated preferences (sometimes referred to as weighted matching, or globally ranked pairs), in which it is easy to show convergence of every sequence of local blocking pair resolutions with a potential function~\cite{Mathieu08}. In fact, we use this gadget to show that it is \classNP-hard to decide \Reach, even if there are no links within one partition of agents. If we need to decide \Reach\ from a given initial matching \emph{to a given locally stable matching}, we show that this is \classNP-hard even for correlated preferences, and even if the links are only within one partition of agents. 

Moreover, we prove that there exist games and initial matchings such that \emph{every} sequence of local blocking pair resolutions terminating in a locally stable matching is exponentially long. Hence, in general \Reach\ might even be outside \classNP. In contrast, for correlated preferences, we show that every reachable state can also be reached using a sequence of polynomial length. Thus, for correlated preferences, \Reach\ is in \classNP\ and, thus, \classNP-complete if we ask for a given locally stable matching\footnote{Note that it is always true if we ask for an arbitrary one~\cite{Hoefer13}.}.

Given our \classNP-hardness results, in Section~\ref{sec:memory} we concentrate on a more restricted class of games in which one partition has no internal links, i.e., links exist only between partitions and among the other partition. This is a natural assumption when considering objects that do not generate knowledge about each other, e.g., when matching resources to networked nodes or users, where initially resources are only known to a subset of users. Here we characterize the impact of memory on distributed dynamics. For \emph{recency memory}, each agent remembers in every round the \emph{most recent partner} that is different from the current one. With recency memory, we show that \Reach\ is always true, and for every initial matching there exists a sequence of polynomially many local or remembered blocking pairs leading to a locally stable matching. In fact, we only need the partition without internal links to have recency memory. If, in contrast, only the other partition has recency memory, \Reach\ becomes again \classNP-hard. The same hardness holds for \emph{quality memory} if all agents from both partitions remember their \emph{best partner}. Our results formally support the intuition that recency memory is more powerful than quality memory, as the latter can be easily misled in the course of a dynamic process. This provides a novel distinction between recency and quality memory that was not known in previous work~\cite{Hoefer13}.

Our positive results for recency memory in Section~\ref{sec:memory} imply that if we pick admissible blocking pairs uniformly at random in each step, we achieve convergence with probability 1. This can also be guaranteed for \emph{random memory} if in each round each agent remembers one of his previous matches chosen uniformly at random. In fact, for random memory this result holds even in general when links exist among or between both partitions. However, using known results on stable marriage with full information~\cite{AckermannGMRV11}, convergence time can be exponential with high probability, independently of any memory.

In Sections~\ref{sec:IS} and~\ref{sec:roommate} we treat more centralized aspects of locally stable matching to highlight their different nature compared to ordinary stable matchings. A fundamental observation that motivates our results in Section~\ref{sec:IS} is that -- in contrast to ordinary stable matchings -- two locally stable matchings can have different sizes, and we consider the natural problem of finding a locally stable matching of maximum cardinality. This problem is known to be \classAPX-hard~\cite{ChengM13}. While a simple 2-approximation algorithm exists, we can show a non-approximability result of $1.5-\varepsilon$ under the unique games conjecture. Finally, in Section~\ref{sec:roommate} we consider the roommates problem, in which agents can match arbitrarily to other agents. In this case, we show that -- in contrast to ordinary stable matchings -- deciding existence of locally stable matchings is \classNP-complete.

Note that we prove all our hardness results for the case when agents have complete lists. In contrast, all our positive results are shown for the case of incomplete lists, where for each agent the set of possible matching partners is restricted to some arbitrary subset of agents (or, in case of stable marriage, subset of the other partition).

\subsection{Related Work}
Locally stable matchings were introduced by Arcaute and Vassilvitskii~\cite{ArcauteV09} in a two-sided job-market model, in which links exist only among one partition. The paper uses strong uniformity assumptions on the preference lists and addresses the lattice structure for stable matchings and a local Gale-Shapley algorithm. 

More recently, we studied locally stable matching and extensions with correlated preferences~\cite{Hoefer13}. In the roommates problem, where arbitrary pairs of agents can be matched, a potential function argument shows that \Reach\ is always true and convergence guaranteed~\cite{AbrahamIKM07,Mathieu10}. Moreover, in~\cite{Hoefer13} we proved that for every initial matching there is a polynomial path to stability, i.e., a sequence of local blocking pairs that leads to a locally stable matching. The expected convergence time of random dynamics, however, can be exponential. If we restrict to resolution of pairs with maximum benefit, then for random memory the expected convergence time becomes polynomial, but for recency or quality memory convergence time remains exponential, even if the memory is of polynomial size. 

Subsequent to publication of the extended abstract of this paper in~\cite{HoeferW13}, we studied more general coalition formation games with correlated preferences in which a polynomial path to stability can always be guaranteed. In~\cite{HoeferW14} we extended some of the results shown here (the \classNP-hardness result in Theorem~\ref{satz1} and the existence result in Theorem~\ref{satz3}) to other variants of matching games, such as considerate or friendship matching~\cite{AnshelevichBH13} that capture externalities among agents. In addition, we provided a tight characterization for a class of games with more general dynamic restrictions on the set of available blocking coalitions. This class contains locally stable matching as well as a variety of other variants of matching games. In~\cite{HoeferVW15} we showed a tight characterization for graph-based coalition formation games with limited visibility. This class extends the ideas underlying locally stable matching to coalitions of larger size.

For ordinary two-sided stable matching there have been a wide variety of works on various aspects, e.g., many-to-many matchings, ties, incomplete lists, etc. For an introduction to the topic we refer the reader to several books in the area~\cite{GusfieldI89,RothS90,Manlove13}. Theoretical work on convergence issues in ordinary stable marriage has focused on better-response dynamics, in which agents sequentially deviate to blocking pairs. It is known that for stable marriage these dynamics can cycle~\cite{Knuth76}. On the other hand, \Reach\ is always true, and for every initial matching there exists a polynomial path to stability~to an arbitrary stable matching~\cite{RothVV90}. Our recent work~\cite{HoeferW14} shows that deciding \Reach\ to a \emph{given} stable matching becomes \classNP-hard. If blocking pairs are chosen uniformly at random at each step, convergence time to an arbitrary stable matching can be exponential~\cite{AckermannGMRV11}. More recently, several works studied convergence time of random dynamics using combinatorial properties of preferences~\cite{HoffmanMP13}, or the probabilities of reaching certain stable matchings via random dynamics~\cite{BiroN13}. A prominent open problem in this domain is deciding \Reach\ for a sequence of blocking switches, where if a blocking pair $\{a,b\}$ deviates, then their former partners also pair up immediately. A path to stability does not necessarily exist \cite{Tamura93, TanS95}, but up to our knowledge the complexity of deciding \Reach\ is still open.

For the more general roommates problem, in which every pair of agents can be matched, stable matchings can be absent. There are algorithms to decide existence and compute stable matchings in polynomial time if they exist~\cite{Irving85}. Some of these algorithms rely on a tight characterization of stable matchings in terms of linear programming~\cite{Teo98}. In terms of convergence, if a stable matching exists, there also exist polynomial paths to stability to a stable matching from every initial matching~\cite{DiamantoudiMX04}. Similar results hold for more general concepts like $P$-stable matchings that always exist~\cite{InarraLM08}. Ergodic sets of the underlying Markov chain have been studied~\cite{InarraLM10} and related to random dynamics~\cite{KlausFW10}. Alternatively, several works have studied the computation of (variants of) stable matchings using iterative entry dynamics~\cite{BlumRR97,BlumR02,BiroCF08,Cheng16}.

Recently, computing locally stable matchings with maximum cardinality has attracted some interest. While local algorithms perform arbitrarily badly~\cite{BrandesW12}, the problem was shown to be \classAPX-hard in~\cite{ChengM13} and non-approximable within $21/19$ unless $\classP \neq \classNP$. We show a stronger lower bound of $1.5-\varepsilon$ under unique games conjecture. The same question has been studied for a related variant termed socially stable matching. Askalidis et al~\cite{AskalidisIKMP13} show that the problem in this variant is also non-approximable within $1.5-\varepsilon$ under the unique games conjecture by adapting the proof technique of this paper. They also provide a 1.5-approximation algorithm for socially stable matchings.

\section{Preliminaries}
\label{sec:prelim}

\subsection{Locally Stable Matching}
\label{sec:model}

A \emph{network matching game} (or \emph{network game}) consists of a (social) \emph{network} $N=(V,L)$, where $V$ is a set of vertices representing \emph{agents} and $L\subseteq \{\{u,v\}\mid u,v\in V, u\neq v\}$ is a set of fixed \emph{links}. The set $E\subseteq \{\{u,v\}\mid u,v\in V, u\neq v\}$ defines the \emph{potential matching edges}. Note that, in general, $E$ is not necessarily a sub- or superset of $L$. A \emph{state} $M\subseteq E$ is a matching, where for each $v\in V$ we have $|\{e\mid e\in M, v\in e\}| \leq 1$. Based on a state, we define the \emph{graph} to be $G_M = (V,L\cup M)$.

If an edge $e=\{u,v\}\in M$, the incident agents obtain a utility of $b_u(e), b_v(e) > 0$ for $u$ and $v$, respectively. Alternatively, we often represent the preference of each agent $v$ by a list $\succ_v$ over all its possible matching partners, where $u \succ_v w$ if $b_v(\{v,u\}) > b_v(\{v,w\})$ and $u =_v w$ if $b_v(\{v,u\}) =_v b_v(\{v,w\})$, for any $\{v,u\}, \{v,w\} \in E$. Note here that our model includes instances with incomplete lists and ties, and all our upper bounds hold for this general case. In contrast, our lower bounds apply even for the case of strict preference lists. As a special case, we also study games with \emph{correlated preferences} (also termed \emph{correlated network games}), in which it holds that $b_u(e) = b_v(e) = b(e) > 0$ for every $e \in E$. 

If we can divide $V$ into two disjoint sets $U$ and $W$ such that $E\subseteq \{\{u,w\}\mid u\in U, w\in W\}$, we call the game \emph{bipartite}. We will focus on this case in Section~\ref{sec:reach} and Section~\ref{sec:memory}. Note that this does not imply that $N$ has to be bipartite. If further the agents of $U$ are isolated in $N$, we term the game a \emph{job-market game} for consistency with~\cite{ArcauteV09,Hoefer13}.

To describe stability in network matching games, we extend the classic notion of blocking pair. Consider a matching $M$ and the graph $G_M$. A possible matching edge $e=\{u,v\}\in E$ is termed \emph{preferred by $u$} in $M$ if $u$ is either (1) unmatched in $M$ or (2) matched by edge $\{u,w\} \in M$ such that $v \succ_u w$. The edge $e \in E$ is termed a \emph{blocking pair} in $M$ if it is both preferred by $u$ and $v$. Further, we term the pair $\{u,v\}$ \emph{accessible} in $M$ if $u$ and $v$ are connected via a path of length at most 2 in $G_M$. A possible matching edge $e=\{u,v\}\in E$ is termed a \emph{local blocking pair} in $M$ if it is both a blocking pair and accessible in $M$. Thus, for a local blocking pair $e$ both agents can strictly increase their utility by deviating jointly to $e$ (and dismissing any existing incident edges). A state $M$ without a local blocking pair is termed a \emph{locally stable matching}.

Most of our analysis concerns iterative round-based dynamics. In this process, we pick in each step one local blocking pair $e = \{u,v\}$, remove all edges incident to $u$ and $v$ in $M$, and then add $e$ to $M$. We call one such step a \emph{local improvement step}. By \emph{random dynamics} we refer to the process when in each step the local blocking pair is chosen uniformly at random from the ones available. Consider a local blocking pair $\{u,v\} \in E$ that is resolved in such a step. Before the step, $u$ and $v$ are either connected directly by a link $\{u,v\} \in L$, by a path of two links $\{u,w\},\{w,v\} \in L$, or by a path of a single matching edge and a single link in $L$. In the latter case, let w.l.o.g. $\{u,w\}$ be the matching edge and $\{v,w\}$ the link. Since $\{u,w\}$ is the only edge of $M$ incident to $u$, the local improvement step will remove $\{u,w\}$ to create $\{u,v\}$. This observation will be helpful in our constructions below, and for simplicity we will refer to such a step as "an edge moving from $\{u,w\}$ to $\{u,v\}$" or "$u$'s edge moving from $w$ to $v$".

In subsequent sections, we will consider dynamics in which each agent has a memory that allows to "remember" one matching partner from a previous step. Memory can be thought of as a cache of size 1 to store a single previous partner. In this case, we extend the notion of accessible agents as follows. A pair $\{u,v\}$ of agents becomes accessible in $M$ if (1) there is a $u$-$v$-path of distance at most 2 in $G_M$, or (2) if $u$ is present in the memory of $v$, or $v$ in the memory of $u$. All other definitions follow accordingly. Hence, in this case a local blocking pair can be based solely on access through memory. We consider three strategies for memory update. For \emph{random memory}, we assume that in every step the memory contains a previous matching partner chosen uniformly at random. For \emph{recency memory}, each agent keeps in memory the last matching partner that is different from the current partner. For \emph{quality memory}, each agent remembers the previous matching partner that gave him the highest utility.

\section{Reachability in Bipartite Network Games}
\label{sec:reach}

In this section we focus on lower bounds for the \Reach\ problem in bipartite network games. Throughout, we focus on the empty matching as a natural candidate for the initial matching and show that \Reach\ is \classNP-hard to decide. This is in contrast to correlated network games, where \Reach\ is always true, and for every initial matching there is a polynomial path to a locally stable matching~\cite{Hoefer13}. However, we show that given a distinct matching to reach, deciding \Reach\ becomes \classNP-hard, even for correlated job-market games. 

Additionally, we give a network game and an initial matching such that we need an exponential number of steps before reaching any locally stable matching. This is again in contrast to the correlated case, in which we show that every reachable locally stable matching can be reached by a polynomial path.

\subsection{Complexity}
\label{sec:complex}

Our constructions rely on a simple instance, in which it is impossible to reach a locally stable matching from the empty initial matching. We term this structure a \emph{circling gadget}.

\begin{bspi}[Circling Gadget] \rm
The agent set $V$ consists of the classes $U=\{1,2,3\}$ and $W=\{A,B,C,b_1,b_2,b_3\}$. The links are $L = \{\{A,b_1\},\{B,b_2\},\{C,b_3\},$ $\{1,b_1\},$ $\{2,b_2\},$ $\{3,b_3\},$ $\{A,B\},$ $\{B,C\},$ $\{C,A\}$ (see Fig.~\ref{fig:circling} below). For simplicity, we restrict the possible matching edges to $E = \{\{u,v\}\mid u = 1,2,3, v = A,B,C\}$. It will become obvious that by ranking $b_1, b_2, b_3$ at the bottom, we can similarly allow $E = U \times W$ without changing the arguments.

\begin{figure}
\setlength{\tabcolsep}{20pt}
\begin{tabular*}{\textwidth}{cc}
\begin{tabular}[b]{|c|c|}\hline
Agent & Preference List \\\hline\hline
$1 $ & $C\succ B\succ A $ \\\hline
$2 $ & $A\succ C\succ B $ \\\hline
$3 $ & $B\succ A\succ C $ \\\hline
$A $ & $3\succ 1\succ 2 $ \\\hline
$B $ & $1\succ 2\succ 3 $ \\\hline
$C $ & $2\succ 3\succ 1 $ \\\hline
\end{tabular} &
\begin{tikzpicture}[thick,scale=0.9, every node/.style={transform shape}]
\tikzstyle{vertex}=[circle,draw,minimum size=2pt]

\node[vertex] (1) at (1,3) {$1$};  
\node[vertex] (2) at (3.5,4) {$2$};
\node[vertex] (3) at (6,3) {$3$};
\node[vertex] (A) at (1,1) {$A$};
\node[vertex] (B) at (3.5,2) {$B$};
\node[vertex] (C) at (6,1) {$C$};
\node[vertex, scale=0.5] (b1) at (1,2) {$b_1$}; 
\node[vertex, scale=0.5] (b2) at (3.5,3) {$b_2$};
\node[vertex, scale=0.5] (b3) at (6,2) {$b_3$};

\path[link] (1) -- (b1);
\path[link] (2) -- (b2);
\path[link] (3) -- (b3);
\path[link] (A) -- (b1);
\path[link] (B) -- (b2);
\path[link] (C) -- (b3);
\path[link] (A) -- (B);
\path[link] (B) -- (C);
\path[link] (C) -- (A);

\end{tikzpicture}\\
\end{tabular*}
\caption{\label{fig:circling} Preferences and links in the circling gadget.}
\end{figure}
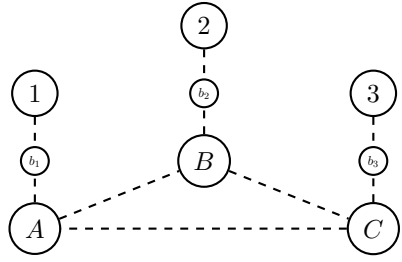

This gadget has two locally stable matchings, namely 
$$\{\{1,B\},\{2,C\},\{3,A\}\} \; \text{ and }  \; \{\{1,C\},\{2,A\},\{3,B\}\}\enspace.$$ 
However, we will show that from every state in which one agent of $W$ is unmatched, every possible sequence of local improvement steps leads to some state where some agent of $W$ is unmatched again. The symmetry of the gadget then allows to repeat the argument, and thus no locally stable matching can be reached. For our analysis, let us w.l.o.g.\ assume that $1$ is unmatched. 

First, suppose $A$ is not matched to $3$. Then $\{A,1\}$ is a local blocking pair, since the agents are at hop distance 2 from each other, $1$ is unmatched and $A$ prefers $1$ over $2$. When creating this edge, the distance from $1$ to $B$ reduces from 3 to 2. Agent $1$ is the top choice of $B$. Hence, as long as $1$ remains matched to $A$, $\{B,1\}$ remains a local blocking pair. When $A$ gets matched to some other agent, then $1$ becomes unmatched. Otherwise, if we create $\{B,1\}$, $A$ is left unmatched, which implies that also one of the agents $2$ or $3$ must be unmatched.

Now suppose $A$ is matched to $3$. Following the arguments above, as long as $B$ is matched to $2$, $\{C,2\}$ is a local blocking pair. Hence, suppose that $B$ is not matched to 2. Since $1$ is unmatched, $\{B,3\}$ is a local blocking pair. If we create this edge, $A$ becomes unmatched, and we reach the case analyzed above. Hence, every sequence of moves yields a state in which another agent of $\{1,2,3\}$ is unmatched.

In turn, it is simple to verify that when agent $A$ is matched to some more preferred partner outside of the gadget, the remaining agents can always stabilize easily through local improvement steps. For example, starting from the empty matching, consider the following sequence: $\emptyset \to \{(2,B)\} \to \{(2,C)\}$. The latter is a locally stable matching when $A$ is matched to a preferred partner outside the gadget. 

Finally, observe that all the previous arguments can be made if $E = U \times W$ and the agents $b_i$ are ranked arbitrarily at the bottom of the lists of agents in $U$. Consider agent $b_1$, the argument is similar for all other agents $b_i$. $b_1$ is only accessible with agent $i' \in U$, $i'\neq 1$, when $i'$ is matched to $A$. At this point, however, $i'$ has no incentive to deviate to $b_1$. Hence, each $b_i$ can only be involved in the local blocking pair $\{i,b_i\}$. Since these edges also exist as links, their creation does not change any of the arguments made above. \hfill $\blacksquare$
\end{bspi}

In our hardness proofs we use this gadget if we need to force a certain agent to be matched in a locally stable matching. For this we identify the forced agent with $A$ of the gadget and declare all allowed outside connections preferable to the gadget agents. Then, if the agent is matched to a partner outside the gadget, the gadget can stabilize while otherwise it does not.

Our first theorem proves \classNP-hardness of deciding \Reach\ to a given locally stable matching. In our related work, we provided a simpler proof template for this statement in a variety of matching games with additional visibility or externality constraints~\cite{HoeferW14}. The template, however, does not work for the special case of a job-market game (i.e., links exist only within one partition) with correlated preferences (i.e., preferences are correlated by a single value $b(e)$ for each matching edge). Since the empty matching is always locally stable in job market games, we use a different initial state to show the result.

In addition, the proof will introduce the main construction and intuition that will be used to show a number of hardness proofs for different scenarios with and without memory throughout the paper.

\begin{satz}
\label{satz6}
It is \classNP-hard to decide \Reach\ to a given locally stable matching in a correlated job-market game.
\end{satz}

\begin{proof}
We reduce from \ThreeSAT. Let the given \ThreeSAT-formula contain $k$ variables $x_1,\ldots,x_k$ and $l$ clauses $C_1,\ldots,C_l$, where clause $C_j$ holds the literals $l_{1j}, l_{2j}$ and $l_{3j}$. For each variable $x_i$, our instance contains an agent $u_{x_i}$ in $U$, and four agents $a_{x_i}, x_i, \overline{x}_i$ and $v_{x_i}$ in $W$. For every clause $C_j$, our instance contains an agent $u_{C_j}$ in $U$ and two agents $a_{C_j}$ and $v_{C_j}$ in $W$. Further, there is an agent $a$ in $W$. We allow all possible matching edges $E = U \times W$, but due to the structure of the instance and our choice of initial state, some of these edges can never be created during the dynamics.

First, let us describe the link structure. Since we have a job market game, links are all among agents in $W$. The network $N$ forms a path with a branching in the middle, see Fig.~\ref{fig:3satJob}. The first part of the path contains links $\{a_{C_j},a_{C_{j+1}}\}$ for $j = 1,\ldots,l-1$, a link $\{a_{C_l},a_{x_1}\}$, links $\{a_{x_i},a_{x_{i+1}}\}$ for $i=1,\ldots,k-1$, and a link $\{a_{x_k},a\}$. The branching is created by links $\{a,x_i\}$, $\{a,\bar{x}_i\}$, $\{x_i,v_{x_1}\}$ and $\{\bar{x}_i,v_{x_1}\}$ for $i=1,\ldots,k$. Finally, we obtain the second part of the path by links $\{v_{x_j},v_{x_{j+1}}\}$ for $j = 1,\ldots,l-1$, a link $\{v_{x_k},v_{C_1}\}$, and links $\{v_{C_j},v_{C_{j+1}}\}$ for $j=1,\ldots,l-1$.

The initial matching $M_0$ consists of $M_0 = \{\{u_s,a_s\} \mid  s \in\{x_1,..,x_k\}\cup\{C_1,..,C_l\}\}$. The given locally stable matching to be reached is $M= \{\{u_s,v_s\} \mid  s \in\{x_1,..,x_k\}\cup\{C_1,..,C_l\}\}$. In Table~\ref{tab:3satJob} we provide a description of the preferences w.r.t.\ the relevant matching edges in the dynamics. All other matching edges not mentioned in the table can be considered to have edge weight of, say, $\varepsilon \ll 1$. Thereby, these matches are ranked arbitrarily at the bottom of the preference list of each involved agent.
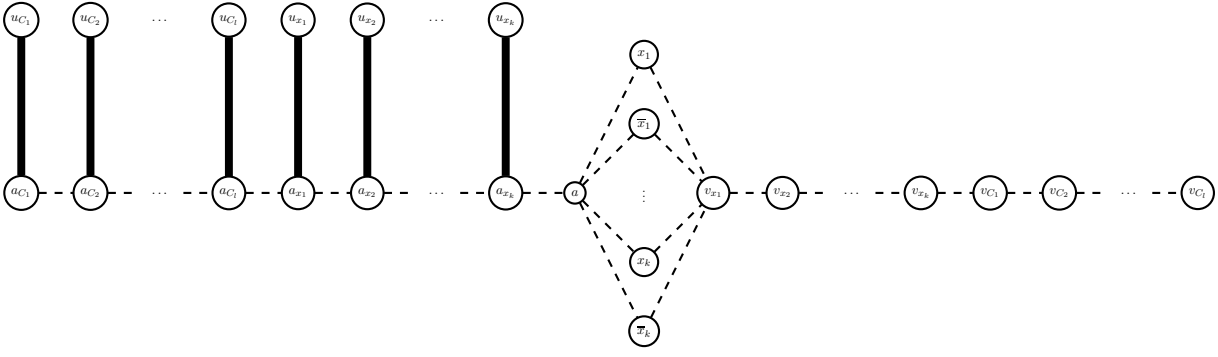
\begin{figure}
\begin{center}
\begin{tikzpicture}[thick,scale=0.46, every node/.style={transform shape}]
\tikzstyle{vertex}=[circle,draw,minimum size=5pt,scale=1]

\node[vertex] (x1) at (2,8) {$x_1$};  
\node[vertex] (-x1) at (2,6) {$\overline{x}_1$};

\node[vertex] (xk) at (2,2) {$x_k$};
\node[vertex] (-xk) at (2,0) {$\overline{x}_k$}; 

\node[vertex] (s) at (0,4) {$a$};

\node[vertex] (ex1) at (4,4) {$v_{x_1}$}; 
\node[vertex] (ex2) at (6,4) {$v_{x_2}$};
\node[vertex] (exk) at (10,4) {$v_{x_k}$};
\node[vertex] (ec1) at (12,4) {$v_{C_1}$};
\node[vertex] (ec2) at (14,4) {$v_{C_2}$}; 
\node[vertex] (ecl) at (18,4) {$v_{C_l}$};

\node[vertex] (uc1) at (-16,9) {$u_{C_1}$};
\node[vertex] (uc2) at (-14,9) {$u_{C_2}$}; 
\node[vertex] (ucl) at (-10,9) {$u_{C_l}$};
\node[vertex] (ux1) at (-8,9) {$u_{x_1}$}; 
\node[vertex] (ux2) at (-6,9) {$u_{x_2}$};
\node[vertex] (uxk) at (-2,9) {$u_{x_k}$};

\node[vertex] (ac1) at (-16,4) {$a_{C_1}$};
\node[vertex] (ac2) at (-14,4) {$a_{C_2}$}; 
\node[vertex] (acl) at (-10,4) {$a_{C_l}$};
\node[vertex] (ax1) at (-8,4) {$a_{x_1}$}; 
\node[vertex] (ax2) at (-6,4) {$a_{x_2}$};
\node[vertex] (axk) at (-2,4) {$a_{x_k}$};

\node (dots1) at (8,4) {$~~\ldots~~$}; 
\node (dots2) at (16,4) {$~~\ldots~~$};
\node (dots3) at (2,4) {$\vdots$};
\node (dots4) at (-4,9) {$~~\ldots~~$}; 
\node (dots5) at (-12,9) {$~~\ldots~~$};
\node (dots6) at (-4,4) {$~~\ldots~~$}; 
\node (dots7) at (-12,4) {$~~\ldots~~$};

\path[link] (ac1) -- (ac2) -- (dots7) -- (acl) -- (ax1) -- (ax2) -- (dots6) -- (axk) -- (s) -- (x1) -- (ex1) -- (ex2) -- (dots1) -- (exk) -- (ec1) -- (ec2) -- (dots2) -- (ecl);
\path[link] (s) -- (-x1) -- (ex1);
\path[link] (s) -- (xk) -- (ex1);
\path[link] (s) -- (-xk) -- (ex1);

\path[existEdge]
(uc1) edge (ac1)
(uc2) edge (ac2)
(ucl) edge (acl)
(ux1) edge (ax1)
(ux2) edge (ax2)
(uxk) edge (axk);
\end{tikzpicture}
\end{center}
\caption{\label{fig:3satJob} Network and initial matching in the instance for Theorem~\ref{satz6}}
\end{figure}

\begin{table}
\begin{center}
\setlength{\tabcolsep}{15pt}
\begin{tabular}{|c|c|l|l|}\hline
$u\in U$ & $w\in W$ & edge weight $b(\{u,w\})$ & \\\hline\hline
$u_{C_j}$ & $a_{C_{j'}}$           & $2(k+l+1)j + j'              $ & $j,j' \in[l]$\\\hline
$u_{C_j}$ & $a_{x_i}   $           & $2(k+l+1)j + l+i             $ & $i \in [k], j \in[l]$\\\hline
$u_{x_i}$ & $a_{C_j}   $           & $2(k+l+1)(l+i) +j            $ & $i \in [k], j \in[l]$\\\hline
$u_{x_i}$ & $a_{x_{i'}}$           & $2(k+l+1)(l+i) +l+i'         $ & $i,i' \in [k]$\\\hline
$u_{C_j}$ & $a$                    & $2(k+l+1)j + k+l+1           $ & $j \in[l]$\\\hline
$u_{x_i}$ & $a$                    & $2(k+l+1)(l+i) + k+l+1       $ & $i \in [k]$\\\hline
$u_{C_j}$ & $l_{1j}/l_{2j}/l_{3j}$ & $2(k+l+1)j + k+l+2           $ & $j \in[l]$\\\hline
$u_{x_i}$ & $x_i/\bar{x}_i$        & $2(k+l+1)(l+i) + k+l+2       $ & $i \in [k]$\\\hline
$u_{C_j}$ & $v_{x_{i'}}$           & $2(k+l+1)j + (k+l+2) + i'    $ & $i \in [k], j \in[l]$ \\\hline
$u_{x_i}$ & $v_{x_{i'}}$           & $2(k+l+1)(l+i) + (k+l+2) + i'$ & $i,i' \in [k], i' \le i$ \\\hline
$u_{C_j}$ & $v_{C_{j'}}$           & $2(k+l+1)j + (k+l+2) + k + j'$ & $j,j' \in [l], j' \le j$ \\\hline
%
\end{tabular}
\end{center}
\caption{\label{tab:3satJob} Preferences of agents in the instance for Theorem~\ref{satz6}. Since preference lists are correlated, we here specify the edge weight for a significant subset of matching edges. Agents rank their possible partners in non-increasing order of edge weight. For each edge we did not specify here, we assume there is a very small weight. Edges of small weight are then ranked in arbitrary order at the bottom the preference lists of the incident agents. They do not change the dynamics as described in the proof.}
\end{table}
Let us first provide a broad intuition of the dynamics in the gadget. Consider the display of the gadget in Fig.~\ref{fig:3satJob}. Agents in $U$ will generally prefer to be matched ``further to the right'' along the path in $N$ (i.e., most preferred $v_{C_l}$, least preferred $a_{C_1}$). We assume all agents $x_i$ and $\bar{x_i}$ in the branching are equally preferred for each agent $u \in U$, but in order to avoid ties in the preference lists we can also break them arbitrarily. Agents in $W$ generally prefer to have a partner that is ``further to the right'' in the order the agents of $U$ are displayed (i.e., most preferred $u_{x_k}$, least preferred $u_{C_1}$). As such there is, e.g., exactly one local blocking pair for the initial matching, namely $\{u_{x_k},a\}$. 

Since the instance is a job-market game, if an agent in $U$ becomes unmatched, it can never be involved in any local blocking pair and thus stays unmatched from that point on. However, since in $M$ all agents of $U$ are matched, they must stay matched throughout the sequence. Therefore, the sequence will proceed from $M_0$ by ``shifting the edges along the path'' -- for every $s \in \{x_1,\ldots,x_k,C_1,\ldots,C_l\}$, agent $u_s$ will become matched, one by one, to the agents of $W$ along the left side of the path leading from $a_s$ to $a$, one agent in the branching, and then the agents on right side leading from the branching to $v_s$. This is the only way $u_s$ will be able to discover and match to $v_s$ via resolution of local blocking pairs. Hence, the matching edge incident to $u_s$ will ``move'' along the path, through the branching and then to $v_s$.

We now describe some adjustments to the general outline above that serve to create a correct reduction. If agent $u_{x_i}$ is matched to $a$, we restrict his preference so that the only local blocking pairs involving $u_{x_i}$ are $\{u_{x_i},x_i\}$ or $\{u_{x_i},\bar{x}_i\}$. Hence, at this point $u_{x_i}$ is only interested in deviating to designated agents $x_i, \bar{x}_i$. Matching edges $\{u_{x_i}, s\}$ for $s \in \{x_j,\bar{x}_j\}$ with $i \neq j$ become ranked arbitrarily at the bottom of the preference list of the corresponding agents. A similar adjustment is done for the clause agents. If agent $u_{C_j}$ is matched to $a$, we restrict his preference so that the only local blocking pairs are $\{u_{C_j},l_{j1}\}$, $\{u_{C_j},l_{2j}\}$ and $\{u_{C_j},l_{3j}\}$. Hence, at this point $u_{C_j}$ is only interested in deviating to agents that correspond to literals that evaluate the formula to true.

Furthermore, we ``stop'' the movement of the matching edge incident to $u_{x_i}$ once it has reached to $v_{x_i}$ -- we assume edges $\{u_{x_i},v_{x_j}\}$ for $j > i$ and $\{u_{x_i},v_{C_j}\}$ for $j=1,\ldots,l$ are ranked at the bottom of the lists. A similar adjustment is made to edges $\{u_{C_j},v_{C_{j'}}\}$ for $j' > j$. Observe that these adjustments imply that $M$ is a locally stable matching. More precisely, $M$ becomes the unique (globally) stable matching.

Now it is rather simple to see the idea of the reduction. The key property is that once any edge $\{u_{x_i},v_{x_i}\}$ exists, every agent $u_{C_j}$ must be matched to $v_{x_{i'}}$ with $i' > i$ or $v_{C_{j'}}$. Assume otherwise, then the edge incident to some $u_{C_j}$ must still pass through $v_{x_i}$ to reach $v_{C_j}$. Since $v_{x_i}$ is the most preferred partner of $u_{x_i}$, the latter would have to become (and remain) unmatched, at which point $M$ could not be obtained. By similar arguments, $M$ can only be reached if and only if every agent $v_{x_i}$ becomes matched to $u_{C_l},\ldots,u_{C_1},u_{x_k},\ldots,u_{x_i}$ in that order. 

Finally, for the formal correctness proof, suppose the \ThreeSAT\ formula is satisfiable. Then for each variable $x_1,\ldots,x_k$ in that order, we match each $x_i$ iteratively along the path to $a$ and then match it to $x_i$ ($\bar{x_i}$) if $x_i$ is set false (true) in the satisfying assignment. This leaves unmatched the agents of $W$ in the branching that correspond to the satisfying assignment. Now, for each clause $C_l,\ldots,C_1$ in this order, we consider agent $u_{C_j}$, iteratively match it along the path to $a$, then to one of the literal agents $l_{1j}$, $l_{2j}$, $l_{3j}$ (one is unmatched since the assignment is satisfying), and further to $v_{C_j}$. For each variable $x_k,\ldots,x_1$ in that order, we then match $u_{x_i}$ from its current partner to $v_{x_1}$ and further to $v_{x_i}$. Thereby we obtain a sequence from $M_0$ to $M$.

In turn, suppose there is a sequence from $M_0$ to $M$. Due to the key property, we must match every $u_{x_i}$ to one of its designated agents in the branching before matching any of them to $v_{x_1}$. This induces an assignment of the variables. Then, since $u_{C_j}$ considers deviating from $a$ only to agents that correspond to its literals, each $u_{C_j}$ must be matched to one of its literal agents in order to become accessible to $v_{C_j}$. Therefore, to construct the desired sequence, the \ThreeSAT\ formula must be satisfiable.
\end{proof}

The main property of the previous proof is that, intuitively, we need to carefully match the variable agents of $U$ in a branching in order to allow the clause agents of $U$ reach their desired partner. We now drop the requirement of a job-market game and allow links among both partitions of the agents. Thereby, we can even establish the same result starting from the empty matching. This is again a stronger result than the one derived by the template in~\cite{HoeferW14}, since the template does not apply to an empty initial matching. 

\begin{satz}
\label{satz1}
It is \classNP-hard to decide \Reach\ from the initial matching $M = \emptyset$ to a given locally stable matching in a correlated bipartite network game.
\end{satz}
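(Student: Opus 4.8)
The plan is to reduce \ThreeSAT\ to \Reach\ by means of the \ThreeSAT\ gadget of Section~\ref{sec:gadgets}, encoding satisfiability into the question of whether a single distinguished matching can be reached. Given a formula $\phi$ over variables $x_1,\dots,x_k$ and clauses $C_1,\dots,C_l$, I would realize the gadget as a correlated bipartite game and take as the target the locally stable matching $M^\ast$ in which every $u_{x_i}$ is matched to $v_{x_i}$ and every $u_{C_j}$ to $v_{C_j}$, together with the terminal state of whatever structure replaces $a$. Since in the bare gadget the $u$-vertices are isolated and hence inert under the empty matching, the first ingredient is a source structure in place of $a$ that (i) makes the $u$-vertices accessible so that edges can be created at all, and (ii) forces the edges to arrive at $a$ in the order ``all variable vertices before all clause vertices''. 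The circling construction of Section~\ref{sec:gadgets} is the intended device here: it lets us force vertices to become matched in a prescribed order and pins down the terminal state of the source that appears in $M^\ast$.

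The technical core is the assignment of correlated weights $b(e)$. I would give each $u$-vertex a strictly increasing gradient along its intended route --- off $a$, into its admissible literal vertices, and then step by step along $v_{x_1},v_{x_2},\dots$ up to its own $v$-vertex --- so that each intended local improvement strictly raises the shared benefit of both endpoints and no edge is ever tempted to slide backwards. On top of the gradient, two relative inequalities between distinct edges sharing a vertex must hold: a variable edge resting on a literal vertex must outweigh any clause edge admissible at the same literal, so a parked variable edge cannot be displaced; and the edge for which a $v$-vertex is its final destination must outweigh every edge that only passes through that $v$-vertex, so the path behaves single-file and no edge can overtake another once both are on it. As all these are different edges, the weights can be chosen consistently, and verifying that $M^\ast$ has no local blocking pair --- no unmatched literal or $v$-vertex can improve, and the source is stable --- is then a routine check.

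Correctness then follows the gadget's intended dynamics. For the forward direction, from a satisfying assignment $\alpha$ I would first push every variable edge off $a$ and park it at the literal opposite to $\alpha(x_i)$, so the free literal vertices are precisely the true literals; then, in decreasing index order, route each clause edge $u_{C_j}$ through some true literal of $C_j$ (one exists since $\alpha\models C_j$) onto the path and forward to $v_{C_j}$; and finally, again in decreasing index order, bring the parked variable edges onto the path to their destinations $v_{x_i}$, which lie strictly before all occupied clause positions. Every such step is a legal local improvement, so $M^\ast$ is reached. For the converse, since edges move single-file and cannot pass, the final order on the path forces the entering order to be exactly the reverse of the final positions; in particular all clause edges must enter the path before any variable edge, as a variable edge entering early would permanently block some $v_{C_j}$. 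Together with the forced arrival order at $a$, this means that when the clause edges route through the branching, every variable edge is already parked at exactly one of its two literals. Reading the parked literal of each $x_i$ as its false value yields an assignment under which each clause's entry literal is true, i.e.\ a satisfying assignment for $\phi$.

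The main obstacle is the converse direction, and concretely the task of making ``single-file, overtaking only inside the branching'' genuinely forced rather than merely achievable. I must preclude every unintended shortcut --- a clause edge displacing a parked or settled edge, an edge sliding backwards, edges entering the path out of the forced order, or the source releasing a clause vertex before all variable vertices --- each of which has to be blocked by a concrete weight inequality, and all of these inequalities must be made simultaneously consistent under the correlated constraint $b_u(e)=b_v(e)$. The bulk of the work is the bookkeeping that checks this finite system of inequalities is satisfiable and creates no collateral blocking pairs; the conceptual crux is the monotonicity-plus-single-file argument establishing that reachability of $M^\ast$ is equivalent to the existence of a simultaneous parking configuration, which is exactly a satisfying assignment. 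The forced arrival order at $a$ is the linchpin: without it a clause edge could enter the path through a literal that only later becomes a variable's parking spot, decoupling reachability from satisfiability.
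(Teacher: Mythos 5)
Your overall strategy matches the paper's: a reduction from \ThreeSAT\ via the gadget of Section~\ref{sec:gadgets}, a weight gradient driving each $u$-vertex's edge from $a$ through the branching and along the single-file $v$-path, and the equivalence ``clause edges can overtake the parked variable edges in the branching iff the formula is satisfiable.'' That part of your argument is essentially the paper's, and your converse (single-file forces the parking configuration to encode an assignment) is the right one.

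There is, however, a genuine gap in the one piece you leave to a black box: the ``source structure'' that creates the edges at $a$ and forces the arrival order \emph{variables before clauses}. You propose to use the circling construction for this, and that cannot work here for two reasons. First, the circling gadget relies on cyclic, asymmetric preference lists and has no realization with correlated preferences: a correlated game admits a potential function (as the paper notes when contrasting with~\cite{Hoefer11,Mathieu08}), so every improvement sequence terminates and nothing can circle. Since Theorem~\ref{satz1} is specifically about \emph{correlated} games, the gadget is unavailable. Second, even where it is available, the circling gadget only forces a designated vertex to be matched externally in every reachable locally stable matching; it does not control the \emph{order} in which edges are created, which is what you actually need. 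The paper solves this with a different, correlated-friendly device: an extra vertex $a_1$ and a chain of vertices $b_1,\ldots,b_{l+k-1}$ linked as $a$--$a_1$--$u_{C_1}$--$b_1$--$u_{C_2}$--$\cdots$--$u_{C_l}$--$b_l$--$u_{x_1}$--$\cdots$--$u_{x_k}$, with weights $b(\{u_{C_j},a\})=j$, $b(\{b_h,a\})=h+\tfrac12$, $b(\{u_{x_i},a\})=i+l$. Then only one edge at $a$ can exist at a time, it can only be born at $\{u_{C_1},a\}$, it must walk up this chain by strict improvements for $a$, and once any $u$-vertex on the chain departs into the branching no later token can pass it. This forces departures in reverse chain order ($u_{x_k},\ldots,u_{x_1},u_{C_l},\ldots,u_{C_1}$), i.e., exactly the arrival order you need. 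Without replacing your circling-based source by something of this kind, the linchpin of your converse direction is unsupported.
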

\begin{proof}
The argument will heavily rely on the ideas put forward in the previous proof. The main adjustment here is that we replace the left side of the path (agents $a_{x_i}$ and $a_{C_j}$, for $i\in[k], j\in[l]$) by a different construction that allows to create matching edges for unmatched agents in $U$. In contrast to the previous construction, we add agents $b_h$, $h=1,\ldots,l+k-1$, to $U$, and replace all agents $a_{x_i}$ and $a_{C_j}$ by a single agent $a_1$ in $W$. For the link set $L$, consider the display of the instance in Fig.~\ref{fig:3satEmpty}. The $b$-agents are not labeled in the picture as they only act as a buffer to ensure that local blocking pairs involving $b_h$ exist only when the neighboring $u$-agent is matched to $a$. We again construct a path with a branching in a similar fashion as above. For the left side of the path, we add links $\{a,a_1\},$ $\{a_1,u_{C_1}\},$ $\{u_{C_j},b_j\}$ for $j=1,\ldots,l$, $\{b_j,u_{C_{j+1}}\}$ for $j=1,\ldots,l-1$, $\{b_l,u_{x_1}\},$ $\{u_{x_i},b_{l+i}\}$ for $i=1,\ldots,k-1$ and $\{b_{l+i},u_{x_{i+1}}\}$ for $i=1,\ldots,k-1$. We start from the empty matching, and the the goal is again to reach $M=\{\{u_s,v_s\} \mid  s \in\{x_1,..,x_k\}\cup\{C_1,..,C_l\}\}$. For a formal description of the correlated preference lists, see Table~\ref{tab:3satEmpty}.

\begin{figure}
\begin{center}
\begin{tikzpicture}[thick,scale=0.43, every node/.style={transform shape}]
\tikzstyle{vertex}=[circle,draw,minimum size=5pt,scale=1]

\node[vertex] (x1) at (2,8) {$x_1$};  
\node[vertex] (-x1) at (2,6) {$\overline{x}_1$};

\node[vertex] (xk) at (2,2) {$x_k$};
\node[vertex] (-xk) at (2,0) {$\overline{x}_k$}; 

\node[vertex] (s) at (0,4) {$a$};
\node[vertex] (s1) at (-9,3) {$a_1$};

\node[vertex] (ex1) at (4,4) {$v_{x_1}$}; 
\node[vertex] (ex2) at (6,4) {$v_{x_2}$};
\node[vertex] (exk) at (10,4) {$v_{x_k}$};
\node[vertex] (ec1) at (12,4) {$v_{C_1}$};
\node[vertex] (ec2) at (14,4) {$v_{C_2}$}; 
\node[vertex] (ecl) at (18,4) {$v_{C_l}$};

\node[vertex] (uc1) at (-16,9) {$u_{C_1}$};
\node[vertex] (uc2) at (-14,9) {$u_{C_2}$}; 
\node[vertex] (ucl) at (-10,9) {$u_{C_l}$};
\node[vertex] (ux1) at (-8,9) {$u_{x_1}$}; 
\node[vertex] (ux2) at (-6,9) {$u_{x_2}$};
\node[vertex] (uxk) at (-2,9) {$u_{x_k}$};

\node[vertex] (bx1) at (-7,9) {};
\node[vertex] (bcl) at (-9,9) {};
\node[vertex] (bc1) at (-15,9) {};

\node (dots1) at (8,4) {$~~\ldots~~$}; 
\node (dots2) at (16,4) {$~~\ldots~~$};
\node (dots3) at (2,4) {$\vdots$};
\node (dots4) at (-4,9) {$~~\ldots~~$}; 
\node (dots5) at (-12,9) {$~~\ldots~~$};

\path[link] (s1) -- (s) -- (x1) -- (ex1) -- (ex2) -- (dots1) -- (exk) -- (ec1) -- (ec2) -- (dots2) -- (ecl);
\path[link] (s1) -- (uc1) -- (bc1) -- (uc2) -- (dots5) -- (ucl) -- (bcl) -- (ux1) -- (bx1) -- (ux2) -- (dots4) -- (uxk);
\path[link] (s) -- (-x1) -- (ex1);
\path[link] (s) -- (xk) -- (ex1);
\path[link] (s) -- (-xk) -- (ex1);

\end{tikzpicture}
\end{center}
\caption{\label{fig:3satEmpty} Network in the instance for Theorem~\ref{satz1}}
\end{figure}
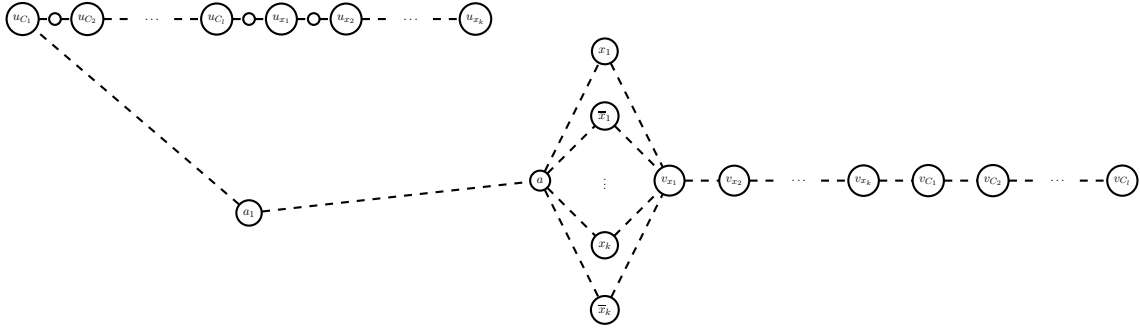

\begin{table}
\begin{center}
\setlength{\tabcolsep}{15pt}
\begin{tabular}{|l|c|l|l|}\hline
$u\in U$  & $w\in W$ & edge weight $b(\{u,w\})$ & \\\hline\hline
$u_{C_j}$ & $a$              & $j$                         & $j \in[l]$\\\hline
$u_{x_i}$ & $a$              & $l+i$                       & $i \in [k]$\\\hline
$b_h$     & $a$              & $h+\frac{1}{2}$             & $h \in [k+l-1]$\\\hline
$u_{C_j}$ & $l_{1j}/l_{2j}/l_{3j}$ & $k+l+1$                     & $j \in[l]$\\\hline
$u_{x_i}$ & $x_i/\bar{x}_i$  & $k+l+1$                     & $i \in [k]$\\\hline
$u_{C_j}$ & $v_{x_i}$        & $(k+l+1) + (k+l)j + i$      & $i \in [k], j \in[l]$\\\hline
$u_{x_i}$ & $v_{x_{i'}} $    & $(k+l+1) + (k+l)(l+i) + i'$ & $i,i' \in [k], i' \le i$\\\hline
$u_{C_j}$ & $v_{C_{j'}} $    & $(k+l+1) + (k+l)j + k + j'$ & $j,j' \in[l], j' \le j$\\\hline
\end{tabular}
\end{center}
\caption{\label{tab:3satEmpty} Preferences of agents in the instance for Theorem~\ref{satz1}. For correlated preferences, we again specify the edge weights of a significant subset of matching edges, which are used to derive the preference lists. For each edge not given here, there is a very small weight, and these edges are then ranked in arbitrary order at the bottom the preference lists of the incident agents. They do not change the dynamics as described in the proof.}
\end{table}

Suppose the \ThreeSAT\ formula is satisfiable. We describe the sequence to reach $M$, which shows the general idea of the construction. First, we match $a$ iteratively to $u_{C_1}, b_1, u_{C_2}, b_2,\ldots,b_{k+l-1}, u_{x_1}$. Then, we match $u{x_1}$ to $x_1$ $(\bar{x}_1)$ if $x_1$ is false (true) in the satisfying assignment. Afterwards, we match $a$ again iteratively to $u_{C_l}, b_1, u_{C_{l-1}}, b_2,\ldots,b_{k+l-2}, u_{x_2}$, and then $u_{x_2}$ to $x_2$ $(\bar{x}_2)$ if $x_2$ is false (true) in the satisfying assignment. In similar fashion, we proceed with all variable agents. Once $u_{x_k}$ is matched to either $x_k$ or $\bar{x}_k$, we proceed with the clause agents. We match $a$ iteratively to $u_{C_1}, b_1, u_{C_2}, b_2,\ldots,b_{l-1},u_{C_l}$, match $u_{C_l}$ to an agent corresponding to a literal that evaluates $C_l$ to true, and iteratively further down the path to $v_{C_l}$. In similar fashion, we proceed with $u_{C_{l-1}},\ldots,u_{C_1}$. Finally, we match $u_{x_i}$ to $v_{x_1}$ and down the path to $v_{x_i}$, for each $i=k,\ldots,1$ in that order.

Note that agent $a$ and the path among agents $u_s$ and $b_h$ allow to create matching edges for all agents $u_s$ and shift them iteratively to agents in the branching, for $u_s$ with $s = x_1,\ldots,x_k,C_l,\ldots,C_1$ in that order. In particular, the construction requires that edges incident to $x_1,\ldots,x_k$ have to be created before any of $u_{C_1},\ldots,u_{C_l}$ is matched to an agent in the branching. Otherwise, a key property similar to the one in the previous proof implies that the desired matching cannot be reached. This is the main intuition in the following proof of the reverse direction.

Suppose there is a sequence to reach $M$ from the empty matching. With the exception of $u_{C_l}$, none of the agents from $u_{C_j}$ or $u_{x_i}$ are at hop-distance 2 from any agent of $W$ in the empty matching. Hence, the only way for such an agent $u_s$ to become matched is via resolution of local blocking pairs that evolve when the left neighbor $b_h$ is matched (cf.\ Fig.~\ref{fig:3satEmpty}). Furthermore, with the exception of $b_1$, none of the $b_h$ agents is at hop distance 2 from any agent of $W$ in the empty matching. Hence, in a similar fashion, such an agent can only get matched via a local blocking pair that evolves when the corresponding left neighbor $u_{s'}$ is matched. The latter local blocking pairs, however, can evolve only when $u_s$ is matched to $a$, since all other agents of $W$ rank all agents $b_h$ at the bottom of their lists. As a consequence, the only way to create a matching edge for an agent $b_h$ is by matching $a$ first to $u_{C_l}$, then to $b_1$, to $u_{C_{l-1}}$, to $b_2$, etc. Note that once an agent $b_h$ is matched to $a$, there is no local blocking pair involving $b_h$, since $a$ is the most preferred partner for every $b_h$. Hence, to have the sequence proceed towards matching the $v$-agents, we must match $a$ to some agent $u_s$, at which point local blocking pairs with some $x_i$- and $\bar{x}_i$-agents in the branching appear.

Now suppose some $u_{x_i}$ is matched to $a$, and we resolve a local blocking pair with an agent in the branching. Suppose $u_{x_{i-1}}$ is unmatched at this point, and suppose all $u_{x_{i'}}$ with $i' < i-1$ are matched. If $u_{x_i}$ does not become unmatched, there will be no local blocking pair involving $a$ and $u_{x_{i-1}}$. Thereby $a$ cannot become matched to $u_{x_{i-1}}$, and the latter will remain unmatched. Otherwise, if $u_{x_i}$ becomes unmatched, this can only happen when matched to some agent $v_{x_1},\ldots,v_{x_{i'}}$, at the point where that agent becomes matched to some $u_{x_{i'}}$ with $i' < i-1$. Then, however, it is easy to see that a similar key property as in the previous proof is violated -- $u_{x_{i-1}}$ will not be able to discover $v_{x_{i-1}}$, since for every agent $v_s$ matched to $u_{x_{i'}}$ there is no local blocking pair involving $u_{x_{i-1}}$. As such, $u_{x_{i-1}}$ will get ``stuck'' being matched to some agent on a path between $a$ and $v_{x_{i'}}$.

If we apply the above argument inductively, it reveals that we must resolve local blocking pairs with agents in the branching and $u_{x_i}$ in increasing order of $i$. The same argument shows that we have to match all $u_{x_1},\ldots,u_{x_k}$ to an agent in the branching before matching any of $u_{C_1},\ldots,u_{C_l}$ to agents in the branching. Together with the key property, this implies the desired structure. We need to match the agents $u_{x_i}$ in a way to leave unmatched at least one literal agent for each clause. Hence, a sequence to reach $M$ implies a satisfying assignment for the \ThreeSAT\ formula.
\end{proof}

In addition, \classNP-hardness also holds for reaching an arbitrary locally stable matching from $M = \emptyset$ under a restriction on the link structure. The proof of the following theorem was kindly provided by an anonymous referee.

\begin{satz}
\label{satzReviewer}
It is \classNP-hard to decide \Reach\ from the initial matching $M = \emptyset$ to an arbitrary locally stable matching in a bipartite network game with links restricted to $L \subseteq (U \times W) \cup (W \times W)$. 
\end{satz}

\begin{proof}
We define a stable matching $M$ to be \emph{complete} if all men and women are matched in $M$. Then, \SMTI\ is the problem of deciding whether a complete stable matching exists in an instance of the stable marriage problem where preference lists may be incomplete and may contain ties. This problem is \classNP-complete, even if each woman's list is strictly ordered and each man's list is either strictly ordered or is a tie of length 2~\cite{ManloveIIMM02}. We will use a reduction from this restriction of \SMTI\ to show that it is \classNP-hard to decide \Reach\ from the initial matching $M = \emptyset$ to an arbitrary locally stable matching in a bipartite network game. 

Let $I$ be an instance of the stated restriction of \SMTI, where $U$ is the set of men and $W$ is the set of women. We form an instance $J$ of \Reach\ by letting the potential matching edges $E$ be obtained from the graph underlying $I$. That is, $\{u,w\}$ forms an acceptable pair in $I$ if and only if $\{u,w\} \in E$. Initially, let $L = E$, i.e., the links comprise all potential matching edges. Each woman $w \in W$ initially has the same preference list in $J$ as in $I$. We create a circling gadget $G_w$ and identify woman $w$ with vertex $A_w$. That is, woman $w$ appends the men $3_w \succ 1_w \succ 2_w$, in that order, to her preference list in $J$. Let $U' \subseteq U$ be the set of men in $I$ whose preference list is a single tie. Each man $u \in U\setminus U'$ has the same preference list in $J$ as in $I$. Now let $u \in U'$ and suppose that $u$ is indifferent between $w_1$ and $w_2$ in $I$. Remove $\{u,w_1,\}$ and $\{u,w_2\}$ from $L$. Create two new women $a_u$ and $b_u$ in $J$. Add $\{u,a_u\}$ and $\{u,b_u\}$ to $E$, and add $\{a_u,w_1\}$, $\{a_u,w_2\}$, $\{a_u,b_u\}$ and $\{u,b_u\}$ to $L$. In $J$, man $u$ ranks $w_1 \succ w_2 \succ a_u \succ b_u$ in that order.
It may be verified that $I$ has a complete stable matching if and only if $J$ has a sequence to a locally stable matching from the empty initial matching. Suppose that $I$ has a complete stable matching $M$. We will show a path to a locally stable matching $M' \in J$. Pick any man $u \in U \setminus U'$. Then we may simply add the edge $\{u,M(u)\}$ to $M'$, where $M(u)$ denotes $u$'s partner in $M$. This is possible since $\{u,M(u)\} \in E \cap L$. Now let $u \in U'$. Then we add $\{u,b_u\}$ to $M'$, since $\{u,b_u\} \in E \cap L$. Then $u$ moves to $a_u$ in $M'$, since $\{a_u,b_u\} \in L$. Next $u$ moves to $M(u)$ in $M'$, since $\{M(u), a_u\} \in L$. All women in $W$ are now matched in $M'$. Therefore we can now let all circling gadgets stabilise in $M'$, so that the final matching $M'$ is locally stable in $J$.

Conversely suppose that $J$ has a sequence to a locally stable matching $M'$ starting from $M = \emptyset$. Since the circling gadgets must stabilize, every woman in $W$ is matched in $M'$ to a man in $U$ outside of the circling gadget. Thus $M = M' \cap (U \times W)$ is a complete stable matching in $I$.
\end{proof}

The following corollary provides a slightly weaker result than Theorem~\ref{satzReviewer}. It proves hardness \emph{either} under the condition $L \subseteq (U \times W) \cup (W \times W)$ \emph{or} the condition $M = \emptyset$. We include it since the proof construction is useful below to show additional hardness results for quality memory.

\begin{fol}
\label{satzNPC}
It is \classNP-hard to decide \Reach\ to an arbitrary locally stable matching in a bipartite network game with links restricted to $L \subseteq (U \times W) \cup (W \times W)$. The same result holds for the initial matching $M = \emptyset$.
\end{fol}
\begin{proof}
We adapt the previous constructions of Theorem~\ref{satz6} and~\ref{satz1}. Instead of setting a specific matching $M$ to converge to, we make use of circling gadgets. For every variable and every clause we use one separate circling gadget. We identify agents $v_{x_i}$ and $v_{C_j}$ in our previous constructions with agent $A$ of the corresponding circling gadget. In Table~\ref{tab:3satGeneral} we show how to adjust the preference of the main structure from Theorem~\ref{satz6}, which shows the result starting from $M_0 = \emptyset$. The adjustment of the main structure of Theorem~\ref{satz1} is very similar. Note that in the latter case we get $U=\{ u_s, 1_s, 2_s, 3_s, \mid s \in \{x_1,\ldots,x_k,C_1\ldots,C_l\}\}$, and hence this shows the result for $L \subseteq (U \times W) \cup (W \times W)$.
\begin{table}
\begin{center}
\begin{tabular}{|l|l|l|}\hline
Agent & Preference List &\\\hline\hline
$u_{x_i} $ & $ v_{x_i}\succ v_{x_{i-1}}\succ \ldots\succ v_{x_1}\succ x_i\succ \overline{x}_i\succ a \succ [\ldots] $ &$i \in [k]$ \\\hline
$u_{C_j} $ & $v_{C_j}\succ \ldots \succ v_{C_1}\succ v_{x_k}\succ \ldots \succ v_{x_1}\succ l_{1j}\succ l_{2j}\succ l_{3j}\succ a 
\succ [\ldots]$ & $j \in[l]$ \\\hline
$b_h$      & $a \succ [\ldots]$ & $h \in [k+l-1]$ \\\hline
$a$        & $u_{x_k}\succ b_{l+k-1} \succ u_{x_{k-1}}\succ \ldots \succ b_1 \succ u_{C_1} \succ [\ldots]$ &\\\hline
$v_{x_i} $ & $u_{C_l}\succ \ldots \succ u_{C_1} \succ u_{x_k} \succ \ldots \succ u_{x_i}\succ 3_{C_j} \succ 1_{C_j} \succ 2_{C_j} \succ [\ldots]$ & $i \in [k]$ \\\hline
$v_{C_j} $ & $u_{C_l}\succ \ldots \succ u_{C_{j+1}}\succ u_{C_j}\succ 3_{x_i} \succ 1_{x_i} \succ 2_{x_i} \succ [\ldots]$ &$j \in [l]$ \\\hline
$x_i/\overline{x}_i $ & $u_{x_i}\succ u_{C_1}\succ u_{C_2}\succ \ldots \succ u_{C_l} \succ [\ldots]$ &$i \in [k]$ \\\hline
\end{tabular}
\end{center}
\caption{\label{tab:3satGeneral} Preference lists for agents of the main structure of Corollary~\ref{satzNPC} when adding circling gadgets to the hardness instances of Theorem~\ref{satz6}. $[\ldots]$ indicates that all other remaining agents from the other partition follow in arbitrary order. The preference list of $a_1$ can be given arbitrarily.}
\end{table}

For correctness, it suffices to reason why matching edges $\{\{u_s,v_s\}\mid s\in\{x_1,\ldots,x_k,C_1,\ldots,C_l\}\}$ are required to exist in every reachable locally stable matching. Then correctness follows using the arguments of the previous proofs. Assume there exists a reachable locally stable matching with some $v_s$ not matched to any of the $u$-agents. For this $s$ the circling gadget will not stabilize. $v_{C_l}$ must be matched to $u_{C_l}$ in order to stabilize the circling gadget for $C_l$, which leaves only $u_{C_{l-1}}$ for $v_{C_{l-1}}$ etc. Inductively, all the edges $\{\{u_s,v_s\} \mid  s\in\{x_1,\ldots,x_k,C_1,\ldots,C_l\}\}$ must be present in every reachable locally stable matching.
\end{proof}


\subsection{Length of Sequences}
\label{sec:length}
For classic two-sided stable matching without locality constraints, from every initial state a stable matching can be reached in a polynomial number of steps~\cite{RothVV90}. With the restriction to local blocking pairs, we know that locally stable matchings exist, but the circling gadget shows that there are initial states from which none of the locally stable matchings is reachable. A weaker condition one might hope for is that once a state (locally stable or not) is reachable, it can also be reached via a short path of polynomially many steps. We show in Theorem~\ref{satz3} below that this condition holds for instances with correlated preferences, even beyond bipartite games in the roommates case. For uncorrelated strict preferences, however, we provide in Corollary~\ref{satz4} a class of instances with initial states such that every path to stability takes an exponential number of steps. This raises the question as to whether \Reach\ is even in \classNP.

\begin{satz}
\label{satz3}
For every network game with correlated preferences, every (locally stable) matching $M^* \subseteq E$ and initial matching $M_0 \subseteq E$ such that $M^*$ can be reached from $M_0$ through local improvement steps, there exists a sequence of at most $O(|E|^3)$ local improvement steps leading from $M_0$ to $M^*$.
\end{satz}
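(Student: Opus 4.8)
The plan is to install the edges of $M^*$ one at a time in order of \emph{decreasing} benefit, using the hypothesis that $M^*$ is reachable only to certify accessibility at the right moments, and to control the number of steps by a lexicographic potential.

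First I would set up the potential. For a matching $M$ let $\Phi(M)$ be the vector of benefits $b(e)$, $e\in M$, sorted in non-increasing order and compared lexicographically. A local improvement step resolving a blocking pair $e$ deletes at most two incident edges, each of strictly smaller benefit than $b(e)$, and inserts $e$; hence $\Phi$ strictly increases with every step. From this I would derive a \emph{no-overshoot} fact: in any sequence of local improvement steps ending in $M^*$, no edge heavier than $b(e_1^*)$ (the heaviest edge of $M^*$) is ever created, since the heaviest edge ever produced can never be deleted---deletion requires a strictly heavier incident edge---and would therefore survive into $M^*$, forcing it to lie in $M^*$. The same reasoning, localized to the endpoints of an edge, shows that once a target edge is placed it can only be removed by a strictly heavier incident edge; so a target edge installed in decreasing-benefit order is never destroyed merely by creating a later, lighter target edge.

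Next, the inductive installation. Suppose the $j$ heaviest target edges $e_1^*,\dots,e_j^*$ are already present in the current matching $M$ and $M^*$ is still reachable from $M$. To install $e_{j+1}^*=\{u,w\}$, I would take any sequence $\sigma$ witnessing reachability of $M^*$ from $M$ and inspect the moment $e_{j+1}^*$ is created for the last time in $\sigma$: at that instant $u$ and $w$ are accessible (joined by a distance-$2$ path in $L\cup M'$ for the matching $M'$ at that step) and both are unmatched or matched below $b(e_{j+1}^*)$. I would then extract a \emph{minimal enabling subsequence}: only those moves of $\sigma$ actually needed to free $u$ and $w$ and to realize the at most one matching edge on the connecting distance-$2$ path, discarding all moves irrelevant to these goals. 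Since at most one auxiliary matching edge is required for accessibility, and each such auxiliary edge is itself enabled in the same recursive fashion, the enabling subsequence can be taken of length $O(|E|^2)$; I then append the resolution of $\{u,w\}$. Using the no-overshoot fact I argue these enabling moves never need to create an edge heavier than $b(e_j^*)$ incident to an already-installed edge, so $e_1^*,\dots,e_j^*$ survive, and $M^*$ remains reachable from the new matching (the unused portion of $\sigma$ still leads to $M^*$).

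Summing over the at most $|E|$ target edges yields $O(|E|)\cdot O(|E|^2)=O(|E|^3)$ local improvement steps, the claimed bound. The main obstacle I expect is the interaction in the inductive step between \emph{accessibility} and \emph{non-destruction}: establishing a distance-$2$ path to $\{u,w\}$ may require temporarily installing auxiliary matching edges, and one must guarantee both that these can be created within the benefit budget dictated by the potential---so no already-placed heavier target edge is knocked out---and that recreating them does not inflate the step count. Making the minimal-enabling-subsequence argument precise, i.e.\ bounding its length and showing it respects $\Phi$, is where the real work lies; the monotonicity of $\Phi$ together with the no-overshoot fact is the tool I would rely on to tame it.
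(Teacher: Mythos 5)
Your proposal has a genuine structural flaw: installing the target edges in \emph{decreasing} order of benefit can destroy reachability of $M^*$. The enabling predecessor of a target edge $e^*=\{u,w\}$ is some strictly lighter matching edge $\{u,z\}$ (or $\{w,z\}$) that gets consumed when $e^*$ is created, and the vertex $z$ may well be an endpoint of a \emph{heavier} target edge $f^*=\{z,y\}$. Since $b(\{u,z\})<b(e^*)\le b(f^*)$, the edge $\{u,z\}$ can never be created while $f^*$ is present ($z$ would have to abandon a strictly better partner), so once you have installed $f^*$ the predecessor chain of $e^*$ is blocked and $M^*$ is no longer reachable. A three-edge example already exhibits this from $M_0=\emptyset$: the only witnessing sequences create $\{u,z\}$, move it to $\{u,w\}$, and only then create $\{z,y\}$. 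So the induction fails at its base, independently of the unproven $O(|E|^2)$ bound. A second gap is the claim that after executing only your extracted ``minimal enabling subsequence'' the unused portion of $\sigma$ still leads to $M^*$: discarding moves changes every subsequent intermediate matching, and both accessibility (distance $2$ in $L\cup M$) and the blocking-pair condition are state-dependent, so the remaining moves of $\sigma$ need not be valid local improvement steps from your modified state. Finally, you correctly identify the $O(|E|^2)$ length bound as ``where the real work lies,'' but that is precisely the content of the theorem, so the proposal leaves the quantitative heart unproved.

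The paper avoids reordering altogether. It takes an arbitrary witnessing sequence, keeps its order, and only argues that certain steps can be \emph{omitted}: (i) each created edge has at most one enabling predecessor, of strictly smaller benefit, so predecessor/move chains have length at most $r_{\max}\le|E|$; (ii) an edge that is created, moved, and deleted without ever deleting another edge was never needed; (iii) each edge of $M_0$ that must disappear spawns a deletion chain of length at most $r_{\max}$, each member of which moves at most $r_{\max}$ times. This yields $|M_0|\cdot r_{\max}^2+|M^*|\cdot r_{\max}=O(|E|^3)$ without ever having to certify that a permuted sequence remains feasible. If you want to salvage your potential-function viewpoint, you would have to process target edges in the order their \emph{creation times} occur in $\sigma$, not in benefit order, and replace the surgery on $\sigma$ by a pruning argument of the paper's type.
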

\begin{proof}
Consider an arbitrary sequence between $M_0$ and $M^*$. We will show that only a polynomial part of it is necessary by omitting a possibly large number of intermediate steps. We rank all edges by their benefit (allowing multiple edges to have the same rank) such that $r(e)>r(e')$ iff $b(e)>b(e')$. Further, we set $r_{max} = \max\{r(e)\mid e\in E\}$. First, recall the structure of the path of length at most 2 for accessible pairs discussed in Section~\ref{sec:model}. As a consequence, every edge $e$ formed at any point during the sequence can be assigned to have at most one \emph{direct predecessor} $e'$ in the sequence, which was necessary for $e$ to become accessible and get formed. Since $e$ and $e'$ must share a common agent, $e'$ must get removed once $e$ is formed. Hence, we call $e$ the \emph{direct successor} of $e'$. Also, note that being a local blocking pair implies $r(e') < r(e)$. Thus, every $e \in M^*$ has at most $r_{max}$ predecessors and every $e \in M_0$ at most $r_{max}$ successors. We will say that upon formation $e$ \emph{removes} its predecessor $e'$. Note that the formation of $e$ might also lead to removal of a second matching edge, which is not (assigned as) the predecessor of $e$. The remaining proof is based on three crucial observations:
\begin{enumerate}
\item[(1)] Consider an edge $e$ that has no predecessor -- that is, either $e \in M_0$, or $e \not\in M_0$ with both agents connected via a path of at most two links. For every such $e$ there is a \emph{chain of edges} resulting from direct successor/predecessor relations. We say $e$ \emph{starts a chain}.
\item[(2)] Consider $e \not\in M_0$ that starts a chain. If the final edge of the chain is not in $M^*$ and none of the edge formations of the chain removed any non-predecessor edge, the chain does not contribute in any way to the outcome and can be omitted completely from the sequence.
\item[(3)] An edge can only be removed when forming a more preferred one. Hence, every chain is limited in length by $r_{max}$.
\end{enumerate}
If $M_0 = \emptyset$, every edge that is created in the sequence is part of a chain that starts with $e \not\in M_0$. Now, by (inductive application of) (2) we can restrict to creation of the chains with final edges in $M$. Thus, there are $|M^*|$ chains, and by (3) each of them has length at most $r_{max}$. Thus, there is a sequence to $M$ of length at most $|M^*| \cdot r_{max} \in O(|E|^2)$.

If $M_0$ is arbitrary, every edge in $M_0$ starts a chain. Consider such a chain $C$ where the final edge is not in $M$. Then, the final edge of $C$ is removed by an edge that part of exactly one other chain $C'$. By (2), this can justify the existence of at most one other chain $C'$. Note that the final edge of $C'$ must have a strictly higher rank than the final edge of $C$. Hence, after repeating this argument at most $r_{max}$ many times, we arrive at a chain with a final edge in $M$. Each of these $r_{max}$ many chains might need $r_{max}$ many steps to arrive at its final edge. Thus, each edge in $M_0$ justifies the existence of at most $r_{max}$ many chains of length $r_{max}$ each. For all other edges of $M$, which cannot be traced back to a chain starting from $M_0$, one can apply the previous arguments for the empty matching. Thus, there is a sequence to $M$ of length at most $|M_0|\cdot r_{max}\cdot r_{max}+|M^*|\cdot r_{max}\in O(|E|^3)$.
\end{proof}

\begin{satz}
\label{satzExp}
There is a bipartite network game with general preferences such that a locally stable matching can be reached by a sequence of local improvement steps from the initial matching $M = \emptyset$, but every such sequence has length $2^{\Omega(|V|)}$.
\end{satz}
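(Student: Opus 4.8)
The plan is to build a cascading chain of $n$ gadgets that together behave like an $n$-bit binary counter, so that the unique reachable locally stable matching corresponds to the counter's maximal value and can only be reached after $2^{\Omega(n)}$ increments. Since each gadget uses a constant number of players, $|V| = O(n)$ and the claimed bound $2^{\Omega(|V|)}$ follows. General preferences are essential here precisely because of the previous theorem: with correlated preferences a potential function bounds every sequence by $O(|E|^3)$, so the construction must exploit preference cycles that admit no potential, making long induced paths through configuration space unavoidable.

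Concretely, I would introduce for each level $i=1,\ldots,n$ a \emph{carry edge} $e_i$ belonging to some vertex $u_i\in U$, which must travel along a link-path from a start position to a designated end position. The key is to gate this path by accessibility: $e_i$ may pass a checkpoint only when the lower edge $e_{i-1}$ sits at a specific helper vertex that places the relevant endpoints at distance $2$ in $G=(V,L\cup M)$. I would arrange two checkpoints for $e_i$ so that advancing it forces $e_{i-1}$ first into a "forward" helper position and then, before $e_i$ can advance again, back into a "reset" position; moving $e_{i-1}$ between these positions in turn requires $e_{i-2}$ to complete a full forward-and-reset cycle, and so on. This yields the recurrence $T(n)=2T(n-1)+O(1)$ for the number of local improvement steps needed to drive $e_n$ to its end position, hence $T(n)=2^{\Omega(n)}$. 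Preferences are set so that at each point the only utility-improving local move is the next one dictated by the counter, keeping the dynamics on the intended path.

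For reachability (the easy direction) I would exhibit the canonical counting sequence that performs exactly these cascading moves and terminates with every $e_i$ at its end position, then verify by inspection that the resulting state is a locally stable matching. To make a locally stable matching exist \emph{only} upon completion, I attach a circling gadget (Section~\ref{sec:gadgets}) at the end position of the top edge $e_n$, identifying the forced vertex $1$ with $e_n$'s target: as established there, this gadget stabilizes precisely when $e_n$ has arrived and otherwise circles forever. Thus every locally stable matching forces $e_n$ to its end position, so any stabilizing sequence must drive the counter to its maximal value.

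The main obstacle is the lower bound: showing that \emph{every} sequence, not merely the canonical one, is exponentially long. For this I would define an invariant assigning to each reachable configuration an $n$-bit value recording how far each $e_i$ has progressed, and prove that every local improvement step changes this value by a controlled amount, while the accessibility gating prevents any "chord" that would let a higher edge advance without the lower cascade having first completed and reset. The delicate points are (i) ruling out shortcuts in which a gadget is left partially advanced---handled by making such states either non-stable or unable to emit the accessibility needed downstream---and (ii) controlling the interaction between forward progress and resets, so that the counter can neither decrement nor skip values in a way that shortens the path. Once progress is shown to be essentially monotone in the counter value and each value is shown to be necessarily realized, the exponential lower bound on the number of steps to reach the forced stable matching follows.
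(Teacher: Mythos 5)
Your high-level plan is essentially the paper's: a chain of $n$ constant-size gadgets, an accessibility-gated doubling recurrence in which one advance at level $i$ forces a full forward-and-reset cycle at level $i-1$, and circling gadgets to force the final configuration. The paper realizes this with ``rotating gadgets'' each holding two edges that must both detour through the next gadget to let anything pass, which gives exactly your recurrence $T(i)=2T(i-1)+O(1)$.

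There are, however, two concrete gaps. First, the initial matching is empty, so every edge in your counter must be \emph{created} at some point, and an adversarial sequence is free to create edges wherever a local blocking pair exists --- including directly at or near $e_n$'s target, or to re-create an edge it previously deleted. You never say where edges can enter the system or why creation cannot shortcut the cascade. The paper devotes a separate generation/activation gadget to this: a single vertex $A_0$ is the only place new edges for the rotating gadgets can appear, and the ``activation edge'' that starts the cascade permanently occupies $A_0$, so nothing can be generated or regenerated afterwards; it also argues each gadget can hold at most two edges because one of three shared vertices must stay free for movement. Second, you attach a circling gadget only at $e_n$'s target, so nothing forces the intermediate carry edges $e_1,\ldots,e_{n-1}$ to exist in a reachable locally stable matching, nor to sit in positions where they actually obstruct $e_n$. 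A sequence could simply decline to create them (or park them harmlessly) unless your gating makes their presence a prerequisite for $e_n$'s very first advance \emph{and} something pins them down at the end; the paper handles this by making \emph{every} level's end-positions forced via its own circling gadgets, so that all $2n$ obstructing edges must be present and correctly placed in any reachable locally stable matching. Without these two ingredients your invariant/monotonicity argument has nothing to bite on, because the configuration space is not confined to counter states.
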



\begin{proof}
We compose the instance of several intertwined gadgets. Our instance consists of $n=\Omega(|V|)$ entangled gadgets $1,\ldots,n$, each of constant size, where w.l.o.g.\ let $n$ be an even number. For clarity we list and analyze them as separately as possible, see below. For a schematic overview of the network construction, see Fig.~\ref{fig:Exponential}. 

\begin{table}
\setlength{\tabcolsep}{2pt}
\begin{tabular}{p{.5\linewidth}p{.5\linewidth}}
\textbf{Generation Gadget:} &
\multirow{8}{*}{
\begin{tabular}[c]{|l|l|}\hline
Agent & Preference List \\\hline\hline
\multirow{2}{*}{$A_0^{o}$} & $End_0^{o} \succ F_1\succ 2_1\succ E_1 \succ 1_1$ \\
&$\succ D_1 \succ D_3\succ \ldots\succ D_{n-1}$\\\hline
\multirow{2}{*}{$A_0^{e}$} & $End_0^{e} \succ F_2\succ 2_2\succ E_2 \succ 1_2$ \\
&$\succ D_2 \succ D_4\succ \ldots\succ D_n$\\\hline 
$b_0 $ & $- $ \\\hline
$End_0^{o} $ & $A_0^{o} $ \\\hline
$End_0^{e} $ & $A_0^{e} $ \\\hline
\end{tabular}}\\
Agents:&\\
$A_0^{o}$, $A_0^{e}$, $b_0^o$, $End_0^{o}$, $End_0^{e}$ &\\
Links: & \\
$\{\{A_0^{o},C_i\} \mid i \text{ odd}\}, \{\{A_0^{e},C_i\} \mid i \text{ even}\}$, &\\
$\{\{b_0,D_i\} \mid i \in[n]\}$, $\{A_0^{o},b_0\}$, $\{A_0^{e},b_0\}$, &\\
$\{F_1,End_0^{o}\}$, $\{F_2,End_0^{e}\}$, $\{A_0^{o},End_0^{o}\}$,&\\
$\{A_0^{e},End_0^{e}\}$ &\\
Matching Edges: &\\
$\{\{ A_0^{o},D_i\} \mid i \text{ odd}\}, \{\{ A_0^{e},D_i\} \mid i \text{ even}\}$, &\\
$\{A_0^{o},1_1\}$, $\{A_0^{o},E_1\},\{A_0^{o},2_1\},\{A_0^{o},F_1\},$ & \\
$\{A_0^{e},1_2\}$, $\{A_0^{e},E_2\},\{A_0^{e},2_2\},\{A_0^{e},F_2\},$ & \\
$\{A_0^{o},End_0^{o}\}$, $\{A_0^{e},End_0^{e}\}$ &\\
\end{tabular}
\end{table}

\begin{table}
\setlength{\tabcolsep}{2pt}
\begin{tabular}{p{.5\linewidth}p{.5\linewidth}}
\textbf{Rotation Gadget $i=1$:}&\multirow{15}{*}{
\begin{tabular}{|l|l|}\hline
Agent & Preference List \\\hline\hline
$A_1 $ & $F_1\succ E_1 $ \\\hline
$B_1 $ & $E_1\succ D_1 $ \\\hline
$C_1 $ & $D_1\succ F_1 $ \\\hline
\multirow{3}{*}{$D_1$}  & $End2_1 \succ B_1 \succ T^B_2$\\
& $\succ F_{2}\succ 2_{2}\succ E_{2} \succ 1_{2}\succ D_{2}$\\
& $\succ S^C_2 \succ C_1\succ A_0^{o} $ \\\hline
$E_i $ & $A_1\succ B_1\succ A_0^o $ \\\hline
\multirow{3}{*}{$F_1$}  & $End1_1\succ C_1 \succ T^C_2$\\
& $\succ F_{2}\succ 2_{2}\succ E_{2} \succ 1_{2}\succ D_{2}$\\
& $\succ S^A_2 \succ A_1\succ A_0^{o} $ \\\hline
$1_1, 2_1 $ & $A_0^{o} $ \\\hline
$End1_1 $ & $F_1 $ \\\hline
$End2_1 $ & $D_1 $ \\\hline
\end{tabular}}\\
Agents:&\\
$A_1,B_1,C_1,D_1,E_1,F_1, 1_1,2_1, End1_1, End2_1$ &\\
Links: & \\
$\{A_1,B_1\},\{D_1,E_1\},\{E_1,F_1\},\{F_1,D_1\},$&\\
$\{D_1,1_1\},\{1_1,E_1\},\{E_1,2_1\},\{2_1,F_1\},$&\\
$\{A_1,End1_1\},\{C_1,End2_1\}$ &\\
Matching Edges: &\\
$\{A_1,E_1\}, \{A_1,F_1\}, \{B_1,D_1\}, \{B_1,E_1\},$&\\$\{C_1,F_1\},\{C_1,D_1\},\{F_1,End1_1\},\{D_1,End2_1\}$ &\\
&\\
&\\
&\\
&\\
\end{tabular}
\end{table}

\begin{table}
\setlength{\tabcolsep}{1pt}
\begin{tabular}{p{.5\linewidth}p{.5\linewidth}}
\textbf{Rotation Gadget $i=2$:} &\multirow{15}{*}{
\begin{tabular}{|l|l|}\hline
Agent & Preference List \\\hline\hline
$A_2 $ & $F_2\succ E_2 $ \\\hline
$B_2 $ & $E_2\succ D_2 $ \\\hline
$C_2 $ & $D_2\succ F_2 $ \\\hline
\multirow{3}{*}{$D_2$}  & $End2_2 \succ B_2 \succ T^B_3$ \\
& $\succ F_3 \succ 2_3 \succ E_3 \succ 1_3\succ D_3$\\
& $\succ S^C_3 \succ C_2 \succ D_1 \succ F_1 \succ A_0^{e}$ \\\hline
$E_2 $ & $A_2\succ B_2 \succ D_1 \succ F_1 \succ A_0^{e}$ \\\hline
\multirow{2}{*}{$F_2$} & $End1_2 \succ C_2 \succ T^C_3$\\
& $\succ F_3\succ 2_3 \succ E_3 \succ 1_3 \succ D_3 $\\
& $\succ S^A_3 \succ A_2\succ D_1 \succ F_1 \succ A_0^{e}$ \\\hline
$1_2, 2_2 $ & $D_1 \succ F_1 \succ A_0^{e}$ \\\hline
$S^C_2, T^C_2$ & $F_1$ \\\hline
$S^A_2, T^B_2$ & $D_1$ \\\hline
$End1_2 $ & $F_2 $ \\\hline
$End2_2 $ & $D_2 $ \\\hline
\end{tabular}}\\
Agents, Links, and Matching Edges similar &\\
as in all following gadgets $i=3,\ldots,n-1$ &\\
&\\
&\\
&\\
&\\
&\\
&\\
&\\
&\\
&\\
\end{tabular}
\end{table}

\begin{table}
\setlength{\tabcolsep}{1pt}
\begin{tabular}{p{.5\linewidth}p{.5\linewidth}}
\textbf{Rotation Gadget $i=3,\ldots,n-1$:} &\multirow{11}{*}{
\begin{tabular}{|l|l|}\hline
Agent & Preference List \\\hline\hline
$A_i $ & $F_i\succ E_i $ \\\hline
$B_i $ & $E_i\succ D_i $ \\\hline
$C_i $ & $D_i\succ F_i $ \\\hline
\multirow{3}{*}{$D_i$}  & $End2_i\succ B_i \succ T^B_{i+1}$\\
& $\succ F_{i+1}\succ 2_{i+1}\succ E_{i+1} \succ 1_{i+1}\succ D_{i+1}$\\
& $\succ S^C_{i+1} \succ C_i\succ D_{i-1}\succ F_{i-1}\succ A_0^{o/e}$ \\\hline
$E_i $ & $A_i\succ B_i\succ D_{i-1}\succ F_{i-1} $ \\\hline
\multirow{2}{*}{$F_i$} & $End1_i\succ C_i \succ T^C_{i+1}$\\
& $\succ F_{i+1}\succ 2_{i+1}\succ E_{i+1} \succ 1_{i+1}\succ D_{i+1}$\\
& $\succ S^A_{i+1} \succ A_i\succ D_{i-1}\succ F_{i-1}$ \\\hline
$1_i, 2_i $ & $D_{i-1}\succ F_{i-1} $ \\\hline
$S^C_i, T^C_i$ & $F_{i-1}$ \\\hline
$S^A_i, T^B_i$ & $D_{i-1}$ \\\hline
$End1_i $ & $F_i $ \\\hline
$End2_i $ & $D_i $ \\\hline
\end{tabular}}\\
Agents:&\\
$A_i,B_i,C_i,D_i,E_i,F_i,1_i, 2_i, S^A_i, S^C_i$,&\\
$T^B_i, T^C_i, End1_i, End2_i$ &\\
Links: & \\
$\{A_i,B_i\},$ $\{D_i, E_i\},$ $\{E_i,F_i\},$ $\{F_i, D_i\},$& \\ 
$\{D_i,1_i\},$ $\{1_i,E_i\},$ $\{E_i,2_i\},$ $\{2_i,F_i\},$& \\ 
$\{A_i, End1_i\},$ $\{C_i,End2_i\},$ $\{D_i,S^A_i\}$, $\{S^A_i,A_{i-1}\},$& \\ 
$\{D_i,S^C_i\}, \{S^C_i,C_{i-1}\},$ $\{F_i,T^B_i\}$, $\{T^B_i, B_{i-1}\},$&\\
$\{F_i,T^C_i\}, \{T^C_i,C_{i-1}\}$&\\
Matching Edges: &\\
$\{A_i,E_i\},$ $\{A_i,F_i\},$ $\{B_i,D_i\},$ $\{B_i,E_i\},$& \\ $\{C_i,F_i\},$ $\{C_i,D_i\},$ $\{F_i, End1_i\},$ $\{D_i,End2_i\},$& \\ $\{D_i,D_{i-1}\},$ $\{1_i,D_{i-1}\},$ $\{E_i,D_{i-1}\},$ $\{2_i,D_{i-1}\},$& \\ $\{F_i,D_{i-1}\},$ $\{D_i,F_{i-1}\},$ $\{1_i,F_{i-1}\},$&\\ $\{E_i,F_{i-1}\},$ $\{2_i,F_{i-1}\},$ $\{F_i,F_{i-1}\}$ &\\
\end{tabular}
\end{table}

\begin{table}
\setlength{\tabcolsep}{2pt}
\begin{tabular}{p{.5\linewidth}p{.5\linewidth}}
\textbf{Rotation Gadget $i=n$:}
&\multirow{14}{*}{
\begin{tabular}{|l|l|}\hline
Agent & Preference List \\\hline\hline
$A_n $ & $F_n\succ E_n $ \\\hline
$B_n $ & $E_n\succ D_n $ \\\hline
$C_n $ & $D_n\succ F_n $ \\\hline
$D_n$  & $End2_n\succ B_n\succ C_n\succ D_{n-1}\succ F_{n-1}\succ A_0^{e} $ \\\hline
$E_n $ & $A_n\succ B_n\succ D_{n-1}\succ F_{n-1}$ \\\hline
$F_n$  & $End1_n\succ C_n\succ A_n\succ D_{n-1}\succ F_{n-1}$ \\\hline
$1_n, 2_n $ & $D_{n-1}\succ F_{n-1} $ \\\hline
$S^C_n, T^C_n$ & $F_{n-1}$ \\\hline
$S^A_n, T^B_n$ & $D_{n-1}$ \\\hline
$End1_n $ & $F_n $ \\\hline
$End2_n $ & $D_n $ \\\hline
\end{tabular}}\\
Agents: &\\
$A_n,B_n,C_n,D_n,E_n,F_n, 1_n,2_n, S^A_n, S^C_n$,&\\
$T^B_n, T^C_n, End1_n, End2_n$&\\
Links: & \\
$\{A_n,B_n\},$ $\{A_n,C_n\},$ $\{B_n,C_n\},$ $\{D_n, E_n\},$&\\ $\{E_n,F_n\},$ $\{F_n, D_n\},$  $\{D_n,1_n\},$ $\{1_n,E_n\},$&\\ $\{E_n,2_n\},$ $\{2_n,F_n\},$ $\{A_n, End1_n\},$ $\{C_n,End2_n\},$&\\ $\{D_n,A_{n-1}\},$ $\{D_n,C_{n-1}\},$ $\{F_n,B_{n-1}\},$ $\{F_n,C_{n-1}\}$ &\\
Matching Edges: &\\
$\{A_n,E_n\},$ $\{A_n,F_n\},$ $\{B_n,D_n\},$ $\{B_n,E_n\},$&\\ 
$\{C_n,F_n\},$ $\{C_n,D_n\},$ $\{F_n, End1_n\},$ $\{D_n,End2_n\},$&\\ 
$\{D_n,D_{n-1}\}$, $\{1_n,D_{n-1}\}$, $\{E_n,D_{n-1}\}$,&\\ 
$\{2_n,D_{n-1}\}$, $\{F_n,D_{n-1}\},$ $\{D_n,F_{n-1}\},$ $\{1_n,F_{n-1}\},$&\\ 
$\{E_n,F_{n-1}\},$ $\{2_n,F_{n-1}\},$ $\{F_n,F_{n-1}\}$ &\\
\end{tabular}
\end{table}

\begin{figure}
\begin{center}
\begin{tikzpicture}[thick,scale=0.43, every node/.style={transform shape}]
\tikzstyle{vertex}=[circle,draw,minimum size=5pt,scale=1]

\node[vertex] (End0o) at (0,0) {$End_0^o$};  
\node[vertex] (F1) at (4,0) {$F_1$};
\node[vertex] (21) at (7,0) {$2_1$};
\node[vertex] (E1) at (10,0) {$E_1$};
\node[vertex] (11) at (13,0) {$1_1$};
\node[vertex] (D1) at (16,0) {$D_1$};

\node (aux11) at (15,-1) {};
\node (aux21) at (11,-1) {};
\node (aux31) at (9,-1) {};
\node (aux41) at (5,-1) {};
\node (aux51) at (4.5,-1.5) {};
\node (aux61) at (15.5,-1.5) {};

\node[vertex] (End11) at (0,4) {$End1_1$};
\node[vertex] (A1) at (7,4) {$A_1$};
\node[vertex] (B1) at (10,4) {$B_1$};
\node[vertex] (C1) at (13,4) {$C_1$};
\node[vertex] (End21) at (20,4) {$End2_1$};

\node[vertex] (TC2) at (2,10) {$T^C_2$};
\node[vertex] (TB2) at (2,8) {$T^B_2$};
\node[vertex] (SA2) at (18,10) {$S^A_2$};
\node[vertex] (SC2) at (18,8) {$S^C_2$};

\node[vertex] (End0e) at (0,12) {$End_0^e$};  
\node[vertex] (F2) at (4,12) {$F_2$};
\node[vertex] (22) at (7,12) {$2_2$};
\node[vertex] (E2) at (10,12) {$E_2$};
\node[vertex] (12) at (13,12) {$1_2$};
\node[vertex] (D2) at (16,12) {$D_2$};

\node (aux12) at (15,11) {};
\node (aux22) at (11,11) {};
\node (aux32) at (9,11) {};
\node (aux42) at (5,11) {};
\node (aux52) at (4.5,10.5) {};
\node (aux62) at (15.5,10.5) {};

\node[vertex] (End12) at (0,16) {$End1_2$};
\node[vertex] (A2) at (7,16) {$A_2$};
\node[vertex] (B2) at (10,16) {$B_2$};
\node[vertex] (C2) at (13,16) {$C_2$};
\node[vertex] (End22) at (20,16) {$End2_2$};

\node[vertex] (TCn1) at (2,24) {$T^C_{n-1}$};
\node[vertex] (TBn1) at (2,22) {$T^B_{n-1}$};
\node[vertex] (SAn1) at (18,24) {$S^A_{n-1}$};
\node[vertex] (SCn1) at (18,22) {$S^C_{n-1}$};

\node[vertex] (Fn1) at (4,26) {$F_{n-1}$};
\node[vertex] (2n1) at (7,26) {$2_{n-1}$};
\node[vertex] (En1) at (10,26) {$E_{n-1}$};
\node[vertex] (1n1) at (13,26) {$1_{n-1}$};
\node[vertex] (Dn1) at (16,26) {$D_{n-1}$};

\node (aux1n1) at (15,25) {};
\node (aux2n1) at (11,25) {};
\node (aux3n1) at (9,25) {};
\node (aux4n1) at (5,25) {};
\node (aux5n1) at (4.5,24.5) {};
\node (aux6n1) at (15.5,24.5) {};

\node[vertex] (End1n1) at (0,30) {$End1_{n-1}$};
\node[vertex] (An1) at (7,30) {$A_{n-1}$};
\node[vertex] (Bn1) at (10,30) {$B_{n-1}$};
\node[vertex] (Cn1) at (13,30) {$C_{n-1}$};
\node[vertex] (End2n1) at (20,30) {$End2_{n-1}$};

\node[vertex] (TCn) at (2,36) {$T^C_n$};
\node[vertex] (TBn) at (2,34) {$T^B_n$};
\node[vertex] (SAn) at (18,36) {$S^A_n$};
\node[vertex] (SCn) at (18,34) {$S^C_n$};

\node[vertex] (Fn) at (4,38) {$F_n$};
\node[vertex] (2n) at (7,38) {$2_n$};
\node[vertex] (En) at (10,38) {$E_n$};
\node[vertex] (1n) at (13,38) {$1_n$};
\node[vertex] (Dn) at (16,38) {$D_n$};

\node (aux1n) at (15,37) {};
\node (aux2n) at (11,37) {};
\node (aux3n) at (9,37) {};
\node (aux4n) at (5,37) {};
\node (aux5n) at (4.5,36.5) {};
\node (aux6n) at (15.5,36.5) {};

\node[vertex] (End1n) at (0,42) {$End1_n$};
\node[vertex] (An) at (7,42) {$A_n$};
\node[vertex] (Bn) at (10,42) {$B_n$};
\node[vertex] (Cn) at (13,42) {$C_n$};
\node[vertex] (End2n) at (20,42) {$End2_n$};

\node (auxAn) at (8,43) {};
\node (auxCn) at (12,43) {};

\node[vertex] (b0) at (27,7) {$b_0$};
\node[vertex] (A0o) at (24,1) {$A_0^o$};
\node[vertex] (A0e) at (24,13) {$A_0^e$};

\node (auxb0n) at (26,38) {};
\node (auxb0n1) at (25,26) {};

\node (auxA0n11) at (17,28.5) {};
\node (auxA0n12) at (22,28.5) {};

\node (auxA0n1) at (17,41) {};
\node (auxA0n2) at (22,41) {};

%
%
\path[link] (End0o) -- (F1) -- (21) -- (E1) -- (11) -- (D1);
\path[link] (D1) -- (aux11) -- (aux21) -- (E1) -- (aux31) -- (aux41) -- (F1) -- (aux51) -- (aux61) -- (D1);

\path[link] (End0e) -- (F2) -- (22) -- (E2) -- (12) -- (D2);
\path[link] (D2) -- (aux12) -- (aux22) -- (E2) -- (aux32) -- (aux42) -- (F2) -- (aux52) -- (aux62) -- (D2);

\path[link] (End11) -- (A1) -- (B1) -- (TB2) -- (F2) -- (TC2) -- (C1) -- (SC2) -- (D2) -- (SA2) -- (A1);
\path[link] (End21) -- (C1);

\path[link] (End12) -- (A2) -- (B2);
\path[link] (End22) -- (C2);

\path[link] (Fn1) -- (2n1) -- (En1) -- (1n1) -- (Dn1);
\path[link] (Dn1) -- (aux1n1) -- (aux2n1) -- (En1) -- (aux3n1) -- (aux4n1) -- (Fn1) -- (aux5n1) -- (aux6n1) -- (Dn1);

\path[link] (SAn1) -- (Dn1) -- (SCn1);
\path[link] (TBn1) -- (Fn1) -- (TCn1);

\path[link] (End1n1) -- (An1) -- (Bn1) -- (TBn) -- (Fn) -- (TCn) -- (Cn1) -- (SCn) -- (Dn) -- (SAn) -- (An1);
\path[link] (End2n1) -- (Cn1);

\path[link] (Fn) -- (2n) -- (En) -- (1n) -- (Dn);
\path[link] (Dn) -- (aux1n) -- (aux2n) -- (En) -- (aux3n) -- (aux4n) -- (Fn) -- (aux5n) -- (aux6n) -- (Dn);

\path[link] (End1n) -- (An) -- (Bn) -- (Cn) -- (End2n);
\path[link] (An) -- (auxAn) -- (auxCn) -- (Cn);

\node (nowhere1) at (7,21) {};
\node (nowhere2) at (9,18) {};

\node (nowhere3) at (13,21) {};
\node (nowhere4) at (11,18) {};

\node (nowhere5) at (5,20.5) {};
\node (nowhere6) at (7,17.5) {};

\node (nowhere7) at (16,20) {};
\node (nowhere8) at (15,17.5) {};

\path[link] (TCn1) -- (nowhere1);
\path[link] (C2) -- (nowhere2);

\path[link] (SAn1) -- (nowhere3);
\path[link] (A2) -- (nowhere4);

\path[link] (TBn1) -- (nowhere5);
\path[link] (B2) -- (nowhere6);

\path[link] (SCn1) -- (nowhere7);
\path[link] (C2) -- (nowhere8);

\path[link] (D1) -- (b0) -- (D2);
\path[link] (C2) -- (A0e) -- (b0) -- (A0o) -- (C1);

\path[link] (Dn) -- (auxb0n) -- (b0) -- (auxb0n1) -- (Dn1);
\path[link] (Cn) -- (auxA0n1) -- (auxA0n2) -- (A0e);

\path[link] (Cn1) -- (auxA0n11) -- (auxA0n12) -- (A0o);

\path[existEdge] (D1) -- (C1);
\path[existEdge] (F1) -- (A1);
\path[existEdge] (D2) -- (C2);
\path[existEdge] (F2) -- (A2);
\path[existEdge] (Dn1) -- (Cn1);
\path[existEdge] (Fn1) -- (An1);
\path[existEdge] (Dn) -- (Cn);
\path[existEdge] (Fn) -- (An);

\end{tikzpicture}
\end{center}
\caption{\label{fig:Exponential} Network in the instance for Theorem~\ref{satzExp}. Matching edges depict a state in the described sequence. At this state, we have generated two edges in every rotation gadget, and the exponential rotation in the gadgets starts in order to iteratively get $D_1$, $E_1$ and $F_1$ unmatched. This is necessary to match them to $A_0^o$, which thereby discovers $End_0^o$. Then, a similar approach allows $A_0^e$ to get matched to $End_0^e$.}
\end{figure}
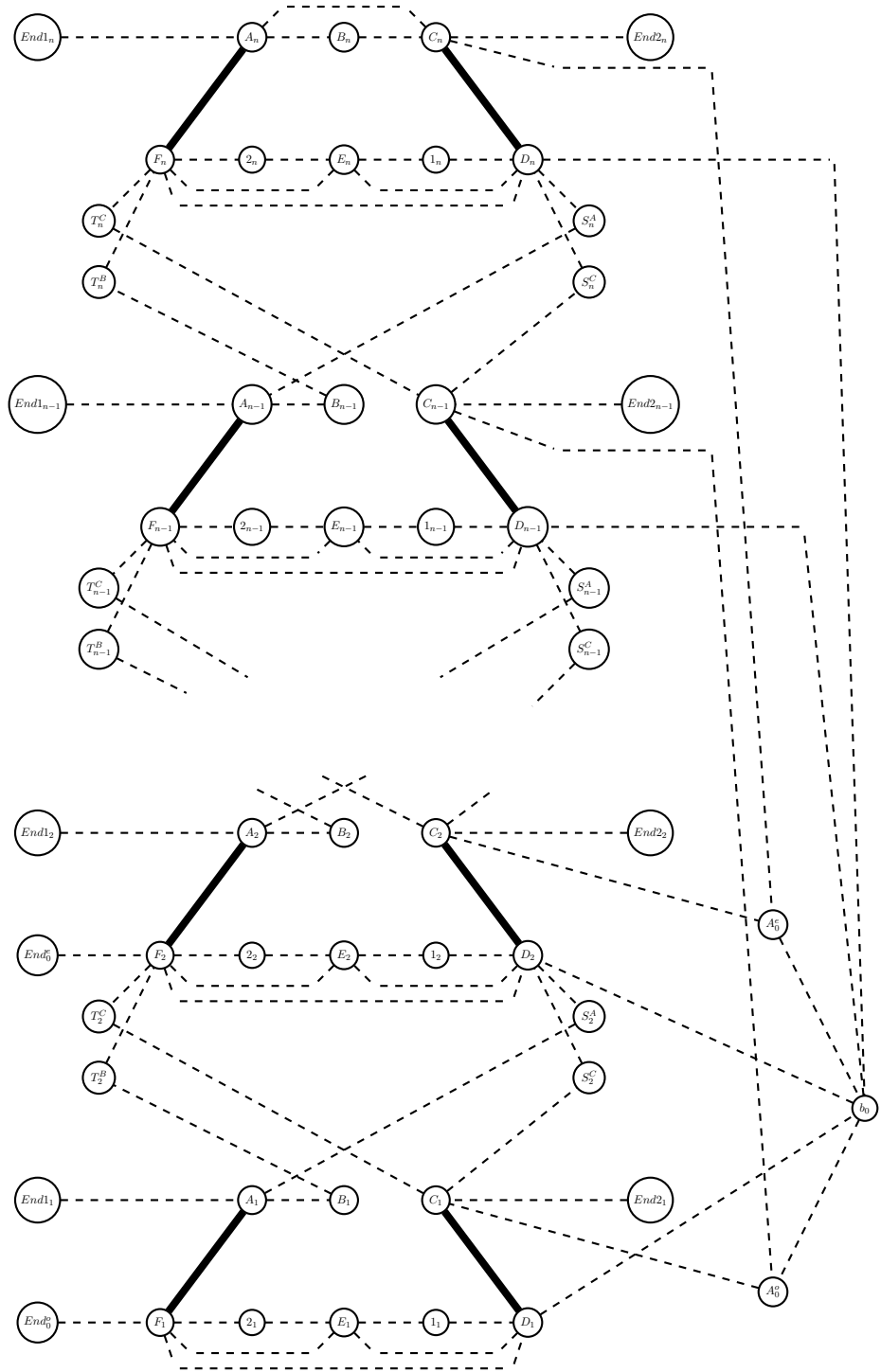

The bipartite partition of the agent set is as follows:
\begin{eqnarray*} 
U &=& \{A_0^o,End_0^e\}\cup\{A_i,B_i,C_i,End1_i, End2_i \mid i\mbox{ odd}\} \cup \{S^A_i,S^C_i,T^B_i,T^C_i \mid i \mbox{ even}\}\\
&& \hspace{0.5cm} \cup\{D_i,1_i, E_i,2_i,F_i\mid i\mbox{ even}\} \enspace,\\
W &=& \{A_0^e, b_0, End_0^o\}\cup\{A_i,B_i,C_i,End1_i, End2_i\mid i\mbox{ even}\} \cup \{S^A_i,S^C_i,T^B_i,T^C_i\mid i\mbox{ odd}\}\\
&& \hspace{0.5cm} \cup \{D_i,1_i,E_i,2_i,F_i\mid i\mbox{ odd}\}\enspace.
\end{eqnarray*}
Further, every $End$-agent is the agent $A$ of a circling gadget and ranks the agents inside the circling gadget lowest.

We first explain how to construct a particular sequence from $M_0=\emptyset$ to a locally stable matching. Then we argue that this sequence cannot be significantly shortened.

Obviously, since all agents $End1_i, End2_i$ are embedded into a circling gadget, we have to match them to $D_i,F_i$ in order to achieve a locally stable matching. Hence, every locally stable matching must contain the edges $\{D_i,End2_i\}$ and $\{F_i, End1_i\}$. When these edges exist, we say rotation gadget $i$ is \emph{fixed}. For the same reason, edges $\{A_0^o,End_0^o\}$ and $\{A_0^e, End_0^e\}$ must exist in every locally stable matching. When these edges exist, we say the generation gadget is \emph{fixed}.

The only local blocking pairs in $M=\emptyset$ are $\{A_0^o,D_i\}$ and $\{A_0^e,D_i\}$ for odd and even $i$, respectively. These agent pairs are accessible by a path of two links incident to $b_0$. We start the sequence with creation of $\{A_0^o,D_1\}$. With the link $\{A_0^o,C_1\}$, the edge $\{C_1,D_1\}$ becomes accessible. At this position we could match $D_1$ to $End2_1$, thereby stabilizing the incident circling gadget. This, however, would match $D_1$ to its most preferred partner, and thereby we would not be able to construct $\{F_1, End1_1\}$. Hence, we instead match $D_1$ iteratively to agents of rotation gadget 2. We start by moving $D_1$'s edge from $C_1$ to $S^C_2$, and then further to $D_2$, $E_2$, $F_2$, and then to $T^B_2$ and $B_1$. Now we replace $\{D_1,B_1\}$ by $\{B_1,E_1\}$, which is replaced by $\{A_1,E_1\}$ and finally by $\{A_1,F_1\}$. Then, we recreate $\{A_0^{o},D_1\}$ and $\{D_1,C_1\}$. At this point we could, in principle, fix rotation gadget 1. Then, however, we will not be able to create $\{A_0^{o},End_0^{o}\}$, since $A_0^{o}$ needs to discover $End_0^{o}$ by matching iteratively to $D_1$, $E_1$ and $F_1$. If rotation gadget 1 is fixed, $D_1$ and $F_1$ do not want to deviate to $A_0^{o}$.

Instead of fixing gadget 1, we proceed similarly with rotation gadget 2 by first creating $\{A_0^{e}, D_2\}$, then $\{C_2,D_2\}$ etc. until both $\{A_2,F_2\}$ and $\{C_2,D_2\}$ exist. Again, we will refrain from fixing rotation gadget 2, since this would imply that $A_0^{e}$ cannot discover $End_0^e$, similar to the arguments before. 

In a similar fashion we proceed with rotation gadgets $i=3,4,\ldots,n$, using $A_0^o$ or $A_0^e$ when $i$ is odd or even, resp. Note that whenever we create edges in rotation gadget $i$, rotation gadget $i+1$ is still empty and all agents $S^C_{i+1},D_{i+1},E_{i+1},F_{i+1},T^B_{i+1}$ are unmatched when $D_i$'s edge reaches them. At the end of this process, matching edges $\{C_i,D_i\}$ and $\{A_i,F_i\}$ exist for all $i=1,\ldots,n$. Observe that it is necessary to create two matching edges in each rotation gadget for the gadget to become fixed eventually. Since the only agents at which additional matching edges can be introduced are $A_0^o$ and $A_0^e$, this has to be done before the generation gadget is fixed. This shows that every $\{C_i,D_i\}$ needs to be created before the generation gadget is fixed. We do not necessarily have to advance the sequence such that the first edge in each gadget reaches position $\{A_i,F_i\}$, but we have done so in our sequence for clarity and consistency.
 
Suppose a gadget is not fixed, then the two edges in this gadget can rotate along six positions. The \emph{rotation} of a single edge runs as follows:
$$\{C_i,D_i\} \to \{B_i,D_i\} \to \{B_i,E_i\} \to \{A_i,E_i\} \to \{A_i,F_i\} \to \{C_i,F_i\} \to \{C_i,D_i\}$$
Note that since there are two edges, we always have four of the six agents matched. There is exactly one pair of agents unmatched, which allows to shift one of the edges further along its rotation. We call a sequence a \emph{rotation of gadget $i$} if the two edges in gadget $i$ both fulfill the complete cycle.

Only in gadget $n$, each of the six transitions is available directly via resolving a single local blocking pair. In all other gadgets $i=1,\ldots,n-1$, two transitions are not directly available as local blocking pairs. As we have seen above, for the transition from $\{C_i,D_i\}$ to $\{B_i,D_i\}$, $D_i$ needs to discover $B_i$ by matching iteratively to agents from gadget $i+1$. Exactly the same construction is in place for the transition $\{A_i,F_i\}$ to $\{C_i,F_i\}$, where the agents $S^A_{i+1},D_{i+1},E_{i+1},F_{i+1},T^C_{i+1}$ have to be iteratively matched to $F_i$ for this agent to discover $C_i$. Since always only one of $D_{i+1}$, $E_{i+1}$, $F_{i+1}$ is unmatched, gadget $i+1$ also needs to complete a full rotation, thereby making $D_{i+1}$, $E_{i+1}$, $F_{i+1}$ iteratively available for the less preferred partners $D_i$ or $F_i$. Consider the transition of, say, $D_i$ from $C_i$ to $B_i$. If $D_i$ is matched to $D_{i+1}$, then in order to get $E_{i+1}$ unmatched, we have to rotate the two edges in gadget $i+1$ to $D_{i+1}$ and $F_{i+1}$. This, however, would imply we ``lose'' the edge incident to $D_i$, since $D_i$ is currently still matched to $D_{i+1}$. Then, subsequently gadget $i$ cannot be fixed. Hence, in order to keep $D_i$ matched, we match it to $1_{i+1}$, proceed with the rotation in gadget $i+1$ to free $E_{i+1}$, then match $D_i$ to $E_{i+1}$ and further to $2_{i+1}$. Finally, we complete the rotation of gadget $i+1$ by freeing $F_{i+1}$, which allows $D_i$ to match to $F_{i+1}$, $T^B_{i+1}$ and $B_i$. See Fig.~\ref{fig:Rotation} for a visualization of a complete rotation in gadget $n$ to allow $D_{n-1}$ become matched to $D_n,E_n,F_n$.

\begin{figure}
\begin{center}
\begin{tikzpicture}[thick,scale=0.43, every node/.style={transform shape}]
\clip (2, 7) rectangle (17, -3);
\tikzstyle{vertex}=[circle,draw,minimum size=5pt,scale=1]

\node[vertex] (Fn) at (4,0) {$F_n$};
\node[vertex] (2n) at (7,0) {$2_n$};
\node[vertex] (En) at (10,0) {$E_n$};
\node[vertex] (1n) at (13,0) {$1_n$};
\node[vertex] (Dn) at (16,0) {$D_n$};

\node (aux1n) at (15,-1) {};
\node (aux2n) at (11,-1) {};
\node (aux3n) at (9,-1) {};
\node (aux4n) at (5,-1) {};
\node (aux5n) at (4.5,-1.5) {};
\node (aux6n) at (15.5,-1.5) {};

\node[vertex] (An) at (7,4) {$A_n$};
\node[vertex] (Bn) at (10,4) {$B_n$};
\node[vertex] (Cn) at (13,4) {$C_n$};

\node (auxAn) at (8,5) {};
\node (auxCn) at (12,5) {};

\path[link] (Fn) -- (2n) -- (En) -- (1n) -- (Dn);
\path[link] (Dn) -- (aux1n) -- (aux2n) -- (En) -- (aux3n) -- (aux4n) -- (Fn) -- (aux5n) -- (aux6n) -- (Dn);

\path[link] (An) -- (Bn) -- (Cn);
\path[link] (An) -- (auxAn) -- (auxCn) -- (Cn);

\path[existEdge] (Dn) -- (Cn);
\path[existEdge] (Fn) -- (An);
\end{tikzpicture}
%
\begin{tikzpicture}[thick,scale=0.43, every node/.style={transform shape}]
\clip (2, 7) rectangle (17, -3);
\tikzstyle{vertex}=[circle,draw,minimum size=5pt,scale=1]

\node[vertex] (Fn) at (4,0) {$F_n$};
\node[vertex] (2n) at (7,0) {$2_n$};
\node[vertex] (En) at (10,0) {$E_n$};
\node[vertex] (1n) at (13,0) {$1_n$};
\node[vertex] (Dn) at (16,0) {$D_n$};

\node (aux1n) at (15,-1) {};
\node (aux2n) at (11,-1) {};
\node (aux3n) at (9,-1) {};
\node (aux4n) at (5,-1) {};
\node (aux5n) at (4.5,-1.5) {};
\node (aux6n) at (15.5,-1.5) {};

\node[vertex] (An) at (7,4) {$A_n$};
\node[vertex] (Bn) at (10,4) {$B_n$};
\node[vertex] (Cn) at (13,4) {$C_n$};

\node (auxAn) at (8,5) {};
\node (auxCn) at (12,5) {};

\path[link] (Fn) -- (2n) -- (En) -- (1n) -- (Dn);
\path[link] (Dn) -- (aux1n) -- (aux2n) -- (En) -- (aux3n) -- (aux4n) -- (Fn) -- (aux5n) -- (aux6n) -- (Dn);

\path[link] (An) -- (Bn) -- (Cn);
\path[link] (An) -- (auxAn) -- (auxCn) -- (Cn);

\path[existEdge] (En) -- (Bn);
\path[existEdge] (Fn) -- (An);
\end{tikzpicture}
%
\begin{tikzpicture}[thick,scale=0.43, every node/.style={transform shape}]
\clip (2, 7) rectangle (17, -3);
\tikzstyle{vertex}=[circle,draw,minimum size=5pt,scale=1]

\node[vertex] (Fn) at (4,0) {$F_n$};
\node[vertex] (2n) at (7,0) {$2_n$};
\node[vertex] (En) at (10,0) {$E_n$};
\node[vertex] (1n) at (13,0) {$1_n$};
\node[vertex] (Dn) at (16,0) {$D_n$};

\node (aux1n) at (15,-1) {};
\node (aux2n) at (11,-1) {};
\node (aux3n) at (9,-1) {};
\node (aux4n) at (5,-1) {};
\node (aux5n) at (4.5,-1.5) {};
\node (aux6n) at (15.5,-1.5) {};

\node[vertex] (An) at (7,4) {$A_n$};
\node[vertex] (Bn) at (10,4) {$B_n$};
\node[vertex] (Cn) at (13,4) {$C_n$};

\node (auxAn) at (8,5) {};
\node (auxCn) at (12,5) {};

\path[link] (Fn) -- (2n) -- (En) -- (1n) -- (Dn);
\path[link] (Dn) -- (aux1n) -- (aux2n) -- (En) -- (aux3n) -- (aux4n) -- (Fn) -- (aux5n) -- (aux6n) -- (Dn);

\path[link] (An) -- (Bn) -- (Cn);
\path[link] (An) -- (auxAn) -- (auxCn) -- (Cn);

\path[existEdge] (En) -- (Bn);
\path[existEdge] (Dn) -- (Cn);
\end{tikzpicture}
%
\begin{tikzpicture}[thick,scale=0.43, every node/.style={transform shape}]
\clip (2, 7) rectangle (17, -3);
\tikzstyle{vertex}=[circle,draw,minimum size=5pt,scale=1]

\node[vertex] (Fn) at (4,0) {$F_n$};
\node[vertex] (2n) at (7,0) {$2_n$};
\node[vertex] (En) at (10,0) {$E_n$};
\node[vertex] (1n) at (13,0) {$1_n$};
\node[vertex] (Dn) at (16,0) {$D_n$};

\node (aux1n) at (15,-1) {};
\node (aux2n) at (11,-1) {};
\node (aux3n) at (9,-1) {};
\node (aux4n) at (5,-1) {};
\node (aux5n) at (4.5,-1.5) {};
\node (aux6n) at (15.5,-1.5) {};

\node[vertex] (An) at (7,4) {$A_n$};
\node[vertex] (Bn) at (10,4) {$B_n$};
\node[vertex] (Cn) at (13,4) {$C_n$};

\node (auxAn) at (8,5) {};
\node (auxCn) at (12,5) {};

\path[link] (Fn) -- (2n) -- (En) -- (1n) -- (Dn);
\path[link] (Dn) -- (aux1n) -- (aux2n) -- (En) -- (aux3n) -- (aux4n) -- (Fn) -- (aux5n) -- (aux6n) -- (Dn);

\path[link] (An) -- (Bn) -- (Cn);
\path[link] (An) -- (auxAn) -- (auxCn) -- (Cn);

\path[existEdge] (Cn) -- (Dn);
\path[existEdge] (Fn) -- (An);
\end{tikzpicture}
%
\begin{tikzpicture}[thick,scale=0.43, every node/.style={transform shape}]
\clip (2, 7) rectangle (17, -3);
\tikzstyle{vertex}=[circle,draw,minimum size=5pt,scale=1]

\node[vertex] (Fn) at (4,0) {$F_n$};
\node[vertex] (2n) at (7,0) {$2_n$};
\node[vertex] (En) at (10,0) {$E_n$};
\node[vertex] (1n) at (13,0) {$1_n$};
\node[vertex] (Dn) at (16,0) {$D_n$};

\node (aux1n) at (15,-1) {};
\node (aux2n) at (11,-1) {};
\node (aux3n) at (9,-1) {};
\node (aux4n) at (5,-1) {};
\node (aux5n) at (4.5,-1.5) {};
\node (aux6n) at (15.5,-1.5) {};

\node[vertex] (An) at (7,4) {$A_n$};
\node[vertex] (Bn) at (10,4) {$B_n$};
\node[vertex] (Cn) at (13,4) {$C_n$};

\node (auxAn) at (8,5) {};
\node (auxCn) at (12,5) {};

\path[link] (Fn) -- (2n) -- (En) -- (1n) -- (Dn);
\path[link] (Dn) -- (aux1n) -- (aux2n) -- (En) -- (aux3n) -- (aux4n) -- (Fn) -- (aux5n) -- (aux6n) -- (Dn);

\path[link] (An) -- (Bn) -- (Cn);
\path[link] (An) -- (auxAn) -- (auxCn) -- (Cn);

\path[existEdge] (En) -- (Bn);
\path[existEdge] (Fn) -- (An);
\end{tikzpicture}
%
\begin{tikzpicture}[thick,scale=0.43, every node/.style={transform shape}]
\clip (2, 7) rectangle (17, -3);
\tikzstyle{vertex}=[circle,draw,minimum size=5pt,scale=1]

\node[vertex] (Fn) at (4,0) {$F_n$};
\node[vertex] (2n) at (7,0) {$2_n$};
\node[vertex] (En) at (10,0) {$E_n$};
\node[vertex] (1n) at (13,0) {$1_n$};
\node[vertex] (Dn) at (16,0) {$D_n$};

\node (aux1n) at (15,-1) {};
\node (aux2n) at (11,-1) {};
\node (aux3n) at (9,-1) {};
\node (aux4n) at (5,-1) {};
\node (aux5n) at (4.5,-1.5) {};
\node (aux6n) at (15.5,-1.5) {};

\node[vertex] (An) at (7,4) {$A_n$};
\node[vertex] (Bn) at (10,4) {$B_n$};
\node[vertex] (Cn) at (13,4) {$C_n$};

\node (auxAn) at (8,5) {};
\node (auxCn) at (12,5) {};

\path[link] (Fn) -- (2n) -- (En) -- (1n) -- (Dn);
\path[link] (Dn) -- (aux1n) -- (aux2n) -- (En) -- (aux3n) -- (aux4n) -- (Fn) -- (aux5n) -- (aux6n) -- (Dn);

\path[link] (An) -- (Bn) -- (Cn);
\path[link] (An) -- (auxAn) -- (auxCn) -- (Cn);

\path[existEdge] (En) -- (Bn);
\path[existEdge] (Cn) -- (Dn);
\end{tikzpicture}
\end{center}
\caption{\label{fig:Rotation} A rotation of gadget $n$. Suppose $D_{n-1}$ must become iteratively matched to $D_n, E_n, F_n$ to discover $T^B_{n-1}$ and $B_{n-1}$. Top left: Starting position, $D_n$ is occupied and should be freed. Top right: After resolution of $\{B_n,D_n\}$ and $\{B_n,E_n\}$, $D_n$ is free to be matched to $D_{n-1}$. Middle left: By resolution of $\{C_n,F_n\}$ and $\{C_n,D_n\}$, $D_n$ becomes matched, so $D_{n-1}$ should become matched to $1_n$ before this step. Middle right: After resolution of $\{A_n,E_n\}$ and $\{A_n,F_n\}$, $E_n$ is free to be matched to $D_{n-1}$. Bottom left: By resolution of $\{B_n,D_n\}$ and $\{B_n,E_n\}$, $E_n$ becomes matched, so $D_{n-1}$ should get matched to $2_n$ before this step. Bottom right: After resolution of $\{C_n,F_n\}$ and $\{C_n,D_n\}$, $F_n$ is free to be matched to $D_{n-1}$, which subsequently gets matched to $T^B_{n-1}$. By resolution of $\{A_n,E_n\}$ and $\{A_n,F_n\}$, the rotation is complete.}
\end{figure}
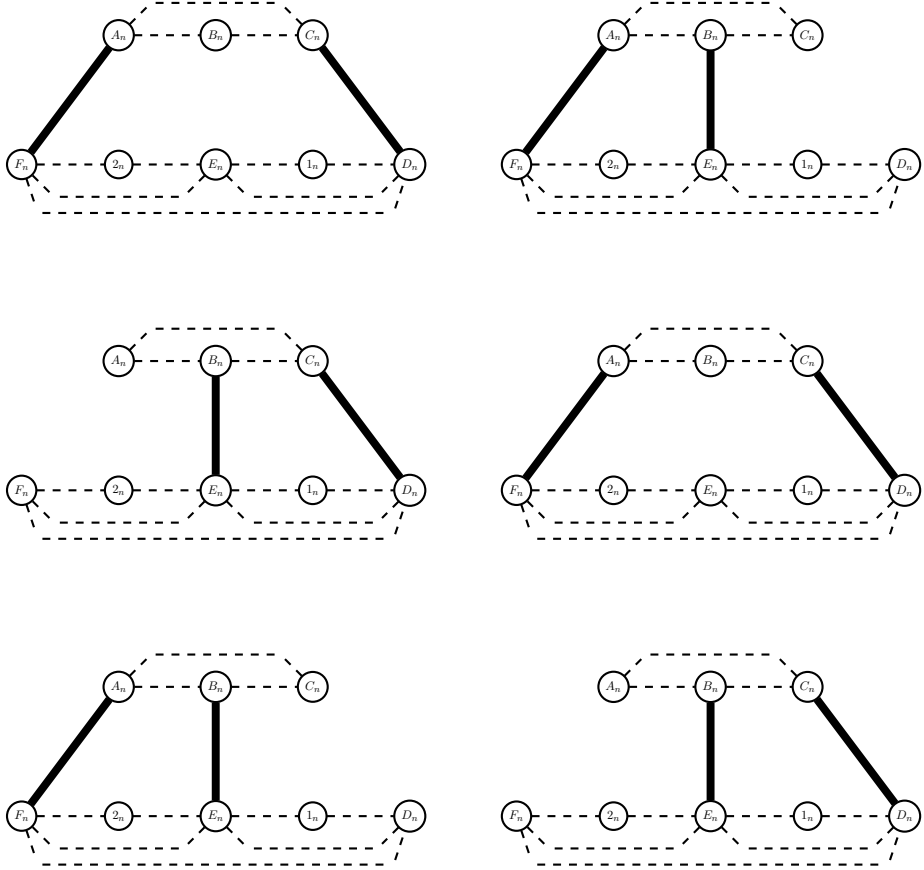

Since one rotation of gadget $i$ involves two transitions, which each require a full rotation in gadget $i+1$, a single rotation of gadget 1 requires $2^{n-1}$ rotations of gadget $n$. Now, the single rotation of gadget 1 is required, since we need to make agents $D_1, E_1, F_1$ iteratively available for matching to $A_0^{o}$ in order to fix the generation gadget. In the end, we rotate all edges back to $\{C_i,D_i\}$ and $\{A_i,F_i\}$ and fix the rotation gadgets. In this way, there is an exponential sequence in the gadget from $M = \emptyset$ to a locally stable matching.

Let us now argue, why the above idea of intertwined rotation cannot be avoided. The main underlying reason is that $A_0^o$ and $A_0^e$ must be used for generation of additional matching edges. Only after a sufficient number of matching edges is present in each gadget, we can proceed to fix the generation gadget. As part of this, we must fulfill a rotation in gadgets 1 and 2, implying an exponential number of rotations in higher gadgets. 

Obviously, we cannot fix the generation gadget before two matching edges in each rotation gadget have been created. Similarly, none of the rotation gadgets can be fixed before the generation gadget is fixed -- suppose a gadget $i$ is fixed before the generation gadget is fixed. At this point, neither $D_i$ nor $F_i$ are willing to match to $D_{i-1}$ or $F_{i-1}$. Thus, the rotation in gadget $i-1$ breaks and the gadget stabilizes. The same is true for all rotation gadgets $i-2,\ldots,1$. Hence, the rotation of gadget 1 cannot be completed, so $A_0^{o}$ is unable to reach $End_0^o$. For the same reason, we must not create three or more matching edges in a rotation gadget. This stabilizes the gadget and does not allow rotation, which is needed to eventually fix the generation gadget.

A more subtle point is the interplay of even and odd gadgets. Since we strive for bipartite instances, we cannot use a single agent $A_0$ to generate matching edges in all rotation gadgets, since there are $D_i$-agents in both partitions. Given the sequence above, the reader might wonder why $A_0^e$ has to trigger a final rotation in gadget $2$. Imagine that $A_0^e$ can directly match to $End_0^e$ once all matching edges in the even rotation gadgets are created. Then, however, we could break the intertwining of rotation gadgets: First, create all edges in odd gadgets, and move $A_0^o$ to $End_0^o$. Since at this point gadget 2 is empty, the single rotation in gadget 1 does not propagate to other gadgets. Then, create the matching edges of the even gadgets for gadget $i=2,4,\ldots,n$. This causes rotation in the odd gadget $i+1$, but since gadget $i+2$ does not yet contain any matching edges, the rotations do not propagate to higher gadgets. In this way, we could fix all gadgets in a polynomial number of steps. 

In principle, we can also adopt this approach by first creating the matching edges in odd gadgets and then in even gadgets. However, in our instance $A_0^{e}$ then triggers a final rotation in gadget 2 after generation of all matching edges in the even gadgets. At this point, all the gadgets are intertwined, and the single rotation in gadget 2 then requires that every gadget $i \ge 2$ rotates $2^{i-2}$ times.

\end{proof}


\section{Memory}
\label{sec:memory}

In this section, we focus on the impact of memory for the reachability of locally stable matchings. As a direct initial result, we observe that no memory can help with the reachability of a \emph{given} locally stable matching, even in a correlated job-market game.
\begin{fol}
\label{satz4}
It is \classNP-hard to decide \Reach\ to a given locally stable matching in a correlated job-market game with any kind of memory.
\end{fol}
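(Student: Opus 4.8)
The plan is to reduce from the previous hardness results, specifically Corollary~\ref{satz6}, which establishes that \Reach\ to a given locally stable matching is already \classNP-hard in a correlated job-market game (with no memory). The key observation is that the statement of Corollary~\ref{satz4} merely adds the phrase ``with any kind of memory,'' so the natural strategy is to show that the memory is harmless in the instances produced by the reduction underlying Corollary~\ref{satz6}. In other words, I would argue that on those specific instances, adding recency, quality, or random memory does not change the set of reachable locally stable matchings, and in particular does not create any shortcut that would allow $M^*$ to be reached in cases where it could not be reached without memory.

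First I would recall the structure of the job-market instance from Corollary~\ref{satz6}: the $U$-vertices are isolated in the social network, all links run through the auxiliary $W$-vertices $a, a_{x_i}, a_{C_j}$ and the variable/clause gadget, and the target matching $M^*$ forces the clause-edges to overtake the variable-edges within the branching. The crucial point is that memory only ever \emph{adds} accessibility: a pair $\{u,v\}$ becomes a potential local blocking pair if $u$ lies in $v$'s memory, even absent a distance-2 path in $G=(V,L\cup M)$. So I must verify that in this instance no such memory-induced accessibility enables a blocking pair that was forbidden before. Since each $u_s$ is isolated in $L$ and can only ever be matched along the prescribed edges of $E$, the memory of a $u$-vertex consists only of previously held partners $a_{s'}$ or $a$ or gadget-vertices, all of which it could already reach through its current matching edge via the $W$-side links. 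The main technical step is therefore to check that remembering a former partner $w'$ does not let $u_s$ form a new blocking pair $\{u_s,w''\}$ that the no-memory dynamics disallowed --- because the target positions $v_s$ and the branching vertices are reachable only through the path structure that memory does not circumvent.

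The cleanest way to make this airtight is to design the reduction so that memory is provably inert. Concretely, I would (if needed) augment the construction so that the utilities and the edge set $E$ guarantee that any pair made accessible through memory is \emph{never} a blocking pair: for instance, by ensuring that whenever $u$ remembers a former partner $w'$, the edge $\{u,w'\}$ has strictly smaller benefit than $u$'s current edge (so $u$ has no incentive to return), and that $w'$ likewise prefers its current situation. Because the dynamics in Corollary~\ref{satz6} move each edge monotonically ``forward'' along the path toward higher-benefit positions (by the correlated-benefit table with strictly increasing weights), a remembered partner is always a strictly worse, already-abandoned position, hence never part of a local blocking pair regardless of whether memory is recency, quality, or random. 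This reduces the memory case exactly to the memoryless case, so reachability of $M^*$ holds if and only if the \ThreeSAT\ formula is satisfiable, completing the reduction.

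The main obstacle I anticipate is quality memory, since it records the \emph{best} previous partner rather than the most recent one, and in a correlated instance the ``best'' remembered partner could in principle be a high-benefit vertex that tempts a $u$-vertex backward. I would handle this by exploiting the fact that the benefit along each edge's trajectory is strictly monotone increasing: a vertex's current partner always dominates every previous partner in benefit, so even the best remembered partner is strictly worse than the current one, and hence the quality-memory partner is never a blocking pair either. Verifying this monotonicity holds uniformly across all three memory models --- and confirming that the $W$-side vertices also gain nothing from remembering --- is the delicate bookkeeping that the proof must carry out, but conceptually it follows directly from the strictly increasing weight structure already present in the construction of Corollary~\ref{satz6}.
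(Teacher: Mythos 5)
Your overall strategy differs from the paper's. The paper does not try to show that memory is harmless in the sense of creating no new blocking pairs; instead it argues that whatever blocking pairs memory does create can only point \emph{backward} to positions a $u$-vertex has already visited, so a clause-edge can never use memory to jump ahead past a variable-edge on the $v$-path, and the combinatorial core of the reduction (clause-edges must overtake variable-edges inside the branching) survives intact. You instead claim the stronger statement that memory is completely inert, because every remembered partner is strictly worse than the current one. That is a legitimate and, if true, cleaner route: it immediately reduces the memory case to the memoryless case of Corollary~\ref{satz6}.

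The gap is that your key claim --- ``a vertex's current partner always dominates every previous partner in benefit'' --- does not follow from the strictly increasing weights along the path alone. It presupposes that no player ever becomes \emph{unmatched} during the dynamics. If a $u$-vertex could have its edge deleted by a conflicting blocking pair (eviction), it would afterwards prefer \emph{any} remembered partner to being unmatched, and memory would then create genuine blocking pairs; monotonicity of the weight table says nothing about this case, and the paper's own proof even speaks of variable-edges being ``removed'' by other variable-edges, so the issue cannot be waved away. What saves your argument in the Corollary~\ref{satz6} instance is a property you never state or verify: every $W$-vertex there receives the \emph{same} benefit from all of its admissible partners (the weights $j$, $l+i$, $l+k+1$, $k+l+2$, $k+l+2+i'$, $2k+l+2+j'$ depend only on the $W$-endpoint), so no $W$-vertex ever strictly improves by swapping partners, no edge is ever deleted by another, every $u$-vertex's sequence of partners is strictly increasing, and hence no vertex is ever unmatched after its first match. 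You need to state and prove this no-eviction lemma explicitly (and note that the initial matching $M_0$ already matches every $u$-vertex); with it, your inertness argument closes, and both directions of the reduction follow exactly as in Corollary~\ref{satz6}. Without it, the step ``hence never part of a local blocking pair'' is unjustified.
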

\begin{proof}
We observe that the same reduction as in Theorem~\ref{satz6} yields the result even for arbitrary memory. If the \ThreeSAT\ formula is satisfiable, the sequence given in the proof can obviously be constructed without memory. Now assume that the desired matching $M^*$ is reachable. Recall the instance, for which links and initial matching are depicted in Fig.~\ref{fig:3satJob}, and the main idea of the proof. The agents of $U$ iteratively get matched along the path of agents $a_s$, into the branching of agents $x_i$ and $\bar{x}_i$, and subsequently to agents $v_s$ on the right side of the path. For every agent $u \in U$, the preference of the partner is strictly increasing during this process. Hence, for $u$ there is no (local) blocking pair with any agent $w \in W$ that $u$ was matched to before. Thus, memory can only have an impact if an agent $u$ gets unmatched. 

An agent $u \in U$ might indeed become unmatched. For example, suppose we follow the sequence outlined in the proof of Theorem~\ref{satz6} by matching all agents $u_{x_i}$ to an agent $x_i$ or $\bar{x}_i$ in the branching. Then, $u_{C_1}$ gets matched iteratively to the agents along the path, to $a$, to an agent corresponding to a true literal, and then to $v_{C_1}$. At this point, we could resolve the local blocking pair of, say, $u_{x_1}$ and $v_{x_1}$, thereby leaving $u_{C_1}$ unmatched. $u_{C_1}$ remembers all previous partners, so it forms blocking pairs, in particular, with all currently unmatched previous partners. In this way, $u_{C_1}$ can get matched again, which would be impossible without memory. However, it is clear to see in this example that once $v_{x_1}$ and $u_{x_1}$ are matched, for all $j=1,\ldots,m$, none of the edges $\{u_{C_j},v_{C_j}\}$ can be constructed -- none of them existed before, so $v_{C_j}$ would have to be discovered by $u_{C_j}$ via sequences of local blocking pair resolutions starting from $v_{x_1}$. However, $\{u_{x_1},v_{x_1}\}$ is the top choice for both agents, and therefore none of them will become part of any local blocking pair again.

It is easy to verify that this condition must hold more generally. Every agent $v_{x_j}$ must become matched to all agents $u_{C_1},\ldots,u_{C_l}$ before it gets matched to any of the agents $v_{x_k},\ldots,v_{x_j}$. The argument is quite similar to the one in Theorem~\ref{satz1}. Consider an agent $v_{x_i}$ that becomes matched to $u_{x_{i'}}$ for $i \le i'$ before getting matched to some $u_{C_j}$. Then $u_{C_j}$ does not hold $v_{x_s}$ in its memory, for all $s \ge i'$, since these agents must be discovered by iteratively resolving local blocking pairs of $u_{C_j}$ and $v_{x_1},\ldots,v_{x_{i'-1}}$. First, suppose $u_{x_{i'}}$ subsequently does not get unmatched. Then $u_{C_j}$ will not be able to discover $v_{x_s}$ for any $s \ge i'$, since no agent $v$ is involved in a local blocking pair with $u_{C_j}$ when being matched to $u_{x_{i'}}$. Also, none of these $v_{x_s}$ can be made accessible for $u_{C_j}$ by memory. Thus, figuratively speaking, the edge incident to $u_{C_j}$ gets ``stuck behind the edge incident to $u_{x_{i'}}$ when it tries to traverse the path of $v$-agents from $v_{x_1}$ to $v_{C_j}$''. 

Second, suppose $u_{x_{i'}}$ subsequently gets unmatched. This can only happen when some edge $\{u_{x_{i'}}, v_{x_s}\}$ with $i \le s \le i'-1$ exists, and is replaced by $\{u_{x_{i''}}, v_{x_s}\}$ for some $s \le i'' < i'$, since these are the only agents of $U$ that $v_{x_s}$ prefers to $u_{x_{i''}}$. Then, however, we can repeat the above argument that $u_{C_j}$ is unable to reach $v_{C_j}$, where $u_{x_{i''}}$ takes the role of $u_{x_{i'}}$ and $v_{x_s}$ takes the role of $v_{x_i}$.

This implies, in particular, that $v_{x_1}$ must be matched to all agents $u_{C_j}$ before getting matched to any of the agents $u_{x_i}$. Note that agent $u_{x_i}$ only considers deviating to agents $x_i$ and $\bar{x}_i$ from the branching. This directly implies that while being matched to the $a$-agents or the $x/\bar{x}$-agents, no agent $u_{x_i}$ can become unmatched. Therefore, we again must obey the structure of the sequence outlined in Theorems~\ref{satz6},~\ref{satz1}  before, and a satisfying assignment must exist.
\end{proof}
We will now concentrate on the impact of memory on reaching an arbitrary locally stable matching. Although existence is guaranteed for bipartite instances, even quite simple structures like the circling gadget do not allow a locally stable matching to be reached through improvement dynamics from every initial matching.

\paragraph{Quality Memory}
We start our treatment with quality memory, where every agent remembers at every round the best matching partner he ever had before. While this seems quite a natural choice and appears like a smart strategy for each agent, it can be easily fooled by starting with a much-liked partner who soon after matches with someone more preferred and never becomes unmatched again. This way the memory becomes useless which leaves us with the same dynamics as before.

\begin{figure}
\begin{center}
\begin{tikzpicture}[thick,scale=0.4]
\node[scale=0.7] (w1) at (5,5)[draw=black, circle, fill=none]{$~x~$};
\node[scale=0.7] (b1) at (5,2.5)[draw=black, circle, fill=none]{};
\node[scale=0.7] () at (5.4,6.2)[draw=none, fill=none]{$t_x>..$};
\node[scale=0.7] (w2) at (0,5)[draw=black, circle, fill=none]{$s_x$};
\node[scale=0.7] () at (-0.7,4)[draw=none, fill=none]{$t_x>..$};
\node[scale=0.7] (m1) at (2.5,2)[draw=black, circle, fill=none]{$t_x$};
\node[scale=0.7] () at (3.2,1)[draw=none, fill=none]{$s_x>x$};
\node[scale=0.7] (a1) at (8,6)[draw=none, fill=none]{};
\node[scale=0.7] (a2) at (8,4)[draw=none, fill=none]{};

\node[scale=0.7] (w11) at (17,5)[draw=black, circle, fill=none]{$~x~$};
\node[scale=0.7] (b11) at (17,2.5)[draw=black, circle, fill=none]{};
\node[scale=0.7] () at (17.4,6.2)[draw=none, fill=none]{$t_x>..$};
\node[scale=0.7] (w21) at (12,5)[draw=black, circle, fill=none]{$s_x$};
\node[scale=0.7] () at (11.3,4)[draw=none, fill=none]{$t_x>..$};
\node[scale=0.7] (m11) at (14.5,2)[draw=black, circle, fill=none]{$t_x$};
\node[scale=0.7] () at (15.2,1)[draw=none, fill=none]{$s_x>x$};
\node[scale=0.7] (a11) at (20,6)[draw=none, fill=none]{};
\node[scale=0.7] (a21) at (20,4)[draw=none, fill=none]{};

\node[scale=0.7] (w12) at (29,5)[draw=black, circle, fill=none]{$~x~$};
\node[scale=0.7] (b12) at (29,2.5)[draw=black, circle, fill=none]{};
\node[scale=0.7] () at (29.4,6.2)[draw=none, fill=none]{$t_x>..$};
\node[scale=0.7] (w22) at (24,5)[draw=black, circle, fill=none]{$s_x$};
\node[scale=0.7] () at (23.3,4)[draw=none, fill=none]{$t_x>..$};
\node[scale=0.7] (m12) at (26.5,2)[draw=black, circle, fill=none]{$t_x$};
\node[scale=0.7] () at (27.2,1)[draw=none, fill=none]{$s_x>x$};
\node[scale=0.7] (a12) at (32,6)[draw=none, fill=none]{};
\node[scale=0.7] (a22) at (32,4)[draw=none, fill=none]{};

\node (h1) at (8,3) [draw=none,fill=none]{};
\node (h2) at (10.5,3) [draw=none,fill=none]{};
  
\node (h3) at (20,3) [draw=none,fill=none]{};
\node (h4) at (22.5,3) [draw=none,fill=none]{};
  
\path[->]
(h1) edge (h2)
(h3) edge (h4);

\path[dashed]
(w1) edge (w2)
(w11) edge (w21)
(w12) edge (w22)
(b1) edge (w1)
(b1) edge (m1)
(b11) edge (w11)
(b11) edge (m11)
(b12) edge (w12)
(b12) edge (m12);

\path[-, line width=3pt]
(w1) edge (m1)
(w21) edge (m11)
(w22) edge (m12)
(w12) edge (a22);

\path[-]
(w2) edge (m1)
(w11) edge (m11)
(w12) edge (m12)
(w1) edge (a1)
(w1) edge (a2)
(w11) edge (a11)
(w11) edge (a21)
(w12) edge (a12);

\node (a31) at (17,6)[draw=none, fill=none]{};
\node (a32) at (29,6)[draw=none, fill=none]{};

\node[overlay,cloud callout,callout relative pointer={(0.3cm,-0.45cm)},aspect=1.5,draw=black] at (14.5,8.5) {\small $t_x$};
\node[overlay,cloud callout,callout relative pointer={(0.3cm,-0.45cm)}, aspect=1.5,draw=black] at (26.5,8.5) {\small $t_x$};
\node () at (17,7.5)[draw=none, fill=none]{};
\end{tikzpicture}
\end{center}
\caption{\label{fig:memoryJammer}Effect of a memory jammer}
\end{figure}
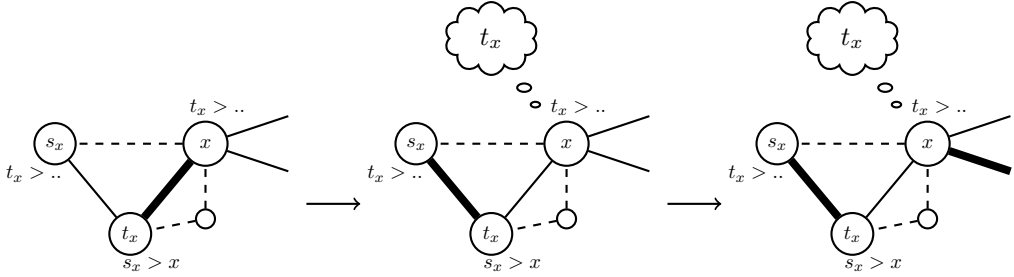

\begin{bspi}[Memory Jammer] 
\rm
Consider the graph given in Fig.~\ref{fig:memoryJammer}. As agents $x$ and $t_x$ are at distance 2 in the network $(V,L)$, edge $\{x,t_x\}$ is accessible, regardless of the current matching. Furthermore, it is the only accessible edge for $t_x$ when $t_x$ is unmatched. Let $t_x$ be the most preferred partner of $x$. Thus, $\{x,t_x\}$ is a local blocking pair as long as $t_x$ is unmatched. Once this blocking pair gets resolved, edge $\{t_x,s_x\}$ becomes accessible. Let $s_x$ be the most preferred partner of $t_x$, and $t_x$ the most preferred partner of $s_x$. Thus, $\{t_x,s_x\}$ becomes a local blocking pair, and $t_x$ will leave $x$ to match to $s_x$. Since $t_x$ is the most-liked partner, $x$ will keep it in his memory, and $t_x$ will never be replaced by any other agent. At the same time, $t_x$ and $s_x$  will never break their match regardless which other agents become accessible. Therefore, $x$ has no benefit from remembering $t_x$, and all subsequent sequences will evolve in similar fashion as without memory of $x$. \hfill $\blacksquare$
\end{bspi}

We augment each agent in the circling gadget with a separate memory jammer. In the following Proposition~\ref{prop1}, we show that this augmented circling gadget does not allow a sequence of local improvement steps from $M=\emptyset$ to any locally stable matching. We then show in Theorem~\ref{satzQuality} below that deciding this question is again \classNP-hard.

\begin{prop}
\label{prop1}
There is a bipartite network game with general preferences, quality memory and initial matching $M = \emptyset$ such that no locally stable matching can be reached with local improvement steps from $M$.
\end{prop}
\begin{proof}
As seen in the example above, we can add a memory jammer for every agent $A,B,C,1,2,3$ of the circling gadget to make sure that initially each of the ordinary agents $x$ gets to match with an agent $t_x$ which symbolizes its top choice. Then $t_x$ can (and will) switch to some agent $s_x$ and stay in this matching for the rest of the dynamics. At the same time $t_x$ will stay in the memory of gadget agent $x$ indefinitely, thereby making his memory useless. Recall that we argued above, by suitably ranking the $b_i$-agents at the bottom of every list, they are never involved in any local blocking pair other than $\{i,b_i\}$. As such, we do not need memory jammers for $b_i$-agents.

Formally, we consider
\begin{align*} 
 U=&\{1,2,3\}\cup\{t_x\mid x\in\{A,B,C\}\}\cup \{s_x\mid x\in\{1,2,3\}\}\\
 W=&\{A,B,C,b_1,b_2,b_3\}\cup\{d_x\mid x\in\{1,2,3,A,B,C\}\}\\
 &\cup \{t_x\mid x\in\{1,2,3\}\cup\{s_x\mid x\in\{A,B,C\}\} \\
 L=&\{\{A,b_1\},\{B,b_2\},\{C,b_3\},\{b_1,1\},\{b_2,2\},\{b_3,3\}\}\\
 &\cup\{\{A,B\},\{B,C\},\{C,A\}\}\\
 &\cup\{\{s_x,x\},\{x,d_x\},\{d_x,t_x\}\mid x\in\{1,2,3,A,B,C\}\} \mbox{ and}\\
 E=&\{\{u,v\}\mid u\in\{1,2,3\}, v\in\{A,B,C\}\}\cup\{\{x,t_x\},\{t_x,s_x\}\mid x\in \{1,2,3,A,B,C\}\}\enspace.
\end{align*}

The new preference lists and a sketch of the links can be found in Fig.~\ref{fig:qualityMemory}. Note that we restrict the possible matching edges to the set of edges previously considered in the circling gadget and described for the memory jammers. It will become obviou below that by suitably ranking agents at the bottom of the lists, we can also assume that $E=U \times W$.

\begin{figure}
\begin{center}
\begin{tabular}[b]{|l|l|}\hline
Agent & Preference List \\\hline\hline
$1 $ & $t_1\succ C\succ B\succ A$ \\\hline
$2 $ & $t_2\succ A\succ C\succ B$ \\\hline
$3 $ & $t_3\succ B\succ A\succ C$ \\\hline
$A $ & $t_A\succ 3\succ 1\succ 2$ \\\hline
$B $ & $t_B\succ 1\succ 2\succ 3$ \\\hline
$C $ & $t_C\succ 2\succ 3\succ 1$ \\\hline
$t_x$ & $s_x\succ x$ for $x\in \{1,2,3,A,B,C\}$\\\hline
$s_x$ & $t_x$ for $x\in \{1,2,3,A,B,C\}$\\\hline
\end{tabular} 
\hspace{0.5cm}
%
\begin{tikzpicture}[thick,scale=0.45, every node/.style={transform shape}]
\tikzstyle{vertex}=[circle,draw,minimum size=2pt]

\node[vertex] (1) at (2,9) {$1$};  
\node[vertex] (2) at (6,12) {$2$};
\node[vertex] (3) at (10,9) {$3$};
\node[vertex] (A) at (2,1) {$A$};
\node[vertex] (B) at (6,4) {$B$};
\node[vertex] (C) at (10,1) {$C$};
\node[vertex] (b1) at (2,5) {\tiny $b_1$}; 
\node[vertex] (b2) at (6,8) {\tiny $b_2$};
\node[vertex] (b3) at (10,5) {\tiny $b_3$};

\node[vertex] (d1) at (1,7) {\tiny $d_1$}; 
\node[vertex] (d2) at (5,10) {\tiny $d_2$};
\node[vertex] (d3) at (11,7) {\tiny $d_3$};
\node[vertex] (dA) at (1,-1) {\tiny $d_A$}; 
\node[vertex] (dB) at (5,2) {\tiny $d_B$};
\node[vertex] (dC) at (11,-1) {\tiny $d_C$};
\node[vertex] (t1) at (0,9) {$t_1$};  
\node[vertex] (t2) at (4,12) {$t_2$};
\node[vertex] (t3) at (12,9) {$t_3$};
\node[vertex] (tA) at (0,1) {$t_A$};
\node[vertex] (tB) at (4,4) {$t_B$};
\node[vertex] (tC) at (12,1) {$t_C$};
\node[vertex] (s1) at (1,11) {$s_1$};  
\node[vertex] (s2) at (5,14) {$s_2$};
\node[vertex] (s3) at (11,11) {$s_3$};
\node[vertex] (sA) at (1,3) {$s_A$};
\node[vertex] (sB) at (5,6) {$s_B$};
\node[vertex] (sC) at (11,3) {$s_C$};

\path[link] (1) -- (b1);
\path[link] (2) -- (b2);
\path[link] (3) -- (b3);
\path[link] (A) -- (b1);
\path[link] (B) -- (b2);
\path[link] (C) -- (b3);
\path[link] (A) -- (B);
\path[link] (B) -- (C);
\path[link] (C) -- (A);
\path[link] (s1) -- (1) -- (d1) -- (t1);
\path[link] (s2) -- (2) -- (d2) -- (t2);
\path[link] (s3) -- (3) -- (d3) -- (t3);
\path[link] (sA) -- (A) -- (dA) -- (tA);
\path[link] (sB) -- (B) -- (dB) -- (tB);
\path[link] (sC) -- (C) -- (dC) -- (tC);

\end{tikzpicture}
\vspace{0.3cm}
\end{center}
\caption{\label{fig:qualityMemory} Preferences and links in the circling gadget with memory jammers as described in Proposition~\ref{prop1}}
\end{figure}

To show that no locally stable matching can be reached, start with $M = \emptyset$ and assume for contradiction that there is a sequence of local improvement steps leading to a locally stable matching. A matching cannot be locally stable until $\{x,t_{x}\}$ was created once for all $x\in\{1,2,3,A,B,C\}$, because otherwise it is accessible, matches $x$ to his favorite partner, and matches $t_{x}$ to the only possible accessible partner when being single. Afterwards, every such edge will be replaced by $\{t_{x},s_{x}\}$, which remains stable since it matches both partners to their most preferred choice. Let $x^*$ be the last agent for which $\{t_{x^*},s_{x^*}\}$ is generated. At that moment $x^*$ is unmatched, and every agent of $\{1,2,3,A,B,C\}$ will continue to hold his $t$-partner in his memory. As the $t$-agents are not willing to change their matching edges, there will be no edge created from memory from this point on. If $x^*\in\{1,2,3\}$, one of the agents in $\{A,B,C\}$ is unmatched as well and vice versa. This leaves us in the situation described in the dynamics of the circling gadget. Since all memory entries are filled by $t_x$-agents, from this state no locally stable matching can be reached.
\end{proof}

\begin{bem} \rm
The same memory reset also works if every agent remembers the best $k$ previous matches, for any number $k$, simply by applying $k$ copies of the memory reset construction for each agent.
\end{bem}

\begin{satz}
\label{satzQuality}
It is \classNP-hard to decide \Reach\ to an arbitrary locally stable matching in a bipartite network game with quality memory.
\end{satz}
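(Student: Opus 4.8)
The plan is to reduce \ThreeSAT\ to \Reach\ by reusing the construction from the proof of Theorem~\ref{satzNPC} and neutralizing quality memory throughout by means of the memory-reset gadget from Proposition~\ref{lem1}. Recall that in Theorem~\ref{satzNPC} the circling gadgets serve to \emph{force} each $v_s$-vertex to be matched to its $u_s$-partner in every reachable locally stable matching, which in turn forces the edges $\{u_s,v_s\}$ and reduces the analysis to the ordered traversal of the \ThreeSAT\ gadget handled in Theorem~\ref{satz1}. Under quality memory a circling gadget might in principle stabilize, so the first task is to guarantee that the circling gadgets keep failing to stabilize exactly as in the memoryless case.

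Concretely, I would attach to \emph{every} vertex $w$ of the construction a copy of the reset gadget of Proposition~\ref{lem1}, i.e.\ auxiliary vertices $u1_w,u2_w,u3_w$ with links $\{w,u1_w\},\{u1_w,u2_w\},\{w,u3_w\}$ and matching edges $\{w,u2_w\},\{u2_w,u3_w\}$, ranking $u2_w$ as the top choice of $w$ (with $u2_w,u3_w$ preferring each other above all). Placing $u2_w$ in the opposite partition from $w$ and $u3_w$ in the same partition keeps the game bipartite. As in Proposition~\ref{lem1}, no state can be locally stable until $\{w,u2_w\}$ has been created for all $w$, after which this edge necessarily migrates to $\{u2_w,u3_w\}$ and stays there forever, since both endpoints are then at their most preferred partner. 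From that moment on, the best partner $w$ ever had is the permanently unavailable $u2_w$, so $w$'s quality memory offers no blocking pair. In particular the reset-augmented circling gadgets behave exactly like the memoryless ones analyzed in Theorem~\ref{satzNPC} and fail to stabilize unless their forced vertex is matched outside.

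Given this, the two directions follow the pattern of Theorem~\ref{satzNPC}. For the forward direction, from a satisfying assignment I would first resolve all phantom edges $\{w,u2_w\}$ and let them settle at $\{u2_w,u3_w\}$, and then replay verbatim the memoryless improvement sequence of Theorem~\ref{satz1} that parks the variable-edges in the branching according to the assignment and lets the clause-edges pass, reaching a locally stable matching. For the backward direction, any reachable locally stable matching has all phantoms settled and, by the forcing argument of Theorem~\ref{satzNPC}, contains every edge $\{u_s,v_s\}$, whence the ordered-traversal analysis of Theorem~\ref{satz1} extracts a satisfying assignment.

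The main obstacle is to show that quality memory cannot \emph{enlarge} the set of reachable locally stable matchings, i.e.\ that the extra blocking pairs offered by memory never yield a shortcut for an unsatisfiable formula. After all phantoms have settled this is immediate, as argued above; the delicate part is the transient phase before a given $w$ has matched its phantom, during which $w$ might remember a genuine past partner. I would close this with a case analysis combining two observations: an improvement step strictly raises both endpoints in their own rankings, so while a vertex has been continuously matched its best-ever partner equals its current one and memory adds nothing; and for every vertex, any earlier partner recorded in memory is, at the relevant later time, already committed to a strictly more preferred vertex in the forcing order of the \ThreeSAT\ and circling gadgets, so the remembered pair is not blocking. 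Making this argument airtight at \emph{every} step of \emph{every} sequence — rather than only once the resets have settled — is where the real work lies; the remainder of the reduction is then identical to Theorem~\ref{satzNPC}.
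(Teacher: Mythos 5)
Your overall strategy -- combine the reachability-hardness construction with the memory-reset gadget of Proposition~\ref{lem1} -- is the same as the paper's, but the way you deploy the reset creates a genuine gap. The paper attaches the reset \emph{only} to the circling gadgets (where quality memory could otherwise let them stabilize) and handles the \ThreeSAT\ gadget with a purely structural argument, namely the one from Corollary~\ref{satz4}: any memory can only offer a $u$-vertex positions on the path it has \emph{already discovered}, so memory never lets a clause-edge jump ahead, and the overtaking bottleneck in the branching is unaffected. You instead attach resets to every vertex and must then control the transient phase before all resets have fired. Your second observation for that phase -- that any remembered earlier partner is, at the relevant later time, already committed to a strictly more preferred vertex -- is false in exactly the critical scenario: a $v$-vertex on the path can remember a $u$-vertex whose edge is later \emph{destroyed} by an overtaking edge; that $u$-vertex is then unmatched, the remembered pair is a genuine blocking pair, and the $u$-vertex re-enters the path from memory. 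This is not actually harmful (it only returns the edge to an already-visited position), but your stated argument does not establish that, and you yourself flag that closing it is ``where the real work lies.'' The missing ingredient is precisely the Corollary~\ref{satz4} argument.

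Your route also incurs a complication the paper avoids: a reset gadget on a path or branching vertex that fires \emph{after} that vertex carries a real matching edge destroys that edge irrecoverably (edges in the \ThreeSAT\ gadget cannot be re-created), so you would additionally need to argue that in every sequence reaching a locally stable matching all such resets fire early; you do this for the forward direction but not for the backward one. The clean fix is the paper's division of labor: resets only where memory is genuinely dangerous (the circling gadgets), and the already-discovered-positions argument for the \ThreeSAT\ gadget, for which no reset is needed at all.
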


\begin{proof}
For the proof we use the \ThreeSAT\ gadget described in Corollary~\ref{satz4}, where we proved that no memory can help to reach a given stable matching $M$. We then combine this instance with augmented circling gadgets (similar as we did in Corollary~\ref{satzNPC}). In this way, we ensure that every stable matching must include the edges of $M$ to stabilize the gadgets. This matching, however, can be reached if and only if the \ThreeSAT\ formula is satisfiable.
\end{proof}

As a slightly stronger result, the same hardness result might be shown with the empty initial matching. For this, one might be able to adjust the construction of Theorem~\ref{satzReviewer} in order to avoid that quality memory helps reachability. We leave this issue as an avenue for future work.

\paragraph{Recency Memory}
In recency memory, every agent remembers the last agent it has been matched to before, which is yet another very natural update strategy. For our treatment we focus on the case in which the network links satisfy $L \subseteq (W \times W) \cup (U \times W)$. This is a natural restriction when one side consists of inanimate objects that form resources to be consumed. Theorem~\ref{satzReviewer} shows \classNP-hardness of deciding \Reach\ in this class of instances. In contrast, with recency memory it is always possible to reach a locally stable matching.

\begin{satz}\label{recency}
For every bipartite network game with general preferences, links $L \subseteq (U \times W) \cup (W \times W)$, recency memory and every initial matching, there is a sequence of $O(|U|^2|W|^2)$ many local improvement steps to a locally stable matching.
\end{satz}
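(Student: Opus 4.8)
The plan is to give a \emph{constructive} sequence from the given initial matching, rather than to argue that every sequence converges -- the latter being false already for the (circling-type) structures ruled in on the $W$-side. The first step is to pin down accessibility under the hypothesis $L\subseteq(U\times W)\cup(W\times W)$. Since $U$ carries no internal links and every matching edge joins $U$ to $W$, any distance-2 witness between $u\in U$ and $w\in W$ must pass through a vertex of $W$. Concretely, $u$ and $w$ are accessible in $M$ iff $w$ is a direct link-neighbor of $u$, or $w$ is a $W$-$W$-link-neighbor of some $x\in W$ with $\{u,x\}\in L\cup M$, or $w$ lies in $u$'s recency memory (or $u$ in $w$'s). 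Thus the only structure that matters on the $W$-side is the graph $H$ on $W$ induced by the $W$-$W$ links, together with each $u$'s link-neighborhood and its current and remembered partner.

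With this in hand I would run a localized Gale--Shapley / Roth--Vande Vate process in which unmatched $U$-vertices act as proposers. The discipline is to resolve, whenever possible, a local blocking pair $\{u,w\}$ in which $u$ is currently \emph{unmatched}: such a step makes $w$ strictly better off and displaces only the $U$-vertex $M(w)$, so no $W$-vertex is made worse. Hence $\Phi(M)=\sum_{w\in W}\mathrm{rank}_w(M(w))$ is nondecreasing along the run and strictly increases at every such proposal, which is the engine of both convergence and the polynomial bound. A proposal chain starts at an unmatched $U$-vertex, has it propose to the most preferred accessible $w$ that prefers it, and continues from the displaced $U$-vertex; when the active vertex can no longer improve we pick up any remaining blocking $U$-vertex, stopping exactly when no local blocking pair remains, following the standard chain bookkeeping of the template.

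The subtle point, and the place where recency memory is indispensable, is accessibility: an \emph{unmatched} $U$-vertex reaches through the graph only its link-neighborhood and the $H$-neighbors thereof, so the proposal target it needs may be out of range. Here the recency memory of a displaced proposer keeps it directly accessible to the partner it has just vacated, hence tied to the $H$-component in which it was active; I would use this to show that a proposer can always be routed to the vertex it must propose to next, either directly via memory or by a short auxiliary walk of improving moves inside its $H$-component that re-establishes a distance-2 witness. This is precisely the property that fails for a bare circling structure (where a freed vertex is stranded and the dynamics cycle) and that \emph{quality} memory cannot supply, since by Proposition~\ref{lem1} it clings to a best partner that may be permanently unavailable; recency memory instead always stores a partner the dynamics genuinely just left, so the retained access is ``live''.

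For the bound, $\Phi$ takes $O(|U||W|)$ distinct values, giving $O(|U||W|)$ strictly increasing proposal steps, with auxiliary accessibility walks interposed; bounding each walk by $O(|U||W|)$ improving moves (a traversal inside an $H$-component plus the re-stabilization of the few vertices it disturbs) yields the claimed $O(|U|^2|W|^2)$. I expect the main obstacle to be exactly the accessibility lemma underpinning this: proving that at every reachable non-locally-stable state recency memory together with the $W$-side link structure always furnishes an accessible improving route for the active proposer, and that the auxiliary walks needed to create such routes -- which can transiently unmatch $W$-vertices -- can be charged (e.g. by having trailing proposers re-match the vacated vertices) so that $\Phi$ stays on its upward trajectory and the step count does not blow up. Everything else then follows the Gale--Shapley / Roth--Vande Vate pattern.
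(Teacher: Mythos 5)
Your potential function and the one-sided monotonicity idea ($W$-vertices never lose, one strictly gains per proposal) are exactly the engine of the paper's proof, and you have correctly located the crux in accessibility. But there are two genuine gaps. First, your discipline of resolving only blocking pairs $\{u,w\}$ with $u$ unmatched can stall: a state may fail to be locally stable solely because some \emph{matched} $u$ is in a blocking pair, and resolving that pair strands $u$'s old partner and decreases $\Phi$. The paper disposes of this with a separate \emph{preparation phase}: first let matched $U$-vertices improve greedily (this terminates in $|U|\cdot|W|$ steps because each matched $u$ strictly improves and the set of matched $U$-vertices only shrinks), and then argue as an invariant of the main phase that, since $W$-vertices never get worse, no matched $U$-vertex ever re-enters a blocking pair. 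Without this, your "whenever possible" caveat is doing unacknowledged work.

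Second, and more seriously, your accessibility lemma -- that a freshly displaced proposer can always be routed to its next target via memory or an auxiliary walk -- is the step you flag as the main obstacle, and it does not hold in the form you need: a displaced $u'$ remembers only the $w$ it just vacated, which is now matched to someone $w$ prefers, so that memory buys $u'$ nothing toward a \emph{new} partner. The paper sidesteps this entirely by inverting the roles in the chain. The \emph{same} vertex $u$ keeps walking until it is in no blocking pair: once matched it sees two hops through its current matching edge plus the $W$--$W$ links, so no memory is needed for the walker. Recency memory is used only by the \emph{bystanders}: each $u'$ displaced from $w$ along the walk re-creates $\{u',w\}$ from memory as soon as the walker moves on and vacates $w$. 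That memory access is trivially "live," the intermediate $W$-vertices end the round unchanged, the final $W$-vertex strictly improves, and the round costs $O(|W|+|U|)$ steps, giving $O(|U||W|)$ rounds and the claimed bound without any routing lemma. Your charging scheme for "auxiliary walks that transiently unmatch $W$-vertices" would have to reprove this from scratch and is not supplied.
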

\begin{proof}
Our basic approach is to construct the sequence in two phases similarly as in~\cite{AckermannGMRV11}. In the first phase, we let the matched agents from $U$ improve, but ignore the unmatched ones. In the second phase, we execute iterations, where in each iteration we pick a $u \in U$ and only do improvements steps involving $u$. Here we keep track of the agents from $W$ and ensure they improve at the end of each iteration.

\textit{Preparation phase}: As long as there is at least one $u \in U$ with $u$ matched and $u$ part of a local blocking pair, allow $u$ to switch to the better partner.

The preparation phase terminates after at most $|U|\cdot|W|$ steps, as in every round one matched $u \in U$ strictly improves in terms of preference. This can happen at most $|W|$ times for each matched $u$. In addition, the number of matched agents from $U$ can only decrease.

\textit{Memory phase}: As long as there is a $u \in U$ with $u$ part of a local blocking pair, pick $u$ and execute a sequence of local improvement steps involving $u$ until $u$ is not part of any local blocking pair anymore. For every edge $e = \{u',w\}$ with $u' \neq u$ that was deleted during the sequence, recreate $e$ from the memory of $u'$.\\

We claim that if we start the memory phase after the preparation phase, at the end of every iteration we have the following invariants: The agents of $W$ that have been matched before are still matched, they do not have a worse partner than before, and at least one of them is matched strictly better than before. Also, only unmatched agents from $U$ are involved in local blocking pairs.

Obviously, at the end of the preparation phase, all local blocking pairs contain unmatched agents of $U$, i.e., initially our invariant holds. Let $u$ be the agent chosen in the following iteration of the memory phase. At first, we consider the outcome for $w\in W$. If $w$ is the agent matched to $u$ in the end, then $w$ clearly has improved. Otherwise $w$ gets matched to its former partner (if it had one) through memory and thus is matched similarly as before. In particular, every $w$ that represents an improvement to some $u'$ but was blocked by a higher ranked agent, still remains blocked. Together with the fact that we execute local improvement steps involving $u$ until it is not part of a local blocking pair anymore, this guarantees that all matched agents of $U$ cannot improve at the end of this iteration of the memory phase. As one agent of $W$ strictly improves in every iteration, we have at most $|U|\cdot|W|$ iterations in the memory phase, where every iteration consists of at most $|W|$ steps by $u$ and at most $|U|-1$ edges reproduced from memory.
\end{proof}

The structure of $L$, combined with recency memory, plays a key role in this proof. During the memory phase, an agent $u \in U$ can only become unmatched if its current partner from $W$ deviates to a more desirable match. Due to the structure of $L$, agents in $U \setminus \{u\}$ and $W$ do not require a matching edge to $u$ in order to discover more desirable matches. Hence, it is not necessary to alter the memory of $u$ to stabilize the system with respect to another agent $u' \in U$. Due to recency memory, agent $u$ can reinsert the edge to the last partner, thereby allowing to recover the state of the system w.r.t.\ $u$ and make overall progress towards a locally stable matching. We illustrate the arguments with the following example.

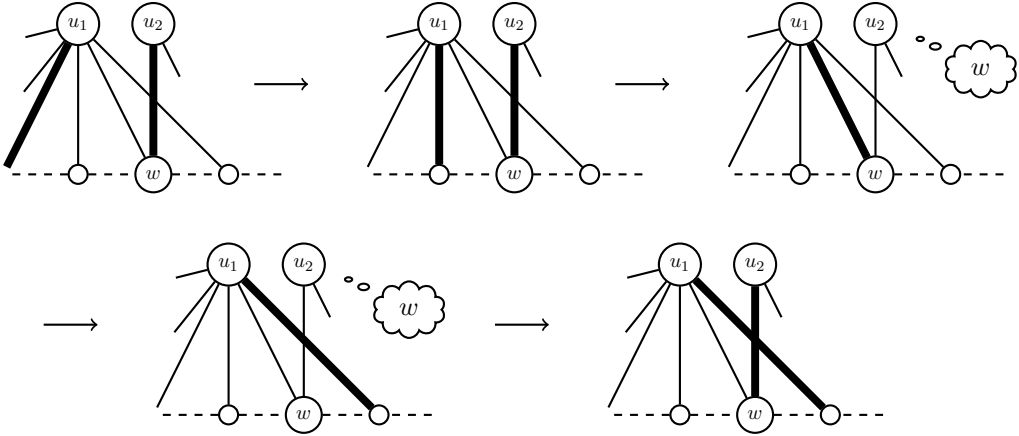
\begin{figure}
\begin{center}
\begin{tikzpicture}[thick,scale=0.4]
\node[scale=0.7] (w1) at (5,5)[draw=black, circle]{$u_2$};
\node[scale=0.7] (w2) at (2.5,5)[draw=black, circle]{$u_1$};
\node[scale=0.7] (m1) at (2.5,0)[draw=black, circle]{};
\node[scale=0.7] (m2) at (5,0)[draw=black, circle]{$w$};
\node[scale=0.7] (m3) at (7.5,0)[draw=black, circle]{};
\node[scale=0.7] (a1) at (0,0)[draw=none, fill=none]{};
\node[scale=0.7] (a2) at (9.5,0)[draw=none, fill=none]{};

\node[scale=0.7] (b1) at (6,3)[draw=none, fill=none]{};
\node[scale=0.7] (b2) at (0.5,4.5)[draw=none, fill=none]{};
\node[scale=0.7] (b3) at (0.5,2.5)[draw=none, fill=none]{};

\node[scale=0.7] (w11) at (17,5)[draw=black, circle]{$u_2$};
\node[scale=0.7] (w21) at (14.5,5)[draw=black, circle]{$u_1$};
\node[scale=0.7] (m11) at (14.5,0)[draw=black, circle]{};
\node[scale=0.7] (m21) at (17,0)[draw=black, circle]{$w$};
\node[scale=0.7] (m31) at (19.5,0)[draw=black, circle]{};
\node[scale=0.7] (a11) at (12,0)[draw=none, fill=none]{};
\node[scale=0.7] (a21) at (21.5,0)[draw=none, fill=none]{};

\node[scale=0.7] (b11) at (18,3)[draw=none, fill=none]{};
\node[scale=0.7] (b21) at (12.5,4.5)[draw=none, fill=none]{};
\node[scale=0.7] (b31) at (12.5,2.5)[draw=none, fill=none]{};

\node[scale=0.7] (w12) at (29,5)[draw=black, circle]{$u_2$};
\node[scale=0.7] (w22) at (26.5,5)[draw=black, circle]{$u_1$};
\node[scale=0.7] (m12) at (26.5,0)[draw=black, circle]{};
\node[scale=0.7] (m22) at (29,0)[draw=black, circle]{$w$};
\node[scale=0.7] (m32) at (31.5,0)[draw=black, circle]{};
\node[scale=0.7] (a12) at (24,0)[draw=none, fill=none]{};
\node[scale=0.7] (a22) at (33.5,0)[draw=none, fill=none]{};

\node[scale=0.7] (b12) at (30,3)[draw=none, fill=none]{};
\node[scale=0.7] (b22) at (24.5,4.5)[draw=none, fill=none]{};
\node[scale=0.7] (b32) at (24.5,2.5)[draw=none, fill=none]{};

\node[scale=0.7] (w13) at (10,-3)[draw=black, circle]{$u_2$};
\node[scale=0.7] (w23) at (7.5,-3)[draw=black, circle]{$u_1$};
\node[scale=0.7] (m13) at (7.5,-8)[draw=black, circle]{};
\node[scale=0.7] (m23) at (10,-8)[draw=black, circle]{$w$};
\node[scale=0.7] (m33) at (12.5,-8)[draw=black, circle]{};
\node[scale=0.7] (a13) at (5,-8)[draw=none, fill=none]{};
\node[scale=0.7] (a23) at (14.5,-8)[draw=none, fill=none]{};

\node[scale=0.7] (b13) at (11,-5)[draw=none, fill=none]{};
\node[scale=0.7] (b23) at (5.5,-3.5)[draw=none, fill=none]{};
\node[scale=0.7] (b33) at (5.5,-5.5)[draw=none, fill=none]{};

\node[scale=0.7] (w14) at (25,-3)[draw=black, circle]{$u_2$};
\node[scale=0.7] (w24) at (22.5,-3)[draw=black, circle]{$u_1$};
\node[scale=0.7] (m14) at (22.5,-8)[draw=black, circle]{};
\node[scale=0.7] (m24) at (25,-8)[draw=black, circle]{$w$};
\node[scale=0.7] (m34) at (27.5,-8)[draw=black, circle]{};
\node[scale=0.7] (a14) at (20,-8)[draw=none, fill=none]{};
\node[scale=0.7] (a24) at (29.5,-8)[draw=none, fill=none]{};

\node[scale=0.7] (b14) at (26,-5)[draw=none, fill=none]{};
\node[scale=0.7] (b24) at (20.5,-3.5)[draw=none, fill=none]{};
\node[scale=0.7] (b34) at (20.5,-5.5)[draw=none, fill=none]{};

\node (h1) at (8,3) [draw=none,fill=none]{};
\node (h2) at (10.5,3) [draw=none,fill=none]{};
  
\node (h3) at (20,3) [draw=none,fill=none]{};
\node (h4) at (22.5,3) [draw=none,fill=none]{};

\node (h5) at (1,-5) [draw=none,fill=none]{};
\node (h6) at (3.5,-5) [draw=none,fill=none]{};
  
\node (h7) at (16,-5) [draw=none,fill=none]{};
\node (h8) at (18.5,-5) [draw=none,fill=none]{};

\path[->]
(h1) edge (h2)
(h3) edge (h4)
(h5) edge (h6)
(h7) edge (h8);

\path[link]
(m1) edge (m2)
(m2) edge (m3)
(m11) edge (m21)
(m21) edge (m31)
(m12) edge (m22)
(m22) edge (m32)
(m13) edge (m23)
(m23) edge (m33)
(m14) edge (m24)
(m24) edge (m34)
(m1) edge (a1)
(a2) edge (m3)
(m11) edge (a11)
(a21) edge (m31)
(m12) edge (a12)
(a22) edge (m32)
(m13) edge (a13)
(a23) edge (m33)
(m14) edge (a14)
(a24) edge (m34);

\path[edge]
(b1) edge (w1)
(b2) edge (w2)
(b3) edge (w2)
(b11) edge (w11)
(b21) edge (w21)
(b31) edge (w21)
(b12) edge (w12)
(b22) edge (w22)
(b32) edge (w22)
(b13) edge (w13)
(b23) edge (w23)
(b33) edge (w23)
(b14) edge (w14)
(b24) edge (w24)
(b34) edge (w24);

\path[existEdge]
(w1) edge (m2)
(w11) edge (m21)
(w14) edge (m24);

\path[edge]
(w12) edge (m22)
(w13) edge (m23);

\path[existEdge]
(w2) edge (a1);

\path[edge]
(w21) edge (a11)
(w22) edge (a12)
(w23) edge (a13)
(w24) edge (a14);

\path[existEdge]
(w21) edge (m11);

\path[edge]
(w2) edge (m1)
(w22) edge (m12)
(w23) edge (m13)
(w24) edge (m14);

\path[existEdge]
(w22) edge (m22);

\path[edge]
(w2) edge (m2)
(w21) edge (m21)
(w23) edge (m23)
(w24) edge (m24);

\path[existEdge]
(w23) edge (m33)
(w24) edge (m34);

\path[edge]
(w2) edge (m3)
(w21) edge (m31)
(w22) edge (m32);

\node (a32) at (30,4)[draw=none, fill=none]{};
\node (a33) at (11,-4)[draw=none, fill=none]{};

\node[overlay,cloud callout,callout relative pointer={(-0.4cm,0.2cm)},%
        aspect=1.5,draw=black] at (32.5,3.5) {\small $w$};
\node[overlay,cloud callout,callout relative pointer={(-0.4cm,0.2cm)},%
        aspect=1.5,draw=black] at (13.5,-4.5) {\small $w$};
\end{tikzpicture}
\end{center}
\caption{\label{fig:recencyMemory}Example of a sequence computed by the algorithm in Theorem~\ref{recency}}
\end{figure}

\begin{bspi}~
\rm
In the example given in Figure~\ref{fig:recencyMemory} we execute an iteration of the memory phase. $u_1 \in U$ is the agent currently chosen. During the phase, $u_1$ repeatedly switches partner to improve. In that process, it also matches to agent $w$. $w$ was matched to $u_2$ at the beginning of the iteration. However, $w$ is only an intermediate partner for $u_1$, who can improve further once new agents become accessible via the links of $w$. Thus, at the end of the iteration when $u_1$ is involved in no further local blocking pairs, the edge $\{u_2,w\}$ can be retrieved from $u_2$'s memory.\hfill $\blacksquare$
\end{bspi}

\begin{bem} \rm
Note that the network used in Proposition~\ref{prop1} does not fulfill the criteria of $L \subseteq (U \times W) \cup (W \times W)$, and thereby the impossibility results for quality memory do not apply. However, if we discard all $d$-agents with their links as well as all $\{x,s_x\}$-links, install direct links between $t_x$ and $s_x$ and use $\{\{x,t_x\}\mid x\in \{1,2,3\}\cup\{A,B,C\}\}$ as initial state, the memory gets jammed in the same way, and the social network now is limited to $L \subseteq (U \times W) \cup (W \times W)$. Theorem~\ref{satzQuality} easily continues to hold also with this type of circling gadget, which proves that for quality memory \Reach\ remains \classNP-hard even in instances with $L \subseteq (U \times W) \cup (W \times W)$.
\end{bem}

The following is a direct corollary from the previous theorem. In random dynamics, with probability 1 in the limit, we will at least once start in a matching and execute the sequence described in the last theorem. 

\begin{fol}
For every bipartite network game with general preferences, links $L \subseteq (U \times W) \cup (W \times W)$, recency memory, and every initial matching, random dynamics converge to a locally stable matching in the limit with probability 1.
\end{fol}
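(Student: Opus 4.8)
The plan is to view the random dynamics as a stochastic process on the finite set of configurations $(M,\mu)$, where $M$ is the current matching and $\mu$ records, for each player, the partner currently held in recency memory. Since there are finitely many matchings and finitely many memory contents, the configuration space is finite, and the locally stable matchings are precisely the absorbing configurations: once no local blocking pair is available (including memory-based ones), no move is ever made again. Convergence with probability $1$ is therefore equivalent to the process being absorbed with probability $1$, so that is the target.

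The engine is the previous theorem combined with a uniform positive-probability lower bound. First I would observe that from \emph{any} configuration the theorem supplies a sequence of at most $N := O(|U|^2|W|^2)$ local improvement steps ending in a locally stable matching, and that every step of that sequence resolves a blocking pair which is genuinely available (whether via a distance-$2$ path or from memory), hence is one of the admissible random moves at that moment. Next I would bound the number of simultaneously available blocking pairs at any configuration by $p := |E|$, since every blocking pair is an edge of $E$. Consequently, at each step the probability that random dynamics selects the specific pair prescribed by the theorem's sequence is at least $1/p$, so from any configuration the probability of reaching a locally stable matching within the next $N$ steps is at least $q := (1/p)^{N} > 0$, a constant independent of the starting configuration.

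With this uniform bound, convergence follows by a standard geometric argument. I would partition the time axis into consecutive blocks of $N$ steps. Conditioned on the configuration at the start of any block, the probability of failing to reach a locally stable matching during that block is at most $1-q < 1$, uniformly. Hence the probability that the process has not been absorbed after $k$ blocks is at most $(1-q)^{k}$, which tends to $0$ as $k \to \infty$. Therefore the process is absorbed in a locally stable matching with probability $1$, and since such matchings are absorbing, the dynamics converge to one in the long run with probability $1$.

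I expect the main obstacle to be the bookkeeping that makes the per-block bound rigorous, rather than the limiting argument itself. Concretely, one must check that the theorem can be invoked when a block begins from an arbitrary (non-empty) recency-memory content rather than the empty start: that the preparation phase and memory phase of the theorem's proof can still be executed move-by-move as valid random steps from such a configuration, and that the edges ``recreated from memory'' there indeed correspond to admissible blocking-pair resolutions under the random dynamics. The monotone improvement that bounds the preparation phase by $|U|\cdot|W|$ steps is insensitive to the initial memory, so I expect this check to go through; once it is confirmed that every prescribed step is an available random move and that $p$ uniformly bounds the number of available moves, the uniform lower bound $q$ and the geometric decay close the argument.
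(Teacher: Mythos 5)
Your proposal is correct and matches the paper's own (much terser) justification, which is exactly the observation that with probability 1 the random dynamics will eventually execute, from some reached configuration, the polynomial-length sequence guaranteed by the preceding theorem. Your fleshed-out version with the uniform lower bound $q=(1/|E|)^N$ per block and geometric decay, including the check that the theorem's two-phase sequence remains valid from arbitrary memory contents, is a sound elaboration of the same argument.
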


Sadly, we cannot always expect fast convergence here, as there are instances where random dynamics yield a sequence of exponential length with high probability even in ordinary stable marriage where all pairs of agents are accessible (see~\cite{AckermannGMRV11}).

Our constructions require only the agents from $U$ to have recency memory, which allows us to show the same results when we have no memory for agents in $W$. In contrast, if we omit the memory for $U$ and just allow recency memory for $W$, no sequence of local improvement steps might lead to a locally stable matching. 
\begin{prop}
\label{prop2}
There is a bipartite network game with general preferences, links $L \subseteq (U \times W) \cup (W \times W)$, recency memory for agents in $W$ and initial matching $M = \emptyset$ such that no locally stable matching can be reached with local improvement steps from $M$.
\end{prop}
\begin{proof}
Consider the circling gadget. First, as noted before, memory for the $b_i$-agents has no effect since the only matching edges that can evolve are $\{i,b_i\}$. All agents of $W$ are matched in every locally stable matching, and none of the edges present in any locally stable matching is accessible via two links. As such, the last edge to complete any locally stable matching must come from memory. Now $\{1,A\},$ $\{2,B\},$ $\{3,C\}$ are always accessible. $\{1,C\},$ $\{2,A\},$ and $\{3,B\}$ only get removed if the $W$-agent finds a better partner. This shows that the locally stable matching $\{\{1,C\},$ $\{2,A\},$ $\{3,B\}\}$ cannot be reached, even with recency memory. By symmetry, assume the last edge recovered from memory be $\{2,A\}$. Then $2$ must be the last partner $A$ was matched to. Since $A$ must leave $2$ for a better partner (for $1$), and $\{1,A\}$ must become deleted subsequently, $A$ removes $2$ from its memory.

As such, this leaves us with the locally stable matching $M^*=\{\{1,B\}$, $\{2,C\},$ $\{3,A\}\}$. Suppose the last edge of the matching created from memory is $\{3,A\}$. Then $\{2,B\}$ must have been created from the memory of $B$, since otherwise it needs $\{1,A\}$ to exist, which implies the memory of $A$ is not set ot $3$. For the same reason, $\{2,C\}$ must have been created from memory, since for symmetric reasons $\{1,B\}$ had not been able to be created from memory. Hence, in order to reach $\{\{1,B\}$, $\{2,C\},$ $\{3,A\}\}$ we must create all three edges from memory. Now consider for agent $A$ the earliest step where the memory is set to $3$ and remains set like this until the end. We say the memory for $A$ is finalized in this step. Similarly, we consider the steps where the memories of agents $B$ and $C$ are finalized. Assume w.l.o.g.\ that the memory of $A$ is finalized last. Then edge $\{3,A\}$ must be constructed by existence of $\{3,C\}$. This, however, implies that the memory for $C$ is finalized later, a contradiction.
\end{proof}
The proposition can be used with the construction of Corollary~\ref{satz4} in the same manner as Proposition~\ref{prop1} was used to show Theorem~\ref{satzQuality}. This yields the following result. As for quality memory, a similar result might be possible from the empty initial matching. We leave this issue as an open problem.
\begin{satz}
\label{satzRecency}
It is \classNP-hard to decide \Reach\ to an arbitrary locally stable matching in a bipartite network game when recency memory exists only for one partition.
\end{satz}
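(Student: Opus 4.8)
The plan is to mirror the proof of Theorem~\ref{satzQuality} line for line, replacing the quality-memory gadget of Proposition~\ref{lem1} by the recency-memory gadget of Proposition~\ref{lem2}. The reduction is again from \ThreeSAT. I would start from the job-market construction of Corollary~\ref{satz4}, which encodes a formula so that the single target matching $M^*=\{\{u_s,v_s\}\}$ can be reached from $\emptyset$ if and only if the formula is satisfiable, and --- crucially --- this equivalence is insensitive to \emph{any} form of memory, recency memory included. The argument there is that memory can make an edge accessible only up to the part of the path a $u$-vertex has already explored, and a variable-edge can be removed only by a variable-edge of higher index; hence memory never lets an edge jump ahead, so it stays useless for the \ThreeSAT\ part regardless of which single partition carries it.

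On top of this I would attach, exactly as in Theorem~\ref{satzNPC}, one circling gadget to every vertex $v_s$ with $s\in\{x_1,\dots,x_k,C_1,\dots,C_l\}$, make $v_s$ play the role of the forced ($1$-)vertex, and let $v_s$ rank all of its external $u$-neighbours above its gadget vertices. Each circling gadget is taken in the one-partition recency-memory form of Proposition~\ref{lem2}, whose content is precisely that such a gadget, started free, can never be driven to a locally stable state by local improvement steps unless the forced vertex is matched to a strictly preferred external partner. This is the recency analogue of the non-stabilization fact that the no-memory proof of Theorem~\ref{satzNPC} relied on, so it slots directly into that argument.

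The reachability equivalence then follows verbatim from Theorem~\ref{satzNPC}. If some reachable state left a vertex $v_s$ unmatched to all $u$-vertices, its attached circling gadget would keep circling and the state could not be locally stable; so in every reachable locally stable matching all the $v_s$ are matched to $u$-vertices, and the descending-index induction of Theorem~\ref{satzNPC} (beginning with $v_{C_l}$, which only $u_{C_l}$ can serve) pins the edge set down to exactly $M^*$. Thus an arbitrary locally stable matching is reachable iff $M^*$ is reachable, which by Corollary~\ref{satz4} holds iff the \ThreeSAT\ formula is satisfiable, establishing \classNP-hardness.

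The step I expect to be the real work is not the combination itself but making sure the recency memory is assigned to a single partition \emph{consistently} across the whole instance and that the two building blocks still behave as advertised at the interface, where each $v_s$ simultaneously sits on the reduction path and acts as the forced vertex of a circling gadget. Respecting both bipartiteness ($E\subseteq U\times W$) and the link restriction $L\subseteq(U\times W)\cup(W\times W)$ forces the cycle vertices of every circling gadget into the memory-carrying partition, and I would have to re-run the case analysis of Proposition~\ref{lem2} in this combined placement to confirm that the recency entry stored at $v_s$ neither lets a circling gadget stabilize prematurely nor lets the reduction bypass satisfiability. Because quality memory is trivially neutralised everywhere whereas recency memory is only neutralised through the delicate argument of Proposition~\ref{lem2}, this bookkeeping is where the recency proof genuinely differs from --- and is harder than --- the quality-memory proof of Theorem~\ref{satzQuality}.
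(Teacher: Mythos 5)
Your proposal matches the paper's proof exactly: the paper also combines the recency-memory circling gadget of Proposition~\ref{lem2} with the memory-insensitive \ThreeSAT\ construction of Corollary~\ref{satz4}, in the same manner that Proposition~\ref{lem1} was combined with it to prove Theorem~\ref{satzQuality}. Your closing remark about verifying the partition assignment at the interface is a sensible caveat, but it does not change the argument, which the paper states even more tersely than you do.
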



\paragraph{Random Memory}
\label{sec:random}

Let us now consider how random memory might help us reach a locally stable matching from every starting state even in general bipartite network games. Again, we cannot expect fast convergence due to the full-information lower bound in~\cite{AckermannGMRV11}. However, we show that random memory can help with reachability:

\begin{satz}
For every bipartite network game with random memory, random dynamics converge to a locally stable matching with probability 1. 
\end{satz}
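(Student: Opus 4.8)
The plan is to show that random dynamics almost surely reach a locally stable matching by arguing that from \emph{any} reachable non-stable state, there is a positive-probability path to a locally stable matching whose length is bounded by a fixed polynomial. Once this is established, a standard Borel--Cantelli / geometric-trials argument finishes the proof: partition the infinite run into consecutive blocks of that fixed length; each block independently has probability at least some $p>0$ of hitting a locally stable matching (conditioned on wherever the run happens to be at the block's start), so with probability 1 infinitely many blocks succeed, and since a locally stable matching is absorbing under the dynamics (no blocking pair means no further step is ever taken), the process terminates at such a matching. Thus the crux reduces to a purely combinatorial reachability claim with random memory.

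First I would fix an arbitrary current matching $M$ that is not locally stable and describe an explicit finite sequence of local improvement steps that reaches a locally stable matching, using only accesses that have positive probability under random memory. The key leverage is that random memory, unlike recency or quality memory, can in any given round make \emph{any} previously-seen partner accessible: with positive probability every player simultaneously remembers exactly the partner we want it to. This lets me emulate, step by step, the constructive sequence guaranteed to exist for ordinary stable marriage with full information~\cite{Ackermann11} (or equivalently the two-phase preparation/memory argument used for recency memory above). At each step of that reference sequence the blocking pair I wish to resolve is accessible with positive probability because the required memory contents occur with positive probability, and the accompanying recreations of displaced edges $\{u',w\}$ can likewise be realized from $u'$'s memory with positive probability in the same round.

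The main obstacle, and where I would spend the most care, is that random memory is \emph{not} persistent: what a player remembers is re-randomized every round among its past partners, so I cannot assume a partner I parked in memory several rounds ago is still available now. This breaks the clean two-phase scheduling that worked for recency memory, where memory contents were controllable and stable. To handle this I would not try to maintain a long-lived configuration; instead I would bound the length of the emulated sequence by a polynomial $T = O(|U|^2|W|^2)$ (matching the recency bound, since the target-reaching sequence is the same) and then demand that all the required memory realizations across all $T$ steps occur in the correct rounds. Because each round's memory draw is independent and there are finitely many players each with a bounded set of past partners, the probability that an entire length-$T$ prescribed sequence of accesses is simultaneously feasible is a product of finitely many positive terms, hence bounded below by some $p>0$ depending only on the instance. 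This uniform lower bound $p$, independent of the block's starting state (one takes the minimum over the finitely many reachable states), is exactly what the block-based geometric argument in the first paragraph requires, completing the proof.
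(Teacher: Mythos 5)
Your overall skeleton (a positive-probability, bounded-length escape path from every reachable configuration, followed by a geometric-trials / Borel--Cantelli wrapper) is sound and close in spirit to the paper's argument. The genuine gap is in the middle step: you claim that each blocking pair of the full-information reference sequence of~\cite{Ackermann11} ``is accessible with positive probability because the required memory contents occur with positive probability.'' Random memory only ever contains \emph{previous matching partners}, so it cannot make a pair $\{u,v\}$ accessible unless $u$ and $v$ have already been matched to each other at some earlier round (or they happen to be at distance $2$ in $L\cup M$). The Roth--Vande-Vate-style sequence on the complete preference lists will in general resolve blocking pairs between players who have never met and are not network-accessible; no realization of the memory makes such a step available, so the emulation breaks down. (The two-phase recency argument is not a substitute here either: it assumes $L\subseteq (U\times W)\cup(W\times W)$, whereas this theorem covers general bipartite games, and its memory phase relies on genuine network accessibility for the unmatched $u$, not on memory alone.)

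The paper closes exactly this hole with a decomposition your proposal is missing: it partitions the infinite run into phases according to the number of \emph{distinct edges ever created}. This quantity is nondecreasing and bounded by $|E|$, so there is a final phase $t^*$. Within that phase the set of usable edges is fixed, and every one of them has been a past match of both endpoints, hence lies in the support of their random memory; the restricted instance is therefore effectively a full-information stable marriage instance on those $t^*$ edges, and the reference sequence applies verbatim with positive probability. Maximality of $t^*$ also supplies a second point your argument needs but does not verify: the matching reached is stable only with respect to the discovered edges, and one must argue that no further pair ever becomes accessible --- otherwise the state you reach is not actually a locally stable matching and the block does not succeed. If you replace the direct emulation by this phase argument (or, equivalently, iterate ``converge on the discovered edge set, then discover one new edge'' at most $|E|$ times), your wrapper goes through.
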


\begin{proof}
Our proof combines an idea used in~\cite{Hoefer13} with a convergence result in~\cite{RothVV90}. We consider a sequence of random local improvement steps and divide it into phases. Whenever a new edge is created for the first time, a new phase starts. Hence, phase $t$ contains the part of the sequence, where exactly $t$ edges have been created for the first time. Within a phase we have a fixed number of matching edges available (from the network or from memory). Consider phase $t^*$ where $t^*$ is the maximal phase of the sequence. $t^*$ exists, as $t$ is monotonically increasing and limited by $|E|$. We know that every phase $t < t^*$ ends after a finite number of steps. Hence we only have to show that phase $t^*$ is finite in expectation. Roth and Vande Vate~\cite{RothVV90} demonstrate how to construct a sequence of blocking pair resolutions to form a stable matching with full information when all possible matching edges are accessible. In phase $t^*$, we have $t^*$ edges that can be used for the matching, and all of them can also be remembered. Thus, with non-zero probability there is an initial state such that the random memory remembers the local blocking pairs from the sequence in the correct order and the random dynamics implement the local blocking pair resolutions in the correct way. Thus, phase $t^*$ is ends after a finite number of steps in expectation. This proves the theorem.
\end{proof}

\section{Centralized problems}
\label{sec:central}

\subsection{Maximum locally stable matching}
\label{sec:IS}

The size of locally stable matchings in an instance can be highly variable -- for example, in the trivial example where there are no links, the empty matching as well as a perfect matching are locally stable. From a designer's perspective, since all agents strive to be matched, it is desirable to form locally stable matchings of maximal size. Unfortunately, we will show that there is a close connection between maximum independent sets and maximum locally stable matchings. This allows us to transfer hardness of approximation results for independent set to locally stable matching. 

\begin{satz}
\label{satz9}
For every graph $G$ we can build a network $N$ such that $N$ holds a maximum locally stable matching of size $|V\left[G\right]|+k$ iff $G$ has a maximum independent set of size $k$.
\end{satz}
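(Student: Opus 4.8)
The plan is to reduce from \textsc{Maximum Independent Set}. Given $G=(V_G,E_G)$, I would build a job-market game $\Gamma_G$ (so every vertex of $U$ is isolated in $N$ and all links lie inside $W$, as in Section~\ref{sec:model}) whose locally stable matchings correspond, up to a fixed additive shift $|V_G|$, to the independent sets of $G$. The target ``iff'' would follow from the stronger claim that the \emph{maximum} locally stable matching of $\Gamma_G$ has size exactly $|V_G|+\alpha(G)$, where $\alpha(G)$ is the independence number: then $\alpha(G)=k$ holds precisely when the maximum locally stable matching has size $|V_G|+k$. Concretely I attach to every $v\in V_G$ a \emph{base} edge $\{a_v,p_v\}$ living in its own inert component (where $p_v$ has no links at all and $a_v$ is isolated in $N$), so that this edge can never create or participate in a local blocking pair and is therefore contained in every maximum locally stable matching; this yields the clean additive $|V_G|$ with no bookkeeping. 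I also give $v$ an \emph{activation} edge $\{b_v,q_v\}$ whose presence marks $v$ as \emph{selected}, with $b_v$ preferring $q_v$ to every other partner.

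The coupling of adjacent vertices is where the reduction lives. For each $\{v,v'\}\in E_G$ I would install, among the $W$-vertices, an edge-gadget $H_{vv'}$ together with links touching $q_v$ and $q_{v'}$, designed so that the two activation jobs $b_v,b_{v'}$ (which are inaccessible while unmatched, since an unmatched job is isolated in a job-market game) become mutually accessible at distance two once \emph{both} activation edges are present. The gadget is to be engineered to have two properties: (i) if at most one of $\{b_v,q_v\},\{b_{v'},q_{v'}\}$ is present, $H_{vv'}$ can be left internally empty with no standing local blocking pair, so it contributes no edges to the matching; and (ii) if both are present, then every internal configuration of $H_{vv'}$ leaves a local blocking pair, so no locally stable matching can keep both neighbours activated.

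Granting such a gadget, both directions are routine. Forward: from an independent set $I$ I take all $|V_G|$ base edges plus the activation edges for $v\in I$, leaving every $H_{vv'}$ empty. Since $I$ is independent, no activation job is accessible to a foreign activation worker, base components are inert, and each $b_v$ sits on its top choice, so no local blocking pair exists; this gives a locally stable matching of size $|V_G|+|I|$, hence $\ge|V_G|+\alpha(G)$ at a maximum $I$. Converse: in any locally stable matching $M$ the base edges are all present by inertness and the selected set $S$ must be independent (an adjacent selected pair would trigger property (ii)); since activation vertices are matchable only through their own activation edges, $|M|=|V_G|+|S|\le|V_G|+\alpha(G)$. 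Combining the two bounds yields equality, and the theorem follows.

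The main obstacle is constructing the edge-gadget $H_{vv'}$ with properties (i) and (ii). This is genuinely delicate precisely because in the bipartite full-information setting a stable matching always exists, so a naive $2\times2$ conflict between $b_v,b_{v'},q_v,q_{v'}$ is defused by simply swapping the two jobs onto each other's workers, leaving both activated and stable. The construction must therefore exploit the restricted accessibility that defines \emph{local} stability --- exactly the phenomenon isolated by the circling gadget of Section~\ref{sec:gadgets}, where a bipartite instance admits no escape from a standing blocking structure. I would route the coupling so that any attempt to keep both $b_v,b_{v'}$ matched forces an accessible, mutually improving pair inside $H_{vv'}$ that cannot be removed without dropping an activation edge, while unmatched jobs stay harmless by isolation. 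Verifying that this coupling robustly forbids adjacent activations, never destabilizes an independent selection, contributes no edges to a maximum locally stable matching, and creates no spurious accessibility that could disturb the inert base components, is the crux of the proof; the surrounding counting and the reduction itself are then immediate.
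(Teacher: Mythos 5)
Your outer frame (the additive shift of $|V_G|$ via inert components, selected set $=$ independent set, the counting in both directions) is fine, but the proof stands or falls with the edge-gadget $H_{vv'}$, which you do not construct --- and the two properties you demand of it are in fact jointly unachievable in a job-market game. In such a game every job is isolated in $N$, so creating the activation edge $\{b_{v'},q_{v'}\}$ enlarges the set of accessible pairs only by pairs of the form $\{b_{v'},x\}$ with $x$ a link-neighbor of $q_{v'}$: any distance-$2$ path using this new matching edge must have $b_{v'}$ or $q_{v'}$ as an interior vertex, and $b_{v'}$ contributes no further links. Since you stipulate that $b_{v'}$ prefers $q_{v'}$ to every other partner, none of these newly accessible pairs can ever be blocking. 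Hence if the configuration with $H_{vv'}$ internally empty and only $\{b_v,q_v\}$ present admits no local blocking pair (your property (i)), then the configuration with both activation edges present admits none either --- contradicting property (ii), which requires \emph{every} internal configuration, including the empty one, to leave a standing blocking pair once both neighbours are activated. The mere coexistence of two activation edges cannot spawn a conflict inside an otherwise empty gadget; this is not a matter of cleverer engineering but a structural feature of accessibility in job-market games.

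The paper escapes this by inverting the logic: rather than making double selection \emph{create} instability, it makes single selection \emph{depend on} all neighbours being unselected. Each $v$ gets jobs $u_{v,1},u_{v,2}$ and workers $w_{v,1},w_{v,2}$; ``unselected'' is the single edge $\{u_{v,1},w_{v,2}\}$ matching both endpoints to their favourites, while ``selected'' is the pair $\{u_{v,1},w_{v,1}\},\{u_{v,2},w_{v,2}\}$, which demotes $u_{v,1}$ to its worst partner. The links are placed so that from $w_{v,1}$ the job $u_{v,1}$ sees exactly the workers $w_{v',2}$ for neighbours $v'$ of $v$ (and, crucially, not its own $w_{v,2}$); the selected configuration is therefore locally stable precisely when every such $w_{v',2}$ is already held by its own favourite $u_{v',1}$, i.e., when every neighbour is unselected. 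This gives $n+k$ edges for an independent set of size $k$ with no auxiliary per-edge gadget at all, and the converse direction follows by reading off $S=\{v \mid u_{v,2} \text{ matched}\}$. If you want to salvage your architecture, the adjacency constraint must be mediated through \emph{occupancy} of workers that some already-matched, discontented job can see --- a neighbour's activation freeing up a preferred partner --- not through accessibility changes triggered by the activation edges themselves.
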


\begin{proof}
Consider a graph $G=(V,E)$, $|V|=n$, as an instance of the maximum independent set problem. We construct the following job-market network $N=(V'=U\cup W,L)$. For every $v\in V$ we have agents $u_{v,1}$ and $u_{v,2}$ in $U$ and agents $w_{v,1}$ and $w_{v,2}$ in $W$. Further there are links $\{w_{v,1},w_{v',2}\}$ and $\{w_{v',2},w_{v,2}\}$, if $v'\in N(v)$. We allow matching edges $\{u_{v,1},w_{v,1}\}$, $\{u_{v,1},w_{v',2}\}$ for $v'\in N(v)$, $\{u_{v,1},w_{v,2}\}$ and $\{u_{v,2},w_{v,2}\}$. Each $u_{v,1}$ prefers $w_{v,2}$ to every $w_{v',2}$, $v'\in N(v)$, and every $w_{v',2}$ to $w_{v,1}$. The preferences between the the different neighbors can be chosen arbitrarily. Each $w_{v,2}$ prefers $u_{v,1}$ to every $u_{v',1}$, $v'\in N(v)$, and every $u_{v',2}$ to $u_{v,2}$. Again the neighbors can be ordered arbitrarily. The agents $w_{v,1}$ and $u_{v,2}$ have only one possible matching partner anyway.

We claim that $G$ has a maximum independent set of size $k$ iff $N$ has a locally stable matching of size $n+k$. First, let $S$ be a maximum independent set in $G$. Then $M=\{\{u_{v,1},w_{v,2}\}\mid v\in V\setminus S\}\cup\{\{u_{v,1},w_{v,1}\},$ $\{u_{v,2},w_{v,2}\}$ $\mid v\in S\}$ is a locally stable matching in $N$. From edges $\{u_{v,1},w_{v,2}\}$ no agent wants to deviate. For the other agents, the independent set property tells us that for $v\in S$ all agents $v'\in N(S)$ yield edges $\{u_{v',1},w_{v',2}\}$ that keep $u_{v,1}$ from switching to $w_{v',2}$. Thus, from $\{u_{v,1},w_{v,1}\}$ no agents wants to deviate. $w_{v,2}$ is not accessible with $u_{v,1}$, which implies that the agents do not want to deviate from $\{u_{v,2},w_{v,2}\}$.

Now, let $M$ be a maximum locally stable matching for $N$. Further, we chose $M$ such that every $u_{v,1}$ is matched, which is possible as replacing a matching partner of $w_{v,2}$ by (the unmatched) $u_{v,1}$ will not generate local blocking pairs or lower the size of $M$. We note that no $u_{v,1}$ is matched to some $w_{v',2}$ with $v\neq v'$, since this would imply $\{u_{v,1}, w_{v,2}\}$ becomes accessible and a local blocking pair. Then for $S=\{v\mid u_{v,2}\in M\}$, $|S|=|M|-n$. Every $u_{v,2}$ can only be matched to $w_{v,2}$, which means that $u_{v,1}$ must be matched to $w_{v,1}$. These agents must not be involved in local blocking pairs, which implies for every $v\in S$, $N(v)\cap S=\emptyset$. Thus, $S$ is an independent set.
\end{proof}

\begin{fol}
Under the unique games conjuncture maximum locally stable matching cannot be approximated within $1.5-\epsilon$, where $\epsilon$ approaches 0 when $n$ grows.
\end{fol}

\begin{proof}
We combine the relation to maximum independent set as shown in Theorem~\ref{satz9} with a result of~\cite{AustrinKS11} that independent set is unique-games-hard to approximate within a factor of $\Omega\left(\frac{d}{\log^2(d)}\right)$ for independent sets of size $k= \left(\frac{1}{2}-\Theta\left(\frac{\log(\log(d))}{\log(d)}\right)\right)n$, where $d$ is the maximum degree. Hence, it is hard to distinguish instances with maximum independent set of size $k$ and $k\cdot O\left(\frac{\log^2(d)}{d}\right)$. With $d=\Theta(n)$, maximum locally stable matching is unique-games-hard to approximate within
\[
\frac{n+k}{n+k\cdot O\left(\frac{\log^2(n)}{n}\right)} \; = \; \frac{n + \left(\frac{1}{2}-\Theta\left(\frac{\log(\log(n))}{\log(n)}\right)\right)n}{n + \left(\frac{1}{2}-\Theta\left(\frac{\log(\log(n))}{\log(n)}\right)\right)n\cdot O\left(\frac{\log^2(n)}{n}\right)} \; \leq \; 1.5-\epsilon\enspace,\]
for any constant $\epsilon > 0$.
\end{proof}

\begin{bem} \rm
Note that in fact we only used the setting of the bipartite job-market game, where one side has no network at all. This shows that even under quite strong restrictions the hardness of approximation holds.
\end{bem}

\begin{prop}
If a stable matching exists, every stable matching is a 2-approximation of a maximum locally stable matching.
\end{prop}
\begin{proof}
Note that every stable matching is locally stable as well. Now let $M$ be a stable matching and $e=\{u,v\}$ an edge of a maximum locally stable matching $M^*$. We show that at least one agent of $e$ is matched in $M$. Then obviously $|M^*| < 2|M|$. Assume that both agents are unmatched in $M$. As $e$ exists in $M^*$, $u$ and $v$ prefer each other to being alone. Thus $\{u,v\}$ is a blocking-pair and $M$ cannot be stable.
\end{proof}

\subsection{Roommates Problem}
\label{sec:roommate}

If $E$ is bipartite, there always exists a stable (and thus, a locally stable) matching. In the more general roommates case, there are instances with general preferences such that no stable matching exists. The same obviously holds for locally stable matchings, since locality has no effect if $L$ contains a link for every pair of agents. While the existence of a stable matching can be decided in polynomial time~\cite{Irving85}, we show that the same question is \classNP-hard for locally stable matchings. Our initial proof was significantly simplified by an anonymous referee, who kindly provided the following proof.

\begin{satz}
\label{thm:roommates}
It is \classNP-complete to decide if a locally stable matching exists in a network game.
\end{satz}
\begin{proof}
Let \SRT\ denote the problem of deciding whether an instance of the stable roommates problem with ties admits a stable matching. \SRT\ is \classNP-complete even if each agent finds all other agents acceptable, and each tie is of length 2 and occurs at the head of some agent's preference list \cite{IrvingM02}.

We will use a reduction from this restricted version of \SRT\ to show that our problem is \classNP-complete. Let $I$ be an instance of the stated restriction of \SRT, and $V$ is the set of agents in $I$. Let $V' \subseteq V$ denote the set of agents in $I$ who have a tie (of length 2) at the head of their preference list in $I$. We form an instance $J$ of \LocalSRT\ by letting the potential matching edges $E$ be the edge set of the complete graph on $V$. Initially, let $L = E$, i.e., the links comprise all potential matching edges.

Each agent $v \in V$ initially has the same preference list in $J$ as in $I$. If $v \in V'$, then assume that the tie at the head of $v$'s preference list involves agents $v'$ and $v''$. By inspection of the proofs of Theorem~6.1 and Corollary~6.3 in \cite{IrvingM02}, for one of $v'$ and $v''$ (w.l.o.g.\ let that be $v''$) the unique first partner on the preference list is $v$. Moreover, $v' \not\in V'$ and $v'' \not\in V'$. Break the tie in $J$ so that $v$ prefers $v'$ to $v''$, and then $v''$ is preferred to all other agents on $v$'s preference list, as before. Remove all links incident to $v$ from $L$ apart from the link $\{v, v''\}$. Moreover, remove the link $\{v', v''\}$ from $L$. Hence, now if $v$ is unmatched or matched to $v''$, it is not accessible with $v'$.

We claim that $I$ has a stable matching if and only if $J$ has a locally stable matching. First, let $M$ be a stable matching in $I$. Suppose that $\{v_1, v_2\}$ is a local blocking pair of $M$ in $J$. If $v_1 \not\in V'$ and $v_2 \not\in V'$ then it is clear that $\{v_1, v_2\}$ is a blocking pair of $M$ in $I$, a contradiction. Now suppose w.l.o.g.\ let $v_1 \in V'$. Let $v_1'$ and $v_1''$ be the two members of $v_1$'s tie in $I$, and w.l.o.g.\ suppose $v_1$ prefers $v_1'$ to $v_1''$ in $J$. By inspection of the proofs of Theorem~6.1 and Corollary~6.3 in~\cite{IrvingM02}, it may be verified that $M(v_1) \in \{v_1', v_1''\}$. It follows that $v_2 = v_1'$ and $M(v_1) = v_1''$. But by construction of $L$, $v_1$ and $v_1'$ are not accessible, a contradiction. Hence $M$ is locally stable in $J$.

Conversely, suppose that $M$ is a locally stable matching in $J$. Suppose that $\{v_1, v_2\}$ is a blocking pair of $M$ in $I$. If $\{v_1, v_2\} \in L$ then $\{v_1, v_2\}$ is a local blocking pair of $M$ in $J$, a contradiction. Next suppose that $\{v_1, v_2\} \not\in L$. 
If $v_1 \not\in V'$ and $v_2 \not\in V'$, then $v_1$ and $v_2$ are the two members of the tie of length 2 in some agent $v$'s preference list. W.l.o.g.\ suppose that $v$ prefers $v_1$ to $v_2$ in $J$. Note $\{v, v_2\} \in L$ and $v_2$ ranks $v$ in first place, so $M(v) \in \{v_1, v_2\}$, for otherwise $\{v, v_2\}$ is a local blocking pair of $M$, a contradiction. By inspection of the proofs of Theorem~6.1 and Corollary~6.3 in~\cite{IrvingM02}, if $\{v, v_1\} \in M$, then $v_1$ prefers $v$ to $v_2$ in $I$, whilst if $\{v, v_2\} \in M$, $v_2$ prefers $v$ to $v_1$ in $I$. Hence $\{v_1, v_2\}$ cannot be a blocking pair of $M$ in $I$ after all. Finally, suppose $\{v_1, v_2\} \not\in L$ and w.l.o.g.\ $v_1 \in V'$ and $v_2 \not\in V'$. Let $v_1'$ and $v_1''$ be the two members of $v_1$'s tie in $I$. If $M(v_1) \in \{v_1',v_1''\}$, then $v_1$ cannot be involved in any blocking pair in $I$. Otherwise, $v_1$ prefers each of $v_1'$ and $v_1''$ to $M(v_1)$ in $J$. By construction of $J$, $\{v_1, v_1''\} \in L$ and $v_1''$ ranks $v_1$ in first place. Hence $\{v_1, v_1''\}$ is a local blocking pair of $M$ in $J$, a contradiction. Thus $M$ is stable in $I$.
\end{proof}

\section{Conclusion}

In this paper, we study \Reach\ problems for locally stable matchings. Our results show that locally stable matchings might not necessarily be reachable by a sequence of local improvement steps. We prove that deciding this question is \classNP-hard in many rather restricted domains of two-sided instances. Moreover, even in cases where a locally stable matching can be reached from an initial matching, the length of the shortest such sequence can be exponentially long. An interesting case, where locally stable matchings can always be reached, are instances with correlated preferences~\cite{Hoefer13}. For these instances we show that, in fact, every reachable matching can be reached within $O(|E^3|)$ many steps. Another interesting and natural property, which can be used to overcome these negative conditions, is memory. With recency memory we can guarantee existence of a sequence of $O(|U|^2|W|^2)$ many steps to a locally stable matching when there is no network among one partition. With random memory convergence is guaranteed in the limit with probability 1. 

In this direction, there are a variety of open problems. It is not known if the polynomial bounds we provide for correlated preferences or recency memory are tight. Furthermore, it would be interesting to see if there are other meaningful classes of instances, for which existence of a path to stability (of polynomial length) can be shown, or at least decided in polynomial time.

In addition to convergence questions, we study existence and optimization problems regarding locally stable matchings. For finding the locally stable matching of maximum cardinality, we provide a lower bound of $1.5-\epsilon$ on the approximation factor under the unique games conjecture, for every constant $\epsilon$. In the roommates case, even existence of locally stable matchings is shown to be \classNP-complete to decide. 

Perhaps the most interesting open problem in this domain is whether there is an efficient 1.5-approximation algorithm for finding a maximum locally stable matching. Alternatively, can the lower bound can be strengthened towards the simple upper bound of 2? For the roommates case, it would be interesting to identify further classes of instances, for which existence of a locally stable matching is guaranteed or at least decidable in polynomial time.

\subsubsection*{Acknowledgement}
We thank the anonymous reviewers for many helpful suggestions. In particular, we thank a reviewer that formulated and provided the significantly simplified and slightly more general proofs presented above for Theorems~\ref{satzReviewer} and~\ref{thm:roommates}.

\bibliographystyle{plain}
\bibliography{../../../../Bibfiles/literature,../../../../Bibfiles/martin}{}

\end{document}